\title{The JLO Character for
The Noncommutative Space of Connections of 
Aastrup-Grimstrup-Nest}
\author{Alan Lai
\thanks{Email: alan@math.toronto.edu}
\\
University of Toronto
}
\begin{document}
\maketitle

\newcommand{\cch}
{\widetilde{\mathtt{Ch}}} 
\newcommand{\Dd}{\mathcal{D}}
\newcommand{\sDd}{\slash\hspace{-0.26cm}\mathcal{D}}
\newcommand{\dDd}{\Dd_I}
\newcommand{\CCc}{\overline{\Cc}^{\Ggamma}}
\newcommand{\TR}{\operatorname{Tr}}
\newcommand{\CCHERN}{\operatorname{ch}}
\newcommand{\KKEI}{\operatorname{K}}
\newcommand{\g}{\mathfrak{g}} 
\newcommand{\naturalnumber}{\mathbb{N}}
\newcommand{\CAS}{\operatorname{Cas}}
\newcommand{\MAP}{\operatorname{Map}}
\newcommand{\GAU}{\operatorname{Gau}}
\newcommand{\MATRIX}{\operatorname{Mat_\mathbb{C}(N)}}
\newcommand{\END}{\operatorname{End}}
\newcommand{\Uu}{\mathcal{U}}
\newcommand{\Ww}{\mathcal{W}}
\newcommand{\Aa}{\mathcal{A}}
\newcommand{\Bb}{\mathcal{B}}
\newcommand{\Hh}{\mathcal{H}}
\newcommand{\Ss}{\mathcal{S}}
\newcommand{\Ff}{\mathfrak{F}}
\newcommand{\Nn}{\mathcal{N}}
\newcommand{\Pp}{\mathcal{P}}
\newcommand{\Tt}{\mathcal{T}}
\newcommand{\Ll}{\mathcal{L}_{\Nn}}
\newcommand{\Kk}{\mathcal{K}_{\Nn}}
\newcommand{\Mm}{\mathfrak{m}}
\newcommand{\SUPP}{\operatorname{supp}}
\newcommand{\HE}{\operatorname{HE}}
\newcommand{\KY}{\operatorname{K}}
\newcommand{\ED}{\mathfrak{e}}
\newcommand{\VE}{\mathfrak{v}}
\newcommand{\SO}{\operatorname{s}}
\newcommand{\RA}{\operatorname{r}}
\newcommand{\ID}{\operatorname{id}}
\newcommand{\HOM}{\operatorname{Hom}}
\newcommand{\HOL}{\operatorname{Hol}}
\newcommand{\Zz}{\mathbb{Z}}
\newcommand{\Cc}{\mathfrak{C}}
\newcommand{\CL}{\mathbb{C} \operatorname{ l}}
\newcommand{\Hhg}{\mathcal{G}}
\newcommand{\Gg}{\mathcal{G}}
\newcommand{\Ggamma}{\mathbf{\Gamma}}
\newcommand{\KKi}{\mathbf{\chi}}
\newcommand{\Ch}{\mathtt{Ch}_{^{_{\operatorname{JLO}}}}}
\newcommand{\ch}{\mathtt{ch}}
\newcommand{\hCh}{\mathtt{Ch}_{^{_{\operatorname{JLO}}}}}
\newcommand{\Ilim}{\varinjlim}
\newcommand{\Plim}{\varprojlim}
\newcommand{\LG}{\mathcal{LG}}
\newcommand{\ZO}{\diamond}
\newcommand{\KER}{\operatorname{ker}}
\newcommand{\DIFF}{\operatorname{diff}}
\newcommand{\ENDO}{\operatorname{End}}
\newcommand{\DOM}{\operatorname{Dom}}
\newcommand{\KCYCLE}{(\Bb,\Nn,\Dd)}
\newcommand{\FREDMOD}{(\rho,\Nn,F)}
\newcommand{\IND}{\mathtt{Ind}_\tau}
\newtheorem{definition}{Definition}[section]
\newtheorem{example}{Example}[section]
\newtheorem{theorem}{Theorem}[section]
\newtheorem{lemma}[theorem]{Lemma}
\newtheorem{corollary}[theorem]{Corollary}
\newtheorem{proposition}[theorem]{Proposition}
\newtheorem{remark}[definition]{Remark}

\setcounter{section}{-1}

\begin{abstract}
In attempts to combine non-commutative geometry
 and quantum gravity,
Aastrup-Grimstrup-Nest construct 
 a  semi-finite spectral triple,
 modeling
 the  space of $G$-connections for $G=U(1)$ or $SU(2)$.
AGN show that the interaction
 between the algebra of holonomy loops
 and the Dirac-type operator $\Dd$
reproduces the Poisson structure of General Relativity 
in Ashtekar's loop variables.
This article generalizes AGN's construction
to any connected compact Lie group $G$.
A construction of AGN's semi-finite spectral
triple in terms of an inductive limit of spectral triples 
is formulated. The refined construction permits
 the semi-finite spectral triple  to be
even when $G$ is even dimensional.
The Dirac-type operator $\Dd$ in
AGN's semi-finite spectral triple is a weighted sum of 
a basic Dirac operator on $G$. The weight
assignment is a diverging sequence that governs
the ``volume'' associated to each copy of $G$.
The JLO cocycle of AGN's triple is
examined in terms of the weight assignment. An explicit
condition on the weight assignment perturbations 
is given, so that the associated JLO class remains invariant.
Such a condition leads to
 a functoriality property of AGN's construction. 

\end{abstract}

\setcounter{tocdepth}{2}
\tableofcontents
\section{Introduction}
In non-commutative Geometry, a  space is 
represented by a $*$-algebra and the geometry  
on the space is given by
an unbounded self-adjoint operator $\Dd$ on a Hilbert space,
subject to certain axioms.
We call such a package a  spectral triple.
A typical example of a  spectral triple
is the Dirac triple
$\left(C^\infty(X),B(L^2(X,S)),\sDd \right)$ where $S$ is the
 spinor
 bundle of the spin manifold
$X$ and $\sDd$ is the Dirac operator of $X$
acting on $L^2(X,S)$. 
Connes' results \cite{reconstruction} state that 
geometric features of the manifold, such as metric, dimension,
differential forms, and integrations etc can be retrieved 
algebraically from the spectral triple. Therefore, 
a spectral triple gives a non-commutative notion of
manifolds when the
 given $*$-algebra is more general
than the space of functions on a manifold.
Recently, the notion of spectral triples is further generalized
 to model foliated manifolds or infinite dimensional 
manifolds that carry degeneracies \cite{type2index},
these generalized spectral triples are  called
semi-finite spectral triples. One
notable example is the non-commutative space of connections
by Aastrum-Grimstrup-Nest \cite{agn1}, which is the
main focus in this article.

In attempts to combine non-commutative geometry
 and quantum gravity,
Aastrup-Grimstrup-Nest construct 
 a  semi-finite spectral triple $(\Bb,\Nn,\Dd)$,
 modeling
 the  space of $G$-connections for the symmetry group 
 $G=U(1)$ or $SU(2)$. 
AGN show that the interaction
 between the algebra of holonomy loops $\Bb$
 and the Dirac-type operator $\Dd$
reproduces the Poisson structure of General Relativity 
in Ashtekar's loop variables \cite{agn2b,agn2,agn0,agn1},
they argue
that $(\Bb,\Nn,\Dd)$ incorporates quantum gravity in this model.

Unfortunately, building a (semi-finite) spectral triple
over the ordinary space of smooth connections 
like the Dirac triple
is impossible,
as there does not exist a Hilbert space structure
and Dirac operator on the infinite dimensional
affine space of connections $\Aa$.
One works instead with a
sequence of approximations of $\Aa$ by 
finite-dimensional manifolds.
In the Aastrup-Grimstrup-Nest approach,
 they compactify the space of connections over the manifold $M$
 by making  use of a finite graph together with its
refinements in   $M$ to
construct a separable kinematical Hilbert space and put a 
Dirac operator on it.
 Loosely speaking, the algebra $\Bb$
 is the pre-$C^*$ algebra of holonomies restricted to
the system of graphs in $M$, which mimics the holonomy algebra 
of Wilson loops;
  the densely defined operator $\Dd$
 is an infinite sum of a basic Dirac operator on
the symmetry group $G$ with appropriate weight 
assigned to each copy,
and $\Nn$ is a Type $\operatorname{II}_\infty$ von Neumann algebra
containing the CAR algebra.

For  technical reasons, Aastrup-Grimstrup-Nest
 limit their construction
to the symmetry group $G=U(1)$ or $SU(2)$.
In this article, we 
generalize AGN's construction to
any connected, compact Lie group $G$ by eliminating the
technical restriction.
Our construction also permits the semi-finite spectral triple to 
carry a $\mathbb{Z}_2$ grading when $G$ is even dimensional,
e.g. $G=SU(3)$,
the symmetry group that governs the strong force
in quantum field theory.
In a recent paper \cite{l}, the JLO character for semi-finite
spectral triples has been established. We will examine
the entire
cyclic cohomology class associated to the semi-finite 
spectral triple of AGN via the JLO character.
In particular, we give an explicit condition on allowable
perturbations of the given weight assignment so that
the associated JLO class remains invariant.
In a more recent paper \cite{agn3},
 Aastrup-Grimstup-Paschke-Nest 
specializes their construction to 
 lattice graphs,
which results in the most reasonable choice of 
weight assignment
that depends only on the dimension of the base manifold $M$.
If one re-runs AGN's construction on a sub-manifold
with dimension less than that of $M$,  
the resulting spectral triple will be
defined using the weight assignment 
corresponding to the sub-manifold,
which is 
 different from the
spectral triple obtained from pulling back the
construction on the  full manifold.
We prove a functoriality property of AGN's construction
by showing that the JLO cocycles 
of the pull back triple and the triple constructed
on a sub-manifold define the same entire cyclic cohomology class.

This paper is arranged as follows.
In Section~\ref{inductivelimitsofspectraltriples}, 
we will develop a limit for an inductive system of 
spectral triples. Section~\ref{AGN:construction}
gives an alternative construction of the
semi-finite spectral triple of  Aastrup-Grimstrup-Nest
using the formalism developed in
 Section~\ref{inductivelimitsofspectraltriples}.
In Section~\ref{ch:jlo}, we review the JLO theory
for semi-finite spectral triples developed in \cite{l}.
In Section~\ref{ch:jloandagn}, we examine the 
JLO class associated to AGN's semi-finite spectral triple
and the weak $\theta$-summability of the operator $\Dd$.


\section{Inductive Limit of Spectral Triples}
\label{inductivelimitsofspectraltriples}
\begin{definition}
\label{AGN:d:semifinitespectraltriple}
An  \textbf{\textit{odd} semi-finite spectral triple} $(\Bb,\Nn,\Dd)$
is  a (separable) semi-finite von Neumann algebra $\Nn\subset B(\Hh)$,
 $*$-sub-algebra $\Bb$ of $\Nn$, and a
densely defined unbounded self-adjoint operator $\Dd$
affiliated with $\Nn$ such that,
\begin{enumerate}
\item 
$[\Dd,b]$ extends to a
bounded operator for all $b\in \Bb$ ;
\item 
$(1+\Dd^2)^{-\frac{1}{2}}\in \Kk$ ,
where $\Kk$ is the ideal of $\tau$-compact operators in $\Nn$.
\end{enumerate}
If $(\Bb,\Nn,\Dd)$ is equipped with a 
$\mathbb{Z}_2$ grading $\chi\in\Nn$ such that
all $a$ is even for all $a\in\Bb$ and $\Dd$ is odd,
then we call $\KCYCLE$
 an \textbf{\textit{even} semi-finite spectral triple}.
The suffix \textit{semi-finite} is omitted 
when $\Nn=B(\Hh)$.

\end{definition}

\begin{proposition}[\cite{cprs2}]
Let $\Dd$ be an operator affiliated with $\Nn$,
and suppose that
 $T\in\Nn$ and that $[\Dd,T]$ is bounded. Then $[\Dd,T]\in\Nn$.
\end{proposition}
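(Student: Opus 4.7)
The plan is to invoke the double commutant theorem: since $\Nn = \Nn''$ is a von Neumann algebra, it suffices to show that the bounded extension of $[\Dd,T]$, call it $R$, commutes with every element of the commutant $\Nn'$. Once this is done, $R \in \Nn'' = \Nn$, and we are finished.

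First I would fix an arbitrary $S\in\Nn'$ and record what affiliation of $\Dd$ with $\Nn$ gives us. By definition, every spectral projection of $\Dd$ lies in $\Nn$, so $S$ commutes with all spectral projections of $\Dd$. A standard consequence (via functional calculus/spectral theorem) is that $S$ preserves $\DOM(\Dd)$ and satisfies $\Dd S\xi = S\Dd\xi$ for all $\xi \in \DOM(\Dd)$. Separately, since $T \in \Nn$, we have $ST = TS$ as bounded operators.

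Next I would identify the natural domain on which the symbolic commutator $\Dd T - T\Dd$ is defined, namely
\[
\Mm = \{\xi \in \DOM(\Dd) : T\xi \in \DOM(\Dd)\}.
\]
The hypothesis that $[\Dd,T]$ extends to a bounded operator means precisely that $\Mm$ is dense and that $\Dd T - T\Dd$ on $\Mm$ extends by continuity to a bounded $R \in B(\Hh)$. I would then check that $S$ preserves $\Mm$: if $\xi \in \Mm$, then $S\xi \in \DOM(\Dd)$ by the previous paragraph, and $TS\xi = ST\xi \in \DOM(\Dd)$ since $S$ preserves $\DOM(\Dd)$ and $T\xi \in \DOM(\Dd)$. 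On $\Mm$ one computes directly
\[
S[\Dd,T]\xi = S\Dd T\xi - ST\Dd\xi = \Dd S T\xi - T S\Dd\xi = \Dd T S\xi - T\Dd S\xi = [\Dd,T]S\xi,
\]
using $[S,\Dd]=0$ on $\DOM(\Dd)$ twice and $[S,T]=0$ once. Passing to the bounded extensions gives $SR = RS$ on the dense subspace $\Mm$, hence everywhere by continuity.

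I do not expect any serious obstacle; the only subtle point is the justification that $S \in \Nn'$ strongly commutes with the unbounded $\Dd$ (in the sense of preserving $\DOM(\Dd)$ and intertwining $\Dd$ there). This is a standard fact about operators affiliated with a von Neumann algebra, following from the spectral theorem applied to $\Dd$ and the fact that $S$ commutes with all bounded Borel functions of $\Dd$. Once that is in hand the argument is a short, direct commutator manipulation followed by the double commutant theorem.
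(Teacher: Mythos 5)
The paper states this proposition without proof, deferring entirely to the citation \cite{cprs2}; your double-commutant argument is correct and is essentially the standard proof given in that reference. The two points requiring care—that affiliation lets each $S\in\Nn'$ preserve $\DOM(\Dd)$ and intertwine $\Dd$ there, and that $S$ then preserves the natural domain $\Mm$ of the symbolic commutator—are exactly the ones you address, so there is no gap.
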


\begin{definition}
\label{triplemorphism}
A \textbf{morphism} from an even semi-finite spectral triple
$(\Bb,\Nn,\Dd)$ with grading $\chi\in \Nn$ to an even
 semi-finite spectral triple
 $(\Bb',\Nn',\Dd')$ with grading $\chi ' \in \Nn$ is
a triple  $(Q,P,\iota)$, where
\begin{enumerate}
\item 
 $\iota:\Hh\to \Hh'$ is a linear map
between the underlying Hilbert spaces of $\Nn$
and $\Nn'$
such that it preserves the inner products, and
$\iota(\DOM(\Dd))\subset \DOM(\Dd')$ so that
the following diagram commutes:
\begin{equation}
\label{intertwineiota}
\xymatrix{
\DOM(\Dd) \ar[rr]^{\Dd} 
\ar[dd]^{\iota}&& \Hh \ar[dd]^{\iota} \\
& \ar@(ul,dl)[]&\\
\DOM(\Dd') \ar[rr]^{\Dd'} && \Hh'
} \mbox{ ;}
\end{equation}
\item
 $P:\Nn\to\Nn'$ is a $*$-homomorphism so that
the following diagram commutes for all $a\in \Nn$:
\begin{equation}
\xymatrix{
\Hh \ar[rrr]^{a} 
\ar[dd]^{\iota}&&& \Hh \ar[dd]^{\iota} \\
&& \ar@(ul,dl)[]&\\
\Hh'\ar[rrr]^{P(a)} &&& \Hh'
} \mbox{ ;}
\end{equation}
\item
 $Q:\Bb\to\Bb'$ is a $*$-homomorphism so that
the following diagram commutes for all $b\in \Bb$:
\begin{equation}
\xymatrix{
\Hh \ar[rrr]^{b} 
\ar[dd]^{\iota}&&& \Hh \ar[dd]^{\iota} \\
&& \ar@(ul,dl)[]&\\
\Hh'\ar[rrr]^{Q(b)} &&& \Hh'
} \mbox{ ;}
\end{equation}
\item 
The following diagram commutes:
\begin{equation}
\xymatrix{
\Hh \ar[rrr]^{\chi} 
\ar[dd]^{\iota}&&& \Hh \ar[dd]^{\iota} \\
&& \ar@(ul,dl)[]&\\
\Hh'\ar[rrr]^{\chi'} &&& \Hh'
} \mbox{ .}
\end{equation}
\end{enumerate}
\end{definition}

A morphism between \emph{odd} semi-finite spectral triples is 
which  Definition~\ref{triplemorphism} with the last condition
dropped.

\begin{definition}
\label{directtriplesystem}
An \textbf{inductive system of semi-finite spectral triples}
 is an
$I$-family
of semi-finite spectral triples 
$\{(\Bb_j,\Nn_j,\Dd_j)\}_{j\in I}$
for a directed set $I$ and
together with a collection of morphisms
$\{(Q_{ij},P_{ij},\iota_{ij})\}_{i < j}$ so that the diagram 
\[
\xymatrix{
(\Bb_i,\Nn_i,\Dd_i)
\ar[rr]^{(Q_{ij},P_{ij},\iota_{ij})} 
\ar[rdd]_{(Q_{ik},P_{ik},\iota_{ik})}    &&
(\Bb_j,\Nn_j,\Dd_j) \ar[ldd]^{(Q_{jk},P_{jk},\iota_{jk})} \\
& \ar@(ul,ur)[]&&\\
& (\Bb_k,\Nn_k,\Dd_k)
}
\]
commutes for $i < j < k \in I$.
\end{definition}

Denote the limit of the Hilbert space systems
 $\{(\Hh_i,\iota_{ij})\}$ by $\Ilim \Hh_i$,
which is the Hilbert space closure of $\cup_{i\in I} \Hh_i$;
the limit of the $*$-algebra systems 
$\{(\Bb_i,Q_{ij})\}$ by $\Ilim \Bb_i$;
the limit of the von Neumann algebra 
systems $\{(\Nn_i,P_{ij})\}$
with underlying Hilbert space $\Ilim \Hh_i$
by $\Ilim \Nn_i$, which is the
weak operator closure of $\cup_{i\in I} \Nn_i$.
Since each $\Hh_j$ is a subspace of $\Ilim \Hh_i$,
each $\Dd_j$ on $\Hh_j$ extends to
an operator 
on $\Ilim \Hh_i$ 
by zero action on 
$\Hh_j ^\perp$.
Define the limit of the net of operators 
$\{ \Dd _j \}_{j\in I}$ 
acting on
$\Ilim \Hh_j$ to be
\begin{equation}
\label{limitofD}
(\Ilim \Dd_i) \eta:= \Ilim ( \Dd  _j  \eta )
\end{equation}
for $\eta$ in the \emph{appropriate domain} (to be clarified below) in $\Ilim \Hh_j$, where the right hand side is the 
limit of the net of vectors $\{\Dd_j \eta\}_{j\in I}\subset 
\Ilim \Hh_j$.
Denote by $\Ilim \chi_j$ 
the strong operator limit of the net of grading operators
$\{P_{j \infty}\chi_j\}_{j\in I}$ in $\Ilim \Nn_i$.
The following theorem justifies that $\Ilim \Dd_j$ and
$\Ilim \chi_j$ are well-defined operators on $\Ilim \Hh_j$.

\begin{theorem}
\label{propertiesoflimittriple}
Let $\{(\Bb_j,\Nn_j,\Dd_j) \}_{j\in I}$
be an inductive system of spectral triples.
Then
\begin{enumerate}
\item 
$\Ilim \Dd_j$
is an essentially self-adjoint operator on 
 $\Ilim
\Hh_j$. Denote its unique closure again by $\Ilim \Dd_j$.
\item 
$\Ilim\Dd_j$ is the strong resolvent limit of $\Dd_j$ and it
is affiliated with 
 $\Ilim
\Nn_j$.
\item
The commutator
\[
[\Ilim \Dd_j ,b]
\]
is bounded for all $b\in  \Ilim
\Bb_j$.
\item 
If each $(\Bb_j,\Nn_j,\Dd_j) $ is even equipped with a
grading operator $\chi_j$, then
\begin{itemize}
\item[a.]
$\Ilim \Dd_j$ anti-commutes with $\Ilim \chi_j$ and, \item[b.]
$b$ commutes with $\Ilim \chi_j$ for all $b\in \Ilim \Bb_j$.
\end{itemize}
\end{enumerate}
\end{theorem}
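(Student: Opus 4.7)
The plan is to handle the four assertions in turn, leveraging the morphism intertwinings $\iota_{jk}\Dd_j = \Dd_k\iota_{jk}$, $Q_{jk}(b)\iota_{jk} = \iota_{jk}b$, $P_{jk}(a)\iota_{jk} = \iota_{jk}a$, and $\iota_{jk}\chi_j = \chi_k\iota_{jk}$ from Definition~\ref{triplemorphism}. The natural dense core on which $\Ilim\Dd_j$ acts is $\bigcup_{j\in I}\DOM(\Dd_j)$ sitting inside $\Ilim\Hh_i$, and the first intertwining makes the rule $(\Ilim\Dd_j)\eta := \Dd_j\eta$ for $\eta\in\DOM(\Dd_j)$ independent of the representative $j$.

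For items 1 and 2, I would establish essential self-adjointness by computing the deficiency indices directly. Suppose $\eta\in\DOM((\Ilim\Dd_j)^*)$ satisfies $(\Ilim\Dd_j)^*\eta = \pm i\eta$; writing $\Pi_j$ for the orthogonal projection onto $\Hh_j$ and pairing with $\xi\in\DOM(\Dd_j)$ (noting that $\Dd_j\xi\in\Hh_j$), I would obtain $\langle\Dd_j\xi,\Pi_j\eta\rangle = \pm i\langle\xi,\Pi_j\eta\rangle$. Hence $\Pi_j\eta\in\DOM(\Dd_j^*)=\DOM(\Dd_j)$ with $\Dd_j\Pi_j\eta = \mp i\Pi_j\eta$, forcing $\Pi_j\eta=0$ by self-adjointness of $\Dd_j$, and density of $\bigcup_j\Hh_j$ in $\Ilim\Hh_i$ then yields $\eta=0$. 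Strong resolvent convergence then follows from the standard criterion that pointwise convergence on a common core of self-adjoint operators is sufficient, because $\Dd_j\eta$ stabilizes to $(\Ilim\Dd_j)\eta$ for every $\eta\in\bigcup_j\DOM(\Dd_j)$ once $j$ is large enough. Affiliation is inherited: each $(\Dd_j-i)^{-1}$ lies in $\Nn_j$ and embeds into $\Ilim\Nn_j$ up to a scalar multiple of $1-\Pi_j$ (itself in $\Ilim\Nn_j$), and the strong operator limit of such elements stays inside the von Neumann closure $\Ilim\Nn_j$.

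For item 3, I would fix $b = Q_{j\infty}(b_j)$ and analyze $[\Ilim\Dd_j, b]$ on the core: on each $\Hh_k$ with $k\geq j$ it reduces to $[\Dd_k, Q_{jk}(b_j)]$, which is bounded as an element of $\Nn_k$. Self-adjointness of $\Dd_k$ combined with the intertwining $\iota_{jk}\Dd_j = \Dd_k\iota_{jk}$ forces $\Dd_k$ to reduce over $\Hh_k = \iota_{jk}(\Hh_j)\oplus\iota_{jk}(\Hh_j)^\perp$, and since $Q_{jk}$ is a $*$-homomorphism both $Q_{jk}(b_j)$ and its adjoint preserve $\iota_{jk}(\Hh_j)$, so $Q_{jk}(b_j)$ is block-diagonal for the decomposition; consequently the commutator splits as $[\Dd_j, b_j]\oplus[\tilde{\Dd}_{jk}, \tilde{b}_{jk}]$. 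This is the step I expect to be the main obstacle: the intertwining only pins the first summand to $\iota_{jk}[\Dd_j, b_j]$, so achieving a uniform bound on the second summand in $k$ likely requires extra structure on the inductive system such as the tensor-product compatibility $\Hh_k = \Hh_j\otimes\tilde{\Hh}_{jk}$, $\Dd_k = \Dd_j\otimes 1 + 1\otimes\tilde{\Dd}_{jk}$, and $Q_{jk}(b_j) = b_j\otimes 1$ that arises in the AGN construction of Section~\ref{AGN:construction}, in which case the complementary summand vanishes altogether.

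For item 4, the intertwining $\iota_{jk}\chi_j = \chi_k\iota_{jk}$ implies that the restriction of $\chi_k$ to $\Hh_j$ agrees with $\chi_j$ whenever $k\geq j$, so $\Ilim\chi_j$ is unambiguously defined on $\bigcup_j\Hh_j$ and extends to a self-adjoint involution in $\Ilim\Nn_j$. Anti-commutation with $\Ilim\Dd_j$ and commutation with each $b\in\Ilim\Bb_j$ then reduce, on the dense core, to the corresponding identities inside each $(\Bb_k, \Nn_k, \Dd_k, \chi_k)$ and pass to the limit by continuity.
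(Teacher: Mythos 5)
Your overall strategy coincides with the paper's: define $\Ilim \Dd_j$ on the dense core $\bigcup_j \DOM(\Dd_j)$, use the stabilization $\Dd_k\eta = \Dd_n\eta$ for $k\geq n$ coming from the intertwining axioms, and reduce every assertion to the finite-level triples. The one genuine difference in items 1--2 is cosmetic: you prove essential self-adjointness by killing the deficiency subspaces $\ker\bigl((\Ilim\Dd_j)^*\mp i\bigr)$ via the projections $\Pi_j$, whereas the paper checks the dual criterion that $\operatorname{Im}(\Ilim\Dd_j + \sqrt{-1})\supset \bigcup_j\Hh_j$ is dense; these are equivalent, and your version is if anything more explicit. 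For strong resolvent convergence the paper passes through the strong graph limit rather than the common-core criterion you invoke, but both rest on the same Reed--Simon machinery, and your affiliation argument (resolvents of the extended operators lie in $\Nn_j$ plus a multiple of $1-\Pi_j$, then take strong limits in the weakly closed $\Ilim\Nn_j$) matches the paper's. Item 4 is identical in both treatments.

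The place where you diverge is item 3, and your instinct there is sound: the morphism axioms pin down $\Dd_k$ and $Q_{jk}(b_j)$ only on $\iota_{jk}(\Hh_j)$, so the commutator $[\Dd_k,Q_{jk}(b_j)]$ restricted to the complementary summand is a priori uncontrolled, and boundedness of $[\Ilim\Dd_j,b]$ on all of $\bigcup_k\DOM(\Dd_k)$ requires $\sup_k\lVert[\Dd_k,Q_{jk}(b_j)]\rVert<\infty$. You should know that the paper does not supply this uniform bound either: its proof of item 3 consists of the single identity $[\Ilim\Dd_j,b]\eta=[\Dd_n,b]\eta$, which is literally valid only for $\eta\in\DOM(\Dd_n)$ and silently identifies the level-$k$ commutator with the level-$n$ one for all $k\geq n$. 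So the obstacle you flag is real at the stated level of generality, and it is resolved exactly as you suggest in the application the paper cares about, where the AGN triples have the tensor-product form $\Dd_k=\Dd_j\otimes 1+1\otimes\widetilde{\Dd}_{jk}$ and $Q_{jk}(b_j)=b_j\otimes 1$, forcing $[\Dd_k,Q_{jk}(b_j)]=[\Dd_j,b_j]\otimes 1$ with norm independent of $k$. Your write-up is therefore not missing anything the paper actually provides; it makes explicit a hypothesis the paper leaves implicit.
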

\begin{proof}
\mbox{ }
\begin{itemize}
\item[1,4a.]
Let $\eta \in \Ilim \DOM(\Dd_j)$,
then there exists $n\in I$ such that
$\eta \in  \DOM(\Dd_n) $.
Condition 1 of Definition~\ref{triplemorphism}
assures that the sequence $\{\Dd_j \eta\}$ stabilizes for
$j\geq n$.
We compute
\begin{eqnarray*}
(\Ilim \Dd_j) \eta& := &\Ilim ( \Dd_j \eta)\\
&=& \Dd_n \eta  \in \Ilim \Hh_j \mbox{ .}
\end{eqnarray*}
Therefore, $\Ilim (\Dd_j)$ is well-defined on 
$\Ilim \DOM(\Dd_j)$.
As $\DOM(\Dd_j)$ is dense in $\Hh_j$ for each $j$,
$\Ilim \DOM(\Dd_j)$  is dense in $\Ilim \Hh_j$.

Each $\Dd_j$ is self-adjoint on $\Hh_j$, thus
 the image $\operatorname{Im}(\Dd_j + \sqrt{-1}) =\Hh_j$.
Since $(\Ilim \Dd_j + \sqrt{-1})\eta = \Ilim 
((\Dd _j + \sqrt{-1} )\eta)$, the image $\operatorname{Im}
(\Ilim \Dd_j + \sqrt{-1})$ is the vector space limit 
$\Ilim \Hh_j$, which is dense in the Hilbert space 
limit $\Ilim \Hh_j$. As a result, $\Ilim \Dd_j$ is 
essentially self-adjoint.

Similarly, Condition 4 of Definition~\ref{triplemorphism}
justifies that $\Ilim \chi_j$ is well-defined on $\Ilim \Hh_j$.
On the other hand,
\begin{eqnarray*}
\lefteqn{(\Ilim \chi_j ) (\Ilim \Dd_i) \eta =
(\Ilim \chi_j) (\Dd_n \eta) = \chi_n \Dd_n \eta }\\&=&
- \Dd_n \chi_n \eta = (\Ilim \Dd_i)(\chi_n\eta)
=(\Ilim \Dd_i)(\Ilim \chi_j) \eta \mbox{ .}
\end{eqnarray*}
Hence $\Ilim \Dd_j$ is odd with respect to $\Ilim \chi_j$.
\item[2.]
By the ``point-wise'' construction
\eqref{limitofD}, $\Ilim \Dd_j$ is the strong graph limit 
of $\Dd_j$, which implies that $\Ilim \Dd_j$ is the strong resolvent
limit of $\Dd_j$ \cite{reedsimon}.
As $\Dd_j$ is affiliated with $\Nn_j$ for each $j$,
the sign and spectral projections of $\Dd _j$ 
are in $\Ilim \Nn_j$.
The fact that  $\Ilim \Nn_j$ is strong operator closed
implies that $\Ilim \Dd_j$ is affiliated with   $\Ilim \Nn_j$.
\item[3,4b.]
Let $b\in \Ilim \Bb_j$, then there exists $n$ so that
$b \in \Bb_n$.
We compute for  $\eta \in \Ilim \Hh_j$
\begin{eqnarray*}
[\Ilim \Dd_j, b] \eta &=& [\Dd_n,b]\eta 
\mbox{ ,}
\end{eqnarray*}
which is bounded since $(\Bb_n,\Nn_n,\Dd_n)$ is assumed
to be a spectral triple.

On the other hand,
\begin{eqnarray*}
[\Ilim \chi_j,b] \eta &=& [\chi_n,b]\eta =0 \mbox{ .}
\end{eqnarray*}
Hence $b$ is even with respect to $\Ilim \chi_j$.
\end{itemize}
\end{proof}

Notice that  $\Ilim \Dd_j$ being the strong 
resolvent limit of $\Dd_j$
 allows us to obtain functional calculus on $\Ilim \Dd_j$
as strong limits of functional calculi on $\Dd_j$.

\begin{theorem}[\cite{reedsimon}]
\label{thm:strongresolventthm}
Let $T_j$ and $T$ be self-adjoint operators such that
$T_j \rightarrow T$ in the strong resolvent sense. 
Then for any bounded continuous function $f$ on $\mathbb{R}$,
$f(T_j) \to f(T)$ in the strong operator limit.
\end{theorem}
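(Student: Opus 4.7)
The plan is to bootstrap from the given resolvent convergence to all bounded continuous functions by enlarging the class of admissible test functions in stages. Let $\mathcal{S}$ denote the set of bounded Borel functions $f:\mathbb{R}\to\mathbb{C}$ for which $f(T_j)\to f(T)$ in the strong operator topology. By definition of strong resolvent convergence, $(x-z)^{-1}\in\mathcal{S}$ for every $z\in\mathbb{C}\setminus\mathbb{R}$, and the goal is to prove $\mathcal{S}\supset C_b(\mathbb{R})$.

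First I would verify that $\mathcal{S}$ is a $*$-algebra containing the constants and closed under uniform limits of uniformly bounded sequences. The key input is the functional-calculus bound $\|f(T_j)\|\le\|f\|_\infty$, which together with the product-rule identity
\[
f(T_j)g(T_j)\psi-f(T)g(T)\psi=f(T_j)\bigl[g(T_j)-g(T)\bigr]\psi+\bigl[f(T_j)-f(T)\bigr]g(T)\psi
\]
yields closure under multiplication, while a standard $\varepsilon/3$ argument handles uniformly bounded sup-norm limits. Step two is non-unital Stone--Weierstrass on $\mathbb{R}$: the $*$-subalgebra of $C_0(\mathbb{R})$ generated by $\{(x-z)^{-1}:\operatorname{Im}z\ne 0\}$ separates points of $\mathbb{R}$ and vanishes nowhere, hence is uniformly dense in $C_0(\mathbb{R})$. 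Therefore $\mathcal{S}\supset C_0(\mathbb{R})$.

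The main obstacle is the passage from $C_0(\mathbb{R})$ to $C_b(\mathbb{R})$, since $C_0$ is not uniformly dense in $C_b$. In lieu of uniform density one needs a $\psi$-pointwise tightness statement for the scalar spectral measures $\mu^{T_j}_\psi:=\langle E_{T_j}(\cdot)\psi,\psi\rangle$. Fixing $\psi$ and choosing a bump $\phi\in C_c(\mathbb{R})$ with $0\le\phi\le 1$ and $\phi\equiv 1$ on $[-1,1]$, Step two gives $\int\phi(x/R)\,d\mu^{T_j}_\psi\to\int\phi(x/R)\,d\mu^{T}_\psi$ as $j\to\infty$ for each fixed $R$, while dominated convergence yields $\int\phi(x/R)\,d\mu^{T}_\psi\to\|\psi\|^2$ as $R\to\infty$. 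Combined with the elementary bound $\mu^{T_j}_\psi([-R,R]^c)\le\|\psi\|^2-\int\phi(x/R)\,d\mu^{T_j}_\psi$, this produces, for any $\varepsilon>0$, an $R$ and a $J$ such that $\mu^{T_j}_\psi([-R,R]^c)<\varepsilon$ whenever $j\ge J$; the finitely many small-$j$ cases are absorbed by enlarging $R$.

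To finish, given $f\in C_b(\mathbb{R})$ split $f=f\cdot\phi(\cdot/R)+f\cdot\bigl[1-\phi(\cdot/R)\bigr]$. The first summand lies in $C_0(\mathbb{R})$ and converges strongly on $\psi$ by Step two. The second is controlled by
\[
\bigl\|[f(1-\phi(\cdot/R))](T_j)\psi\bigr\|^2\le\|f\|_\infty^2\,\mu^{T_j}_\psi([-R,R]^c)
\]
together with its analogue for $T$, and the tightness from the previous paragraph makes this arbitrarily small by choosing $R$ first and then $j$ large, completing the argument.
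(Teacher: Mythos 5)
The paper offers no proof of this statement: it is quoted verbatim from Reed--Simon (Theorem VIII.20 there), so there is nothing internal to compare against. Your argument is, in structure, exactly the standard textbook proof: pass from resolvents to $C_0(\mathbb{R})$ via Stone--Weierstrass together with closure of the convergence set under sums, products, and uniformly bounded uniform limits, then reach $C_b(\mathbb{R})$ by a cutoff plus tightness of the scalar spectral measures. One harmless imprecision: $\mathcal{S}$ itself need not be closed under complex conjugation (strong operator convergence does not pass to adjoints), but you do not need that --- the function algebra generated by $\{(x-z)^{-1}:\operatorname{Im}z\neq 0\}$ is already $*$-closed because $\overline{(x-z)^{-1}}=(x-\bar z)^{-1}$, and the sum/product closure of $\mathcal{S}$ suffices to place that whole algebra inside $\mathcal{S}$.

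One inequality in your tightness step is, however, reversed. Since $\phi\equiv 1$ on $[-1,1]$ forces $\phi(\cdot/R)$ to be strictly positive on a neighbourhood of $[-R,R]$, one only has $1-\phi(x/R)\le\mathbf{1}_{[-R,R]^c}$, so the correct direction is $\|\psi\|^2-\int\phi(x/R)\,d\mu^{T_j}_\psi\le\mu^{T_j}_\psi([-R,R]^c)$, not the inequality you wrote (take $\mu^{T_j}_\psi=\delta_{3R/2}$ with $\phi(3/2)>0$ for a counterexample). The repair is immediate and in fact makes the detour through $\mu^{T_j}_\psi([-R,R]^c)$ unnecessary: since $0\le 1-\phi\le 1$ gives $(1-\phi)^2\le 1-\phi$, one has
\[
\bigl\|[f(1-\phi(\cdot/R))](T_j)\psi\bigr\|^2\le\|f\|_\infty^2\int\bigl(1-\phi(x/R)\bigr)\,d\mu^{T_j}_\psi=\|f\|_\infty^2\Bigl(\|\psi\|^2-\int\phi(x/R)\,d\mu^{T_j}_\psi\Bigr) \mbox{ ,}
\]
and the right-hand side is exactly the quantity your Step two already controls: it converges as $j\to\infty$ to $\|f\|_\infty^2\bigl(\|\psi\|^2-\int\phi(x/R)\,d\mu^{T}_\psi\bigr)$, which tends to $0$ as $R\to\infty$. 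With that substitution the argument closes correctly and coincides with the Reed--Simon proof the paper cites.
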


\begin{definition}
Let $\{(\Bb_j,\Nn_j,\Dd_j)\}_{j\in I}$  be 
an inductive system of even semi-finite spectral triples
with grading operators $\{\chi_j\}_{j\in I}$,
define its limit to be
\[
\Ilim
(\Bb_j,\Nn_j,\Dd_j):=
\left(\Ilim
 \Bb_j,\Ilim
\Nn_j,\Ilim \Dd_j\right) \mbox{ .}
\]
It is equipped with the $\mathbb{Z}_2$ grading $\Ilim \chi_j$.
\end{definition}

The definition for the odd limit is obvious.

Unfortunately, the limit of a system of semi-finite
spectral triples needs not be a semi-finite spectral triple.
For instance, $I$ could be uncountable. Even if we assume $I$ to be countable
the limit may still not be a semi-finite spectral triple.
We will see both example and non-example from AGN's
construction of noncommutative connection space.
Nonetheless, the inductive limit of spectral triples satisfies
the following universal condition.

\begin{theorem}
\label{unversalityoflimittriple}
Let $\{(\Bb_j,\Nn_j,\Dd_j)\}_{j\in I}$ be 
an inductive system of even semi-finite spectral triples with
grading $\{\chi_j\}_{j\in I}$, and
suppose that $\{(\Bb',\Nn',\Dd')\}$ is an 
even semi-finite spectral triple with grading $\chi'$
such that there exist morphisms of spectral triples
$(Q'_j,P'_j,\iota'_j)$ such that the following diagram
commutes:
\[
\xymatrix{
(\Bb_1,\Nn_1,\Dd_1) \ar[rrr]^{(Q_{12},P_{12},\iota_{12})} 
\ar[ddrrr]_{(Q'_1,P'_1,\iota'_1)}
&&&
(\Bb_2,\Nn_2,\Dd_2) \ar[rrr]^{(Q_{23},P_{23},\iota_{23})}
\ar[dd]_{(Q'_2,P'_2,\iota'_2)} &&& 
(\Bb_3,\Nn_3,\Dd_3) \ar[rr] 
\ar[ddlll]^{(Q'_3,P'_3,\iota'_3)}
&& \cdots \ar@/^1pc/[ddlllll]\\
& &\ar@(ul,ur) & & \ar@(ul,ur) && \\
&&&(\Bb',\Nn',\Dd')&&&
}\mbox{ .}
\]
Then there exists a unique morphism $(Q,P,\iota)$ 
completing the following diagram:
\[
\xymatrix{
(\Bb_i,\Nn_i,\Dd_i) \ar[rrrr]^{(Q_{ij},P_{ij},\iota_{ij})} 
\ar[ddrr]_{(Q_{i\infty},P_{i\infty},\iota_{i\infty})}
\ar\ar@/_2pc/[dddddrr]_{(Q'_{i},P'_{i},\iota'_{i})}
&&&&
(\Bb_j,\Nn_j,\Dd_j)
\ar[ddll]^{(Q_{j\infty},P_{j\infty},\iota_{j\infty})}
\ar@/^2pc/[dddddll]^{(Q'_{j},P'_{j},\iota'_{j})}
\\
&& \ar@(ul,ur) &&\\
&& \Ilim (\Bb_k,\Nn_k,\Dd_k) \ar@{-->}[ddd]^{(Q,P,\iota)}
 && \\
& \ar@(ur,dr)
&&
\ar@(ul,dl)
&\\
&&&&\\
&& (\Bb',\Nn',\Dd')  &&
}
\]
for all $i,j\in I$.
\end{theorem}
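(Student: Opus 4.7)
The plan is to use the standard universal-property recipe: define $(Q,P,\iota)$ first on the algebraic building blocks coming from each finite stage of the system, show well-definedness via the compatibilities packaged in the morphisms $(Q'_j,P'_j,\iota'_j)$, and then extend by continuity and closure to the limit objects. Uniqueness will then be automatic by density.

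The first step is to build $\iota$. Every $\eta\in\cup_j\Hh_j$ sits in some $\Hh_n$, so set $\iota(\eta):=\iota'_n(\eta)$. The compatibility $\iota'_k\circ\iota_{jk}=\iota'_j$, which is forced by commutativity of the outer triangles in the hypothesis, makes this independent of the choice of $n$ and shows $\iota$ is inner-product preserving on $\cup_j\Hh_j$. Since $\Ilim\Hh_i$ is by construction the Hilbert space closure of this union, $\iota$ extends uniquely to an isometric embedding $\Ilim\Hh_j\to\Hh'$. Next, for $b\in\Bb_n$ set $Q(b):=Q'_n(b)$; the compatibility $Q'_k\circ Q_{jk}=Q'_j$ makes this a well-defined $*$-homomorphism on $\Ilim\Bb_j$. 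The same prescription $P(a):=P'_n(a)$ for $a\in\Nn_n$ defines a $*$-homomorphism on the $*$-algebra $\cup_j\Nn_j$.

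Having the maps in hand, I would verify the structural conditions of Definition~\ref{triplemorphism}. Intertwining of $\iota$ with the Dirac operators holds on the dense subspace $\cup_j\DOM(\Dd_j)\subset\DOM(\Ilim\Dd_j)$, since on $\Hh_n$ one has $\iota\circ\Dd_n=\iota'_n\circ\Dd_n=\Dd'\circ\iota'_n=\Dd'\circ\iota$, and by the essential self-adjointness of $\Ilim\Dd_j$ established in Theorem~\ref{propertiesoflimittriple} this extends to the whole domain, placing $\iota(\DOM(\Ilim\Dd_j))\subset\DOM(\Dd')$ with the required intertwining. The grading condition follows identically, using that $\Ilim\chi_j$ is the strong-operator limit of $\{P_{j\infty}\chi_j\}$ and each $\iota'_j$ intertwines $\chi_j$ with $\chi'$. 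The intertwining conditions for $P$ and $Q$ on vectors in $\cup_j\Hh_j$ are immediate from the corresponding conditions for $P'_j, Q'_j$.

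Uniqueness is forced: any candidate morphism must agree with $\iota'_n$ on $\Hh_n$, with $Q'_n$ on $\Bb_n$, and with $P'_n$ on $\Nn_n$, which fixes $(Q,P,\iota)$ on dense subsets, hence everywhere by continuity. The main obstacle I anticipate is the extension of $P$ from the algebraic union $\cup_j\Nn_j$ to its weak-operator closure $\Ilim\Nn_j$ as an honest $*$-homomorphism into $\Nn'$. The natural route is to observe that each $P'_n$ is spatially implemented through $\iota'_n$, hence normal, and to use the isometric embedding $\iota$ to propagate this spatial implementation to the limit: if $e\in B(\Hh')$ is the projection onto the closed subspace $\iota(\Ilim\Hh_j)$, one checks that the assignment $a\mapsto\iota\, a\,\iota^{*}$ on $\iota(\Ilim\Hh_j)$, combined with a trivial extension to $(1-e)\Hh'$, is weak-operator continuous on bounded sets and compatible with $P$ on $\cup_j\Nn_j$. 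Verifying that this extension lands in $\Nn'$ and preserves the $*$-homomorphism structure is the technical heart; the intertwining $P(a)\iota=\iota a$ then holds on a dense set and extends by continuity.
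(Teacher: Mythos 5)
Your proposal is correct and follows essentially the same route as the paper, which simply invokes the universal properties of $\Ilim \Bb_k$, $\Ilim \Nn_k$, and $\Ilim \Hh_k$ and notes that the intertwining diagrams for $\Dd'$ and $\chi'$ follow by construction. You spell out the density-plus-compatibility argument in more detail and, usefully, flag the one point the paper glosses over --- extending $P$ from $\cup_j\Nn_j$ to its weak-operator closure --- resolving it via spatial implementation through the isometry $\iota$, which is a legitimate way to supply what the paper's appeal to ``universality of $\Ilim\Nn_k$'' leaves implicit.
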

\begin{proof}
The existence and uniqueness of 
$Q$, $P$, and $\iota$ come from the universalities of
$\Ilim \Bb_k$, $\Ilim \Nn_k$, and $\Ilim \Hh_k$.
$\iota(\Ilim \DOM(\Dd_k))\subset \DOM(\Dd')$ and the 
existence of the commutative diagrams
\[
\xymatrix{
\Ilim \DOM(\Dd_k) \ar[rr]^{\Ilim \Dd_k}
\ar[dd]^{\iota}
 &&  \Ilim \Hh_k \ar[dd]^{\iota} \\
&\ar@(ul,dl) &\\
 \DOM(\Dd') \ar[rr]^{\Dd'} && \Hh' \\
}
\hspace{2cm}
\xymatrix{
\Ilim \Hh_j \ar[rrr]^{\Ilim \chi_j} 
\ar[dd]^{\iota}&&& \Ilim \Hh_j \ar[dd]^{\iota} \\
&& \ar@(ul,dl)[]&\\
\Hh'\ar[rrr]^{\chi'} &&& \Hh'
} \mbox{ .}
\]
follow by constructions.
\end{proof}

\section
[The Noncommutative Space of Connections of
AGN]
{The Noncommutative Space of Connections of
Aastrup-Grimstrup-Nest}
\label{AGN:construction}

In this section, we give an alternative construction 
for AGN's semi-finite spectral triple 
 that models the space of $G$-connections.
We make use of the inductive limit formalism 
developed in Section~\ref{inductivelimitsofspectraltriples}
to formulate  AGN's triple as a limit of a sequence
 well-behaved spectral triples.
In Section~\ref{AGN:compactification},
 we will construct a spectral triple on a graph.
Then in Section~\ref{compactification}, we follow AGN's idea to
 compactify the space of connections
using a graph and its refinements in a manifold.
Section~\ref{AGN:dirac} constructs 
a corresponding system of spectral triples
from the system of  a graph and its refinements.
Section~\ref{AGN:semifinite} will alter the spectral triple system 
constructed in Section~\ref{AGN:dirac}
appropriately so that the limit of the system is a 
semi-finite spectral triple,
 and discuss 
the grading operator on the limit spectral triple.

\subsection{Graphs and Spectral Triples on Graphs}
\label{AGN:compactification}

\subsubsection{Space of Connections on Graphs}

\begin{definition}
\label{AGN:d:graph}\mbox{ }

\begin{itemize}
\item 
A \textbf{directed graph} $\Gamma$ is a set $V_\Gamma$ (vertices)
and a set $E_\Gamma$ (edges) with two maps
 $\SO,\RA: E_\Gamma \to V_\Gamma$ (source and range).
\item
A \textbf{morphism of graphs} $\Gamma \to \Gamma'$ consists
of maps
$E_\Gamma \to E_{\Gamma'}$ and $V_\Gamma \to V_{\Gamma'}$ so 
that they intertwine the source and range maps.
\item
We call a directed graph $\Gamma$ \textbf{finite}
if the sets $E_\Gamma$ and $V_\Gamma$ have finite cardinalities.
\end{itemize}

\end{definition}

We view the vertices as a collection of points,
 and the edges as arrows
from $\SO(e)$ to $\RA(e)$.

\begin{example}\mbox{ }
\begin{itemize}
\item 
Any groupoid $\Gg  \rightrightarrows ^{\hspace{-0.3cm}^{\RA}}
_{\hspace{-0.3cm}_{\SO}}X$ is a directed graph.
\item 
Any subset of a groupoid $\Gg$ is a directed graph.
\item
Any set $S$ can be viewed as a graph by taking
$E_\Gamma:=S$ and $V_\Gamma=\{\operatorname{pt}\}$.
\end{itemize}
\end{example}

\begin{theorem}
\label{universalgraph}
Given a directed graph $\Gamma$, there is a unique
groupoid $\Gg(\Gamma)  \rightrightarrows ^{\hspace{-0.3cm}^{\RA}}
_{\hspace{-0.3cm}_{\SO}} V_\Gamma$ with 
a graph morphism $\Gamma \to \Gg(\Gamma)$,
so that given 
any groupoid $\Gg  \rightrightarrows ^{\hspace{-0.3cm}^{\RA '}}
_{\hspace{-0.3cm}_{\SO '}} X$ with a graph morphism
$\Gamma \to \Gg$, there exists a unique groupoid morphism
$\Gg(\Gamma) \dashrightarrow \Gg$ such that the 
following 
 diagram commutes:
\[
    {
\xygraph{
!{<0cm,0cm>;<1cm,0cm>:<0cm,1cm>::}
!{(1,4) }*+{\Gamma}="14"
!{ (4,4) }*+{\Gg(\Gamma)}="44"
!{ (4,1) }*+{\Gg}="41"
!{ (3,3) }*+{\circlearrowright}="c"
"14":"44"
"44":@{-->}"41"
"14":"41"
}
} \mbox{ .}
\]
$\Gg(\Gamma)$ is called the \textbf{free groupoid generated by $\Gamma$}.
\end{theorem}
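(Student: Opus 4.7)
The plan is to construct $\Gg(\Gamma)$ explicitly as a quotient of formal words and then verify the universal property by the standard ``extend-by-generators'' argument.

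First I would build the underlying set. Let $E_\Gamma^{\pm}:= E_\Gamma \sqcup E_\Gamma^{-1}$, where $E_\Gamma^{-1}$ is a formal disjoint copy with source and range maps defined by $\SO(e^{-1}):=\RA(e)$ and $\RA(e^{-1}):=\SO(e)$. Consider the set of finite \emph{composable words} $w=f_1 f_2\cdots f_n$ with $f_i \in E_\Gamma^{\pm}$ and $\SO(f_i)=\RA(f_{i+1})$, together with a formal identity symbol $1_v$ for each $v\in V_\Gamma$. Declare two words equivalent under the smallest equivalence relation generated by the elementary reductions $\cdots f\,f^{-1}\cdots \sim \cdots 1_{\RA(f)}\cdots$ and $\cdots f^{-1}\,f\cdots \sim \cdots 1_{\SO(f)}\cdots$, together with the absorption $1_v w \sim w \sim w\,1_v$ whenever composition makes sense. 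Let $\Gg(\Gamma)$ be the set of equivalence classes; define $\SO,\RA$ by the outermost letters and composition by concatenation of composable representatives. Inverses are obtained by reversing a word and swapping $e\leftrightarrow e^{-1}$.

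Next I would verify the groupoid axioms (well-definedness of composition and inverse on classes, associativity, identity and inverse laws); these are bookkeeping because the reduction relation is local and compatible with concatenation. The canonical graph morphism $\Gamma\to\Gg(\Gamma)$ sends $v\mapsto 1_v$ on vertices and $e\mapsto [e]$ on edges; this clearly intertwines $\SO$ and $\RA$.

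For the universal property, suppose $\phi:\Gamma\to\Gg$ is a graph morphism into a groupoid $\Gg\rightrightarrows X$. I define $\widetilde{\phi}:\Gg(\Gamma)\to\Gg$ on a representative word by
\[
\widetilde{\phi}(f_1\cdots f_n):=\phi^{\pm}(f_1)\cdot\phi^{\pm}(f_2)\cdots \phi^{\pm}(f_n),\qquad \widetilde{\phi}(1_v):=1_{\phi(v)},
\]
where $\phi^{\pm}(e):=\phi(e)$ and $\phi^{\pm}(e^{-1}):=\phi(e)^{-1}$, and the right-hand product is taken in $\Gg$ (the composability of the word in $\Gamma$ forces composability in $\Gg$ because $\phi$ intertwines $\SO,\RA$). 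The map descends to equivalence classes because each elementary reduction is matched by the groupoid identities $\phi(e)\phi(e)^{-1}=1_{\phi(\RA(e))}$ and $\phi(e)^{-1}\phi(e)=1_{\phi(\SO(e))}$. Uniqueness of $\widetilde{\phi}$ is automatic: any groupoid morphism extending $\phi$ must send $[e]$ to $\phi(e)$ and $[e^{-1}]$ to $\phi(e)^{-1}$, and is therefore determined on all words by the homomorphism property, because every class of $\Gg(\Gamma)$ is a product of such generators. Finally, uniqueness of $(\Gg(\Gamma), \Gamma\to\Gg(\Gamma))$ up to unique isomorphism follows from the usual universal-property argument applied to any two candidates.

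The only subtle point, and the step I would treat most carefully, is that reductions on words produce a well-defined composition on equivalence classes and that $\widetilde{\phi}$ descends; concretely, one should show that the equivalence relation I described is exactly the congruence generated by the elementary reductions, so that two congruent words have the same image in every groupoid receiving $\phi$. This is a diamond-lemma style argument, but it is standard and can also be bypassed by defining $\Gg(\Gamma)$ as the free groupoid associated to the free category on $\Gamma\sqcup\Gamma^{\mathrm{op}}$ modulo the above inverse relations, which gives the universal property formally.
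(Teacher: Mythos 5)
Your construction is correct and is the canonical one: reduced composable words in $E_\Gamma\sqcup E_\Gamma^{-1}$ modulo the congruence generated by cancellation, with the universal property verified by extending a graph morphism multiplicatively over words. The paper itself offers no proof of this theorem (it is treated as a standard fact about free groupoids), so there is nothing to compare against; the only point genuinely requiring care is the one you already flag, namely that the cancellation relation must be taken as a congruence (compatible with concatenation) so that composition and the induced map $\widetilde{\phi}$ descend to equivalence classes, and your categorical bypass via the free category on $\Gamma\sqcup\Gamma^{\mathrm{op}}$ handles that cleanly.
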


\begin{remark}
If $V_\Gamma=\{\operatorname{pt}\}$, then  $\Gamma$ is just a set given by
$E_\Gamma$.
In that case, $\Gg(\Gamma)$ is the \textbf{free group}
generated by the set $E_\Gamma$.
\end{remark}

\begin{definition}
\label{freegeneratingset}
Let $\Gg(\Gamma)$ be the free groupoid generated by $\Gamma$.
The subset $\Ff_{\Gamma}\subset \Gg(\Gamma)$ is called the 
\textbf{free generating set of $\Gg(\Gamma)$} if given 
any groupoid $\Gg  \rightrightarrows ^{\hspace{-0.3cm}^{\RA '}}
_{\hspace{-0.3cm}_{\SO '}} X$  and a set map $\Ff_\Gamma \to \Gg$, 
there exists a unique groupoid morphism $\Gg(\Gamma)\to\Gg$ such 
that the following diagram commutes:
\[
    {
\xygraph{
!{<0cm,0cm>;<1cm,0cm>:<0cm,1cm>::}
!{(1,4) }*+{\Ff_\Gamma}="14"
!{ (4,4) }*+{\Gg(\Gamma)}="44"
!{ (4,1) }*+{\Gg}="41"
!{ (3,3) }*+{\circlearrowright}="c"
"14":"44"
"44":@{-->}"41"
"14":"41"
}
} \mbox{ .}
\]
\end{definition}

\begin{example}
\mbox{ }
\begin{itemize}
\item 
Let $\Gamma$ be a graph, then $E_{\Gamma}\subset \Gg(\Gamma)$
is a free generating set. 
\item 
Let $\Gamma$ be the graph
\begin{eqnarray*}
\xymatrix{
 \Gamma:& \bullet \ar@{-}[rr]_{e^{-}}|-{\object@{>}} &&\bullet
 \ar@{-}[rr]_{e^{+}}|-{\object@{>}} &&\bullet  \mbox{ ,}
 }  \end{eqnarray*}
then the set composed of the paths $(e^{-},e^{+})$ and $(e^{+})$
forms a free generating set for $\Gg(\Gamma)$.
\end{itemize}\end{example}

\begin{remark}
\label{equalcardinality}
Let $\Gamma$ be a finite graph, then 
 cardinality of $\Ff_{\Gamma}$ equals that of $E_{\Gamma}$.
Since every free generating set necessarily has the same cardinality.
\end{remark}

\begin{proposition}
\label{groupoidgraphhomo}
For any groupoid $\Gg$, there is a 
 bijection between
the set of groupoid homomorphisms 
$\Gg(\Gamma)\to
\Gg$
and 
the set of maps
$\Ff_\Gamma \to\Gg$.
\end{proposition}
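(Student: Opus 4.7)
The plan is to exhibit the bijection as the obvious pair of assignments coming directly from Definition~\ref{freegeneratingset}: restriction one way, and universal extension the other. Precisely, I would define
\[
\Phi:\HOM_{\text{grpd}}(\Gg(\Gamma),\Gg)\longrightarrow \MAP(\Ff_\Gamma,\Gg)
\]
by $\Phi(\phi):=\phi\big|_{\Ff_\Gamma}$, using the inclusion $\Ff_\Gamma\hookrightarrow \Gg(\Gamma)$. In the reverse direction, given any set map $f:\Ff_\Gamma\to\Gg$, Definition~\ref{freegeneratingset} supplies a unique groupoid morphism $\widetilde{f}:\Gg(\Gamma)\to\Gg$ whose composition with the inclusion $\Ff_\Gamma\hookrightarrow \Gg(\Gamma)$ equals $f$. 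Setting $\Psi(f):=\widetilde{f}$ defines the candidate inverse.

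Next I would check $\Phi\circ\Psi=\ID$ and $\Psi\circ\Phi=\ID$. The first identity is immediate from the commutativity clause of Definition~\ref{freegeneratingset}: by construction $\widetilde{f}$ restricts to $f$ on $\Ff_\Gamma$, so $\Phi(\Psi(f))=\widetilde{f}\big|_{\Ff_\Gamma}=f$. For the second, take any groupoid morphism $\phi:\Gg(\Gamma)\to\Gg$ and set $f:=\phi\big|_{\Ff_\Gamma}=\Phi(\phi)$. Then both $\phi$ and $\Psi(f)=\widetilde{f}$ are groupoid morphisms $\Gg(\Gamma)\to\Gg$ whose restriction to $\Ff_\Gamma$ is $f$. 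The uniqueness clause in Definition~\ref{freegeneratingset} forces $\phi=\widetilde{f}=\Psi(\Phi(\phi))$.

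The only real content beyond unpacking the definition is verifying that restriction is well-defined, i.e.\ that $\Phi(\phi)$ really is a set map $\Ff_\Gamma\to\Gg$; this is automatic from $\Ff_\Gamma\subset \Gg(\Gamma)$. I do not anticipate a serious obstacle, since the proposition is essentially a categorical reformulation of the universal property already packaged into Definition~\ref{freegeneratingset}. If anything, the mild point to be careful about is to keep the source/range maps straight so that the notion of ``set map $\Ff_\Gamma\to\Gg$'' is understood as a bare map of underlying sets (with no compatibility required beyond what the universal property demands), matching exactly the input data that the definition is designed to extend.
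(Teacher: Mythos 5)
Your proposal is correct and follows essentially the same route as the paper: both arguments simply unpack the universal property in Definition~\ref{freegeneratingset}, using the unique extension of a set map $\Ff_\Gamma\to\Gg$ to a groupoid morphism in one direction and restriction along the inclusion $\Ff_\Gamma\subset\Gg(\Gamma)$ in the other. Your write-up is in fact slightly more complete, since you explicitly verify that the two assignments are mutually inverse, whereas the paper only notes the injection and the existence of the restriction.
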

\begin{proof}
It follows from Definition~\ref{freegeneratingset}
that
the space of set maps  $\MAP(\Ff_\Gamma,\Gg)$
injects into the space of set  homomorphisms $\HOM(\Gg(\Gamma),\Gg)$.
As sets, $\Ff_\Gamma\subset \Gg(\Gamma)$. Hence
every groupoid homomorphism in $\HOM(\Gg(\Gamma),\Gg)$
restricts to a set map in $\MAP(\Ff_\Gamma,\Gg)$.
\end{proof}

\begin{corollary}
For any groupoid $\Gg$, there is a 
 bijection between
the set of groupoid homomorphisms 
$\Gg(\Gamma)\to
\Gg$
and 
the set of graph homomorphisms 
$\Gamma\to\Gg$.
\end{corollary}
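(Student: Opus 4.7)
The plan is to derive this corollary directly from the universal property of the free groupoid $\Gg(\Gamma)$ recorded in Theorem~\ref{universalgraph}; in effect, the corollary is a tidy repackaging of that theorem.

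I will exhibit mutually inverse assignments between the two sets. In one direction, a graph morphism $\phi\colon \Gamma\to\Gg$ determines, by Theorem~\ref{universalgraph}, a unique groupoid morphism $\tilde\phi\colon\Gg(\Gamma)\to\Gg$ fitting into the universal diagram there. In the reverse direction, any groupoid morphism $\psi\colon\Gg(\Gamma)\to\Gg$ is, in particular, a graph morphism once the source and range maps are used to view each groupoid as a directed graph; composing with the canonical graph morphism $\kappa\colon\Gamma\to\Gg(\Gamma)$ supplied by Theorem~\ref{universalgraph} therefore yields a graph morphism $\psi\circ\kappa\colon\Gamma\to\Gg$.

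I would then verify that these two assignments are mutually inverse. Starting from a graph morphism $\phi$, the composition $\tilde\phi\circ\kappa$ equals $\phi$ by the commutativity of the defining diagram of Theorem~\ref{universalgraph}. Conversely, starting from a groupoid morphism $\psi$, one notes that $\psi$ is itself a groupoid morphism making the universal diagram for the graph morphism $\psi\circ\kappa$ commute; the uniqueness clause of Theorem~\ref{universalgraph} then forces $\widetilde{\psi\circ\kappa}=\psi$.

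The only point to watch is that a groupoid morphism really does restrict to a graph morphism between the underlying source-range graphs, which is immediate from functoriality. Apart from this there is no substantive obstacle; the corollary is a direct unwinding of the universal property of $\Gg(\Gamma)$. An alternative route would go through Proposition~\ref{groupoidgraphhomo} by bijecting set maps $\Ff_\Gamma\to\Gg$ with graph morphisms $\Gamma\to\Gg$, but that comparison is less transparent than appealing to Theorem~\ref{universalgraph} directly.
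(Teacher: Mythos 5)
Your proof is correct, but it takes a different route from the paper's. The paper disposes of this corollary in one line: it invokes Proposition~\ref{groupoidgraphhomo} (the bijection between groupoid homomorphisms $\Gg(\Gamma)\to\Gg$ and set maps $\Ff_\Gamma\to\Gg$ for a free generating set) and simply chooses $\Ff_\Gamma=E_\Gamma$, implicitly identifying set maps $E_\Gamma\to\Gg$ with graph homomorphisms $\Gamma\to\Gg$. You instead unwind the universal property of Theorem~\ref{universalgraph} directly, exhibiting the two assignments $\phi\mapsto\tilde\phi$ and $\psi\mapsto\psi\circ\kappa$ and checking they are mutually inverse via commutativity and uniqueness; this is exactly the route you flag at the end as an alternative, and it is the one you should keep. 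Your version is slightly longer but more self-contained and arguably more honest: it sidesteps the point the paper glosses over, namely that a set map out of $E_\Gamma$ alone does not obviously constitute a graph homomorphism (which also carries a vertex map intertwining source and range), whereas a groupoid morphism precomposed with the canonical $\kappa\colon\Gamma\to\Gg(\Gamma)$ manifestly does. The paper's version buys brevity and reuse of the proposition already proved; yours buys an explicit inverse and no appeal to the notion of free generating set. Both are valid.
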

\begin{proof}
Choose the generating set $\Ff_\Gamma$ to be the set of edges $E_\Gamma$.
\end{proof}


\begin{definition}
\label{AGN:d:coarsegrained}
Given a group $G$, define the
\textbf{space of $G$-connections on
 $\Gamma$} to be
\[
\Aa_\Gamma:=\HOM(\Gg(\Gamma),G)\mbox{ ,}
\]
where $\HOM$ is understood to be the space of 
groupoid homomorphisms.
\end{definition}
\begin{corollary}
\label{AGN:lem:generatoridentification}
Let $\Gamma$ be a finite graph, then
there exists a  bijection 
\begin{equation}
\label{AGN:eqn:generatoridentification}
\Aa_\Gamma \stackrel{\sim}{\longrightarrow}
 G^{\lvert E_\Gamma \rvert}
\end{equation}
\end{corollary}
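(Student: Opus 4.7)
The plan is to realize this as a direct consequence of the universal property machinery already developed, specifically Proposition~\ref{groupoidgraphhomo} together with Remark~\ref{equalcardinality}. First I would view the group $G$ as a groupoid with a single object $\ast$, whose morphism set is $G$ with composition given by group multiplication. Under this identification, a groupoid homomorphism $\Gg(\Gamma) \to G$ must send every vertex of $V_\Gamma$ to $\ast$ and acts on morphisms in a way constrained only by source, range, and composition.

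Next I would apply Proposition~\ref{groupoidgraphhomo} with $\Gg = G$, which yields a canonical bijection
\[
\Aa_\Gamma = \HOM(\Gg(\Gamma),G) \;\stackrel{\sim}{\longrightarrow}\; \MAP(\Ff_\Gamma, G).
\]
Since $\Gamma$ is finite, Remark~\ref{equalcardinality} gives $|\Ff_\Gamma| = |E_\Gamma|$, so after fixing any bijection $\Ff_\Gamma \leftrightarrow E_\Gamma$ one obtains $\MAP(\Ff_\Gamma,G) \cong G^{|E_\Gamma|}$. Composing these two bijections produces the claimed identification \eqref{AGN:eqn:generatoridentification}.

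The only mildly subtle point—and really the only thing requiring any care at all—is verifying that a graph homomorphism $\Gamma \to G$ carries no data beyond the edge assignments: because $G$ has a unique object, every source/range condition on vertices is vacuously satisfied, so a graph morphism $\Gamma \to G$ is literally just a set map $E_\Gamma \to G$. I would either argue through this observation directly, or invoke the corollary following Proposition~\ref{groupoidgraphhomo} (bijection between groupoid homomorphisms $\Gg(\Gamma)\to\Gg$ and graph homomorphisms $\Gamma\to\Gg$) specialized to $\Gg=G$. There is no serious obstacle; the proof is essentially a one-line application of the universal property, and the bijection so produced is natural in $\Gamma$, which will be useful in the next section when refinements of graphs are considered.
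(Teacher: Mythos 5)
Your proof is correct and follows essentially the same route as the paper: specialize Proposition~\ref{groupoidgraphhomo} to the groupoid $G$ to get $\Aa_\Gamma \cong \MAP(\Ff_\Gamma,G)$, then invoke Remark~\ref{equalcardinality} to identify this with $G^{\lvert E_\Gamma\rvert}$. The extra remarks about $G$ as a one-object groupoid simply make explicit what the paper leaves implicit.
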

\begin{proof}
Proposition~\ref{groupoidgraphhomo} specializes to
$\Aa_\Gamma \cong \MAP(\Ff_\Gamma,G)$.
By Remark~\ref{equalcardinality},
$\MAP(\Ff_\Gamma,G)=G^{\lvert E_\Gamma \rvert}$.
The result is obtained.
\end{proof} 

$G$ is thought of as a symmetry group, so
we will assume $G$ to be a compact Lie group,
and it comes equipped with 
the  normalized Haar measure.
We equip $\Aa_\Gamma$ the manifold structure and measure
coming from  $G^{\lvert E_\Gamma \rvert}$ under the 
identification 
\eqref{AGN:eqn:generatoridentification}.
Hence $\Aa_\Gamma$ is a 
compact manifold with a smooth measure.
Note that the manifold structure on $\Aa_\Gamma$ depends on 
the choice of free generating set $\Ff_\Gamma$. However,
we believe that the measure on $\Aa_\Gamma$ is intrinsic, i.e.
it does not depend on the choice of generating set.
Most of the time we think of the generating is just $E_\Gamma$.
It will be made clear in later sections that when and why we
generalize to other free generating sets.

The gauge group of the connection space $\Aa_\Gamma$ is
$\GAU_\Gamma:=\MAP(V_\Gamma,G)$. 
The  action of $\GAU_\Gamma$ on $\Aa_\Gamma$ is given by
\[
g (\nabla ) (\gamma) :=  
g\left( \SO(\gamma) \right) \cdot
  \nabla(\gamma) \cdot g \left(\RA(\gamma)\right)^{-1}
\]
for $g \in \GAU_\Gamma, \gamma\in \Gg(\Gamma)$, 
and $\nabla\in\Aa_\Gamma$. $\GAU_\Gamma$ preserves the measure
on $\Aa_\Gamma$.


\begin{definition}
\label{AGN:d:isotropy}
Fix a vertex $\nu$ of a graph $\Gamma$.
Define the \textbf{isotropy group
 of $\Gamma$ at
$\nu$}
to be the group
\[
\Hhg_\nu(\Gamma):=\left\{
\gamma \in \Hhg(\Gamma) : 
\SO(\gamma)=\nu=\RA(\gamma) \right\} \mbox{ .}
\]
\end{definition}

Suppose that $\gamma\in \Gg(\Gamma)$ such that
 $\SO(\gamma)=\nu$ and $\RA(\gamma)=\mu$ for some vertices
$\nu,\mu \in V_\Gamma$. Then
it is easy to see that $\Gg_\nu(\Gamma)$ and $\Gg_\mu(\Gamma)$
are isomorphic as groups with the isomorphism given by conjugation
by $\gamma$. 

\begin{definition}
A graph $\Gamma$ is said to be \textbf{connected}
if the groupoid $\Hhg(\Gamma)$ is transitive.
That is, 
for every pair of vertices $\nu,\mu\in V_\Gamma$, there exists
an element $\gamma\in\Gg(\Gamma)$
such that $\SO(\gamma)=\nu$ and $\RA(\gamma)=\mu$.
\end{definition}

We will assume that all the graphs we are dealing with
are connected.

\subsubsection{The Algebra of Holonomies}
Each $\gamma\in \Hhg_\nu(\Gamma)$ defines a \emph{smooth}
 $G$-valued
function on $\Aa_\Gamma$ given by 
\begin{equation}
\label{embedisotropy}
h_\gamma (\nabla) := \nabla(\gamma) \mbox{ , for } \nabla\in \Aa_\Gamma=\HOM(\Hhg(\Gamma),G) \mbox{ .}
\end{equation}
Thus, 
\[
h: \Hhg_\nu(\Gamma) \to C^\infty(\Aa_\Gamma,G)\mbox{ .}
\]
In fact, $h$ is a group homomorphism, as
\[
h_{\gamma ' \circ \gamma}( \nabla) =
\nabla (\gamma ' \circ \gamma) =
\nabla(\gamma')\cdot \nabla(\gamma)=
h_{\gamma'}(\nabla)\cdot h_ \gamma(\nabla) \mbox{ ,}
\]
where the product on $ C^\infty(\Aa_\Gamma,G)$ is given pointwise.
$h$ is the inverse of the \emph{loop transform} in Loop Quantum Gravity \cite{rovelli}.
\begin{definition}
\label{holonomyalgebraofgamma}
Define $\Bb_\Gamma$, the \textbf{algebra of $\Gamma$-holonomies}, to 
 be the 
group algebra generated by the subgroup
$h(\Hhg_\nu(\Gamma))\subset C^\infty(\Aa_\Gamma,G)$.
\end{definition}
 An
element of $\Bb_\Gamma$
is a finite sum of elements in $h\left(\Hhg_\nu(\Gamma)\right)$
 with complex coefficients:
\[
a=\sum_i a_i h_{\gamma_i} \mbox{ ,}
\]
where $a_i\in \mathbb{C} $
 and $h_{\gamma_i}\in h(\Hhg_\nu(\Gamma))$.

\begin{remark}
The algebra $\Bb_\Gamma$ does not
depend on the manifold structure on $\Aa_\Gamma$, as
the inverse loop transform \eqref{embedisotropy} does not.
Therefore, $\Bb_\Gamma$ is independent of how 
$\Aa_\Gamma$ is identified with $G^{|E_\Gamma|}$ under different
free generating sets $\Ff_\Gamma$ for $\Gg(\Gamma)$.
\end{remark}

Suppose that $G$ comes with a faithful unitary representation
as matrices in $\MATRIX$.
Then $\Bb_\Gamma$ is a $*$-subalgebra of 
$C^\infty(\Aa_\Gamma,\MATRIX)$,
with   the involution  given by
\[a^*:=\sum_i \overline{a_i} h_{\gamma_i}^{-1} \mbox{ .}\]
 $\Bb_\Gamma$ is a pre-$C^*$ algebra with norm inherited
from  $C^\infty(\Aa_\Gamma,\MATRIX)$.
As a subalgebra of $C^\infty(\Aa_\Gamma,\MATRIX)$,
$\Bb_\Gamma$ represents on the 
Hilbert space $L^2(\Aa_\Gamma,
E
)$ via point-wise
multiplication:
\begin{eqnarray}
\label{brep}
(a \cdot \eta ) \nabla := a (\nabla) \cdot \eta(\nabla) 
\end{eqnarray}
for $a \in \Bb_\Gamma\subset C^\infty(\Aa_\Gamma,\MATRIX)$,
 $\eta\in L^2(\Aa_\Gamma,
E
)$, $\nabla\in \Aa_\Gamma$, where $E$ is any
(finite dimensional) 
$\MATRIX$-module.

\subsubsection{Quantum Weil Algebra} 
\label{DH}
Denote by $\g$ the Lie algebra of $G$, let $\Uu(\g)$
be the universal enveloping algebra of $\g$. Fix 
an invariant metric $\langle \cdot, \cdot \rangle$ on $\g$
 and let $\CL(\g)$ be the Clifford
algebra generated by $\g$ with respect to the
relation $a b + b a = 2 \langle a , b \rangle $ for $a,b\in \g$.
\begin{definition}
\label{AGN:d:weilalg}
Define the \textbf{quantum Weil algebra}  
$\Ww(\g)$  of a Lie algebra $\g$ to be
\[
\Ww(\g):=\Uu( \g ) \otimes \CL(\g)
\mbox{ .}
\]
\end{definition}

Notice that $\CL(\g)$ is a finite dimensional $C^*$-algebra.
Let 
\[
\CL(\g) \rightarrow B(S)
\]
be a cyclic representation of $\CL(\g)$ 
on a Hilbert space $S$ with normalized
cyclic vector $\mathbb{I}$,
i.e.,
$\CL(\g)\mathbb{I} = S$.
Examples of such $S$ are the exterior algebra
$\wedge ^\bullet (\g)$ with 
 $\mathbb{I}$ given by the $1$ in $\wedge ^0 (\g)$,
and certain quotients of
$\wedge ^\bullet (\g)$, such as the spin representations.
We get an action of $\Ww(\g)$ on $L^2(G)\otimes S$,
where $\Uu(\g)$ acts on $L^2(G)$ as left-invariant
differential operators.

Let $\left\{e_i\right\}_{i=1}^q$ be an orthonormal basis of $\g$ with respect to the chosen
invariant metric.
Define the Dirac operator on 
$L^2(G)\otimes S$
to be
the element
 $D\in \Ww(\g)$ given by
\begin{equation}
\label{AGN:eqn:basicdirac}
D:= \frac{1}{\sqrt{-1}} \sum _{i=1}^q  e_i \otimes e_i
 \mbox{ .}
\end{equation}

$D$ is essentially self-adjoint.
Denote its unique self-adjoint extension again by $D$.

Aastrup-Grimstrup-Nest proved the following result for 
$G=SU(2)$ \cite{agn1}.
\begin{theorem}
\label{AGN:thm:eckhard}
Let $G$ be a compact  Lie group $G$.
The operator $D$ on $L^2(G)\otimes S$ has kernel
\[\ker(D) = \mathbb{C} \otimes S \mbox{ ,}\]
where $\mathbb{C}\subset L^2(G)$ is embedded as constant functions.
\end{theorem}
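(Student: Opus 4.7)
The plan is to use the Peter-Weyl decomposition to reduce the statement to a finite-dimensional algebraic problem on each irreducible representation, and then verify strict positivity of $D^2$ on the non-trivial blocks. The inclusion $\mathbb{C}\otimes S\subseteq \ker(D)$ is immediate, since every left-invariant vector field $e_i$ annihilates constants, forcing $D(1\otimes s)=0$ for all $s\in S$. For the converse, decompose $L^2(G)=\bigoplus_\pi V_\pi^*\otimes V_\pi$ over unitary irreducible representations; under this decomposition each $e_i$ acts on the $\pi$-block as $\mathrm{id}_{V_\pi^*}\otimes \pi(e_i)$. Consequently $D$ restricts to $\mathrm{id}_{V_\pi^*}\otimes D_\pi$, where $D_\pi := \tfrac{1}{\sqrt{-1}}\sum_i \pi(e_i)\otimes c(e_i)$ is a finite-dimensional self-adjoint operator on $V_\pi\otimes S$. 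For the trivial representation $\pi=\mathbf{1}$, $\pi(e_i)=0$ gives $D_{\mathbf{1}}=0$ and the block contributes exactly $\mathbb{C}\otimes S$. The theorem reduces to showing $\ker(D_\pi)=0$ for every non-trivial $\pi$.

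To attack this, I would compute $D_\pi^2$ directly. Expanding the products $c(e_i)c(e_j)$ and separating the symmetric ($i=j$) from antisymmetric ($i\neq j$) contributions via the Clifford relations, the diagonal terms collapse to $c_\pi \cdot \mathrm{id}$, where $c_\pi := -\sum_i \pi(e_i)^2$ is the Casimir scalar on $V_\pi$ (strictly positive for non-trivial $\pi$ by Schur), while the off-diagonal terms assemble into a correction involving the structure constants $c_{ij}^{\,k}$. Because $\langle\cdot,\cdot\rangle$ is invariant, $c_{ij}^{\,k}$ is totally antisymmetric, and the correction rewrites as $2\sum_k \pi(e_k)\otimes\sigma(e_k)$ with $\sigma:\g\to\CL(\g)$ the spin lift of the adjoint representation. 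Introducing the diagonal $\g$-action $J_k := \pi(e_k)\otimes 1 + 1\otimes \sigma(e_k)$ on $V_\pi\otimes S$ and expanding its Casimir $\CAS_J := -\sum_k J_k^2$ produces a Parthasarathy-type identity
\[
D_\pi^2 = (2c_\pi + c_\sigma)\cdot \mathrm{id} - \CAS_J,
\]
where $c_\sigma$ is the Casimir scalar of $\sigma$ acting on $S$.

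The remaining step, which is the main obstacle, is the strict positivity $D_\pi^2>0$ for non-trivial $\pi$. Decomposing $V_\pi\otimes S$ into $J$-isotypic components with highest weights $\mu$, the operator $\CAS_J$ acts as $\|\mu+\rho_\g\|^2-\|\rho_\g\|^2$ on each summand, while $c_\pi = \|\lambda+\rho_\g\|^2-\|\rho_\g\|^2$ (for $\lambda$ the highest weight of $\pi$) and, in the semisimple case, $c_\sigma = \|\rho_\g\|^2$. Comparing these quantities against the finite list of highest weights $\mu$ actually appearing in the Clebsch–Gordan decomposition of $V_\lambda\otimes S$, via the standard Kostant–Parthasarathy weight estimate, yields $D_\pi^2>0$ whenever $\lambda\neq 0$. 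Two subtleties must be handled: first, when $\g$ is not semisimple one splits $\g=\g_{\mathrm{ss}}\oplus \z(\g)$, and since $\sigma$ vanishes on $\z(\g)$ the identity on central characters degenerates to $D_\pi^2 = c_\pi\cdot\mathrm{id}$, which is positive exactly when the central character of $\pi$ is non-trivial; second, $S$ is only a cyclic Clifford module rather than the full spin module, but any such $S$ decomposes into irreducible Clifford submodules to which the argument applies componentwise. Combining these cases gives $\ker(D_\pi)=0$ for every non-trivial $\pi$, so $\ker(D) = \mathbb{C}\otimes S$.
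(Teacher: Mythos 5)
Your proposal is correct and follows essentially the same route as the paper: both square the operator, rewrite $D^2$ via Kostant's spin lift $\pi:\g\to\CL(\g)$ and the Casimir of the diagonal action as a Parthasarathy-type identity, and then compare Casimir eigenvalues on the irreducible constituents to get strict positivity off the trivial isotypic component. Your explicit Peter--Weyl reduction to finite-dimensional blocks and your separate treatment of the center and of non-irreducible cyclic modules $S$ are only minor repackagings of the paper's argument (the latter two points being absorbed there into Lemma~\ref{l:cliffrhorep} and the final weight estimate).
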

The proof will make use of the following results by Kostant \cite{kostant}.

\begin{lemma}[\cite{kostant}]
\label{l:fromgtoclg}
The map $\pi:\g \to \CL (\g)$ given in the orthonormal basis 
$e_i\in \g$ by
\begin{equation}
\label{eq:fromgtoclg}
 \pi(e_a)= 
-\frac{1}{4}\sum_{i,j}\langle [e_i,e_j],e_a \rangle e_i e_j
\end{equation}
is a Lie algebra homomorphism.
\end{lemma}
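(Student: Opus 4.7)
The plan is to verify $[\pi(e_a), \pi(e_b)] = \pi([e_a, e_b])$ by reducing the computation to the Jacobi identity via a convenient intermediate step: showing that $\pi(e_a)$ acts on $\g \subset \CL(\g)$ by the adjoint representation of $\g$.

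First I would record the key symmetry of the structure constants $c_{ij}^k := \langle [e_i, e_j], e_k\rangle$. The invariance of $\langle \cdot, \cdot\rangle$ gives $\langle [e_i, e_j], e_k\rangle = \langle e_i, [e_j, e_k]\rangle$, and combined with the antisymmetry of the Lie bracket this makes $c_{ij}^k$ totally antisymmetric in the three indices $i,j,k$. In particular $\pi(e_a) = -\tfrac{1}{4}\sum_{i,j} c_{ij}^a\, e_i e_j$ is manifestly built from the antisymmetric part of $e_i e_j$ in the Clifford algebra.

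Next I would compute $[\pi(e_a), e_c]$ in $\CL(\g)$ for a single generator $e_c$. The Clifford relation $e_i e_j + e_j e_i = 2\delta_{ij}$ yields the standard identity $[e_i e_j, e_c] = 2(\delta_{jc} e_i - \delta_{ic} e_j)$, so
\[
[\pi(e_a), e_c] = -\tfrac{1}{4}\sum_{i,j} c_{ij}^a \cdot 2(\delta_{jc} e_i - \delta_{ic} e_j) = -\sum_i c_{ic}^a e_i = \sum_i c_{ca}^i e_i = [e_a, e_c],
\]
where the second-to-last equality uses total antisymmetry $c_{ic}^a = -c_{ca}^i$ and the final equality recognizes the bracket of $e_a, e_c$ written in the orthonormal basis. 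So $\operatorname{ad}_{\pi(e_a)}$ agrees with $\operatorname{ad}_{e_a}$ on $\g \subset \CL(\g)$.

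Finally, since $[\pi(e_a), \cdot]$ is a derivation of the associative algebra $\CL(\g)$, the previous step propagates through the quadratic expression defining $\pi(e_b)$:
\[
[\pi(e_a), \pi(e_b)] = -\tfrac{1}{4}\sum_{i,j} c_{ij}^b \bigl([e_a, e_i]\, e_j + e_i\, [e_a, e_j]\bigr).
\]
On the other hand, $\pi([e_a, e_b]) = -\tfrac{1}{4}\sum_{i,j}\langle [e_i, e_j], [e_a, e_b]\rangle\, e_i e_j$. Expanding the inner product using the invariance of $\langle \cdot, \cdot\rangle$ and the Jacobi identity $[e_a, [e_i, e_j]] = [[e_a, e_i], e_j] + [e_i, [e_a, e_j]]$ rewrites its coefficient as exactly the antisymmetrized combination appearing above. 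I expect the main obstacle to be this bookkeeping step: tracking the indices while invoking total antisymmetry of $c_{ij}^k$ to match the two expressions coefficient-by-coefficient on the basis of degree-two Clifford monomials. Once the Jacobi identity is applied in the form $\sum_c c_{ab}^c c_{ij}^c = \sum_k (c_{ai}^k c_{jb}^k - c_{aj}^k c_{ib}^k)$ and the repeated indices are symmetrized against the $e_i e_j - e_j e_i$ part, the two sides coincide, completing the verification that $\pi$ is a Lie algebra homomorphism.
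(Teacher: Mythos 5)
The paper does not actually prove this lemma --- it is quoted verbatim from Kostant with the citation \cite{kostant} and no argument is supplied --- so there is no in-paper proof to compare against. Your direct verification is essentially correct and is the standard one: establish total antisymmetry of $c_{ij}^k=\langle[e_i,e_j],e_k\rangle$ from invariance of the metric, show that $[\pi(e_a),\cdot]$ restricts to $\operatorname{ad}_{e_a}$ on $\g\subset\CL(\g)$ via the identity $[e_ie_j,e_c]=2(\delta_{jc}e_i-\delta_{ic}e_j)$, and then push the computation of $[\pi(e_a),\pi(e_b)]$ through the derivation property and the Jacobi identity. I checked the coefficient matching at the end: writing $[\pi(e_a),\pi(e_b)]=-\tfrac14\sum_{i,j}M_{ij}e_ie_j$ with $M_{ij}=\sum_l(c_{al}^ic_{lj}^b+c_{al}^jc_{il}^b)$, invariance and Jacobi give $M_{ij}=-\langle[e_a,[e_i,e_j]],e_b\rangle=\langle[e_i,e_j],[e_a,e_b]\rangle=\sum_cc_{ij}^cc_{ab}^c$, which is the coefficient of $\pi([e_a,e_b])$; moreover $M_{ij}$ is antisymmetric in $i,j$, so no spurious scalar (central) term arises from the symmetric part of $e_ie_j$ --- a point you leave implicit but which does need to be true for the coefficient-by-coefficient comparison to be legitimate. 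Two small remarks. First, your middle display contains compensating sign slips: a cyclic permutation of the three indices is even, so $c_{ic}^a=+c_{ca}^i$, and $\sum_ic_{ca}^ie_i=[e_c,e_a]$, not $[e_a,e_c]$; the two errors cancel and the correct conclusion $[\pi(e_a),e_c]=[e_a,e_c]$ stands. Second, once you have that first step, a slightly slicker finish is available: both $[\pi(e_a),\pi(e_b)]$ and $\pi([e_a,e_b])$ lie in the image of the quantization map $\Lambda^2\g\to\CL(\g)$, on which the adjoint action of $\CL(\g)$ on $\g$ is faithful, so the Jacobi identity applied to commutators with an arbitrary $e_c$ already forces them to agree without any index bookkeeping.
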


Fix a Cartan subalgebra  and a system of positive roots of $\g$.
\begin{lemma}[\cite{kostant}]
\label{l:cliffrhorep}
Let
$S$ be any $\CL(\g)$-module. Then the $\g$-representation 
on $S$ defined by
composition with $\pi$ is a direct sum of $\rho$-representations,
 where $\rho$ is the half sum of all positive roots. 
\end{lemma}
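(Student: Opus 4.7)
The plan is to recognize $\pi$ as the spin lift of the adjoint representation, and then use a weight analysis combined with a Casimir computation to identify all irreducible constituents of the $\g$-module structure on $S$. First I would observe that the map defining $\pi$ in \eqref{eq:fromgtoclg} is exactly the composition of $\operatorname{ad}\colon \g \to \mathfrak{so}(\g)$ (where $\mathfrak{so}(\g)$ is taken with respect to the invariant metric) with the canonical embedding $\mathfrak{so}(\g) \hookrightarrow \CL(\g)$ that sends a skew-symmetric endomorphism $A$ with matrix entries $A_{ij}=\langle A e_i,e_j\rangle$ to $-\tfrac{1}{4}\sum_{i,j} A_{ij}\,e_i e_j$. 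This identifies the $\g$-action on any Clifford module $S$ with a spin representation of $\mathfrak{so}(\g)$ pulled back along $\operatorname{ad}$.

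Next I would reduce to the case of a simple Clifford module. Since $\CL(\g)$ is a finite-dimensional semisimple algebra, every $\CL(\g)$-module $S$ decomposes as a direct sum of copies of its one or two simple modules, and this decomposition respects the $\g$-action. So it suffices to analyze the $\g$-structure on a single simple $\CL(\g)$-module, which I would realize concretely: choose a Cartan $\h\subset\g$ and a system of positive roots, giving the polarization $\g=\h\oplus\n^+\oplus\n^-$ with $\langle\n^\pm,\n^\pm\rangle=0$. Then a simple $\CL(\g)$-module is of the form $S_\h\otimes\wedge^\bullet \n^+$, where $\n^+$ acts by exterior multiplication and $\n^-$ by the dually paired contraction. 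A direct calculation of $\pi(h)$ for $h\in\h$ using \eqref{eq:fromgtoclg} — keeping only the terms with $e_i,e_j$ in paired root spaces $\g_\alpha,\g_{-\alpha}$ — shows that $\pi(h)$ acts on $\wedge^T \n^+$ (for $T$ a subset of the positive roots) by the scalar $\bigl(\sum_{\alpha\in T}\alpha\bigr)(h)-\rho(h)$. Equivalently, the weights of the simple module are all of the form $\sum_{\alpha>0}\varepsilon_\alpha\,\alpha/2$ with signs $\varepsilon_\alpha\in\{\pm 1\}$, and the unique maximal one is $\rho$ (obtained by taking all $\varepsilon_\alpha=+1$).

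Having identified $\rho$ as the highest weight, the remaining — and, in my view, the main — obstacle is to rule out irreducible constituents with strictly smaller highest weight. This is where Kostant's key identity enters: the quadratic Casimir $\operatorname{Cas}_\g=\sum_a e_a^2$ applied via $\pi$ as a Lie algebra representation gives $\sum_a\pi(e_a)^2\in\CL(\g)$, and a bookkeeping of the sums $\sum_a\langle[e_i,e_j],e_a\rangle\langle[e_k,e_l],e_a\rangle=\langle[e_i,e_j],[e_k,e_l]\rangle$ together with the Jacobi identity collapses this expression to a central scalar governed by the Freudenthal--de Vries strange formula. On any irreducible $\g$-summand $V_\lambda$ this scalar must equal $\langle\lambda,\lambda+2\rho\rangle$, and matching values forces $\lambda=\rho$ among the dominant weights compatible with the weight-set $\{\sum\varepsilon_\alpha\alpha/2\}$ computed above.

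To finish, I would combine the two inputs: the weight analysis constrains each highest weight of an irreducible summand to lie in the set $\{\sum\varepsilon_\alpha\alpha/2\}$ intersected with the dominant chamber, while the Casimir computation pins the eigenvalue to the one characterizing $V_\rho$; the only possibility is $\lambda=\rho$. A dimension check via the Weyl dimension formula ($\dim V_\rho=2^{|\Delta^+|}$) against $\dim S=2^{\lfloor \dim\g/2\rfloor}\dim S_\h$ then fixes the multiplicity, confirming that every $\CL(\g)$-module decomposes under $\pi$ as a direct sum of copies of $V_\rho$.
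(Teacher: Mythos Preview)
The paper does not give its own proof of this lemma: it is stated with a citation to Kostant and used as a black box in the proof of Theorem~\ref{AGN:thm:eckhard}. So there is no ``paper's approach'' to compare against.

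That said, your sketch is essentially Kostant's original argument and is sound. The two key ingredients --- the weight computation showing that every weight of a simple $\CL(\g)$-module under $\pi$ has the form $\sum_{\alpha>0}\varepsilon_\alpha\,\alpha/2$ with unique maximum $\rho$, and the fact that $\pi(\CAS)\in\CL(\g)$ is a \emph{scalar} --- together force every irreducible constituent to be $V_\rho$: any highest weight $\mu$ of a subrepresentation is dominant with $\mu\le\rho$, hence $\lvert\mu+\rho\rvert\le\lvert 2\rho\rvert$ with equality only for $\mu=\rho$, while the Casimir scalar must agree with $\lvert 2\rho\rvert^2-\lvert\rho\rvert^2$. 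One small remark: once the Casimir argument pins every summand to $V_\rho$, the dimension check via the Weyl formula is redundant for the \emph{statement} (it only recovers the multiplicity, which is not asserted). A second minor point you may want to make explicit is the reduction from compact $\g$ to the semisimple case: $\pi$ vanishes on the center of $\g$, so the claim reduces immediately to $[\g,\g]$, where the root-space machinery applies.
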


Denote by $\CAS$ the element $-\sum_{i} e_i e_i \in \Uu(\g)$.
$\CAS$ is called the Casimir \cite{knapp}.

\begin{lemma}[\cite{knapp}]
\label{lem:casscalar}
Let $V$ be an irreducible representation of $\g$
 with 
highest weight $\lambda$.
Then $\CAS$ acts as a scalar on $V$, and the scalar is given by
\[\lvert \lambda +\rho \rvert ^2 - \lvert \rho \rvert^2 ,\]
where $\rho$ is the half sum of all positive roots.
\end{lemma}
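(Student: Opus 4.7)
The plan is to establish the formula via the standard highest-weight calculation, treating the compact real form and the complex semisimple Lie algebra in tandem. First, I would observe that $\CAS=-\sum_i e_i e_i$ is the quadratic Casimir built from the given invariant metric, and a short computation (using invariance of $\langle\cdot,\cdot\rangle$) shows $[\CAS,x]=0$ for every $x\in\g$. Hence $\CAS$ lies in the center of $\Uu(\g)$, and on the irreducible $V$ it must act as a scalar $c(\lambda)$ by Schur's lemma. It therefore suffices to evaluate $c(\lambda)$ on a highest weight vector $v_\lambda\in V$.

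Next I would switch to a basis adapted to the root decomposition of the complexification $\g_{\mathbb{C}}$. Choose the fixed Cartan subalgebra $\mathfrak{h}\subset\g$ (mentioned just before the statement) and pick an orthonormal basis $\{h_k\}$ of $\mathfrak{h}$ with respect to $\langle\cdot,\cdot\rangle$. For each root $\alpha$, choose root vectors $E_\alpha$ normalized so that $\langle E_\alpha,E_{-\alpha}\rangle=1$, which entails $[E_\alpha,E_{-\alpha}]=H_\alpha$, where $H_\alpha\in\mathfrak{h}$ is the vector dual to $\alpha$ under the metric. Expressing $\CAS$ in this complex basis (and checking that the result is independent of the real vs.\ complex orthonormal basis chosen, since $\CAS$ is metric-intrinsic) gives
\begin{equation*}
\CAS=\sum_k h_k^2 + \sum_{\alpha>0}\bigl(E_\alpha E_{-\alpha}+E_{-\alpha}E_\alpha\bigr).
\end{equation*}

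Now I would rewrite each positive-root term as $E_{-\alpha}E_\alpha+E_{-\alpha}E_\alpha+[E_\alpha,E_{-\alpha}]=2E_{-\alpha}E_\alpha+H_\alpha$ and apply this to $v_\lambda$. The highest-weight condition $E_\alpha v_\lambda=0$ for all positive $\alpha$ kills the first piece, while $H_\alpha v_\lambda=\lambda(H_\alpha)v_\lambda=\langle\lambda,\alpha\rangle v_\lambda$ and $\sum_k h_k^2 v_\lambda=\langle\lambda,\lambda\rangle v_\lambda=|\lambda|^2 v_\lambda$, where I use the metric dually induced on $\mathfrak{h}^*$. Summing gives
\begin{equation*}
c(\lambda)=|\lambda|^2+\sum_{\alpha>0}\langle\lambda,\alpha\rangle=|\lambda|^2+2\langle\lambda,\rho\rangle=|\lambda+\rho|^2-|\rho|^2,
\end{equation*}
by the definition $\rho=\tfrac12\sum_{\alpha>0}\alpha$.

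The bookkeeping I expect to be the main obstacle is making the sign and normalization conventions mesh. The Casimir is written in the paper with a minus sign as $-\sum_i e_ie_i$ (using a real \emph{positive-definite} metric on the compact form $\g$), whereas the root-space calculation is natural on $\g_{\mathbb{C}}$ where the Killing form is indefinite on $i\mathfrak{h}$. I would therefore need to verify carefully that the sign from passing $h_k\in\g$ to $ih_k$ (so that roots take real values) cancels the explicit $-1$ in the definition of $\CAS$, so that the final eigenvalue comes out as $+(|\lambda+\rho|^2-|\rho|^2)$ rather than its negative. Once that convention check is dispatched, the calculation above is the whole story.
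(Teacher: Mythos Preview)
Your argument is the standard and correct proof of this classical fact. Note, however, that the paper does not actually prove this lemma: it is stated with a citation to \cite{knapp} and used as a black box in the proof of Theorem~\ref{AGN:thm:eckhard}. So there is no ``paper's proof'' to compare against; your write-up is essentially the argument one finds in Knapp, and the sign/normalization check you flag (matching the compact-form convention $\CAS=-\sum_i e_i e_i$ against the complexified root-space computation) is exactly the right point to verify.
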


\begin{proof}[Proof of Theorem~\ref{AGN:thm:eckhard}]
As $\ker(D)=\ker( D  ^2)$, we compute
\[
 D  ^2=- \sum_{i,j} e_i e_j \otimes e_i e_j
= - \sum_k e_k e_k \otimes 1 - \frac{1}{2}\sum_{i,j} 
[e_i, e_j ]\otimes e_i e_j \mbox{ .}
\]
%
Let
\[
 \pi : \Uu(\g) \to \CL(\g)
\]
be the lift of the Lie algebra homomorphism 
\eqref{eq:fromgtoclg} defined in Lemma~\ref{l:fromgtoclg}.
Then $D^2$ becomes
\[
D^2=\CAS \otimes 1  + 2 \sum_{k=1}^q e_k \otimes   \pi  (e_k)\mbox{ .}
\]
On the other hand, 
\begin{eqnarray}
\label{deltacas}
\Delta(\CAS)=\CAS \otimes 1 + 1 \otimes \CAS - 2
 \sum_{k=1}^q e_k \otimes e_k 
\mbox{ ,}
\end{eqnarray}
where \[\Delta:\Uu(\g)\to \Uu(\g \oplus \g)=
\Uu(\g)\otimes \Uu(\g)\]
is the co-multiplication given by the diagonal embedding,
which is a Lie algebra homomorphism.
Using 
\eqref{deltacas}, we can write $D^2$ in terms of the Casimir:
\[
D^2=   2\CAS \otimes 1 +    1 \otimes  \pi (\CAS)  -
(1\otimes  \pi)\Delta (\CAS)  \mbox{ .}
\]
Let
\[T:= 2\CAS \otimes 1 +    1 \otimes  \CAS  -
\Delta \CAS \in \Uu(g)\otimes \Uu(\g), \] so 
that
 the action
of $D^2$ on $L^2(G)\otimes S$
 coincides with the action of $(1\otimes\pi)(T)\in
\Uu(\g)\otimes \CL(\g)$.
As a
$\g\oplus \g$-representation,
$ L^2(G)\otimes S$ is a direct sum of components
$V_\lambda\otimes V_\rho$, for dominant weights $\lambda$
 by  Lemma~\ref{l:cliffrhorep}.
 We will now
determine the smallest eigenvalue of
$T$ 
on $V_\lambda\otimes V_\rho$ .
The highest weights of the irreducible components
for the diagonal action are all less than or equal to $ \lambda+\rho$.
 Thus by Lemma~\ref{lem:casscalar}, the action of $T$ on $V_\lambda \otimes V_\rho$
is bounded below by
\begin{eqnarray*} 
\lefteqn{
2\left(\lvert \lambda + \rho \rvert ^2 - \lvert \rho \rvert^2 \right)
+ \left( \lvert \rho +\rho \rvert ^2 - \lvert \rho \rvert^2 \right)
- \left( \lvert \lambda + \rho +\rho \rvert ^2  -
 \lvert \rho \rvert ^2\right)} \\
& =&
2 \lvert \lambda + \rho \rvert ^2  +2  \lvert \rho \rvert^2 
-  \lvert \lambda + 2\rho \rvert   ^2 \\
& =& 
2\lvert \lambda \rvert^2  + 4 \langle \lambda,\rho \rangle 
 + 4\lvert \rho \rvert ^2 - 
 \lvert \lambda \rvert ^2 -
4 \langle \lambda, \rho \rangle
- 4 \lvert \rho \rvert^2
 = \lvert 
\lambda \rvert ^2
 \mbox{ .} 
\end{eqnarray*}
The bound is strictly positive unless $\lambda=0$.
Furthermore, the space $V_{0}$ has multiplicity $1$ and
  embeds in $L^2(G)$  as the
constant functions.
Hence, we conclude that the $0$-eigenspace of $D$ is precisely
the space of constant functions $\mathbb{C}\otimes S$.
\end{proof}

\begin{remark}
\label{remark:smallesteigenvalue}
In the case that $G$ is semi-simple and simply connected,
the smallest non-zero eigenvalue of $D^2$ is given by 
$\lvert \rho \rvert ^2$. When the invariant metric 
$\langle \cdot,\cdot \rangle$ is chosen to be the Killing form, 
a theorem due to  Freudenthal and de Vries \cite{mein} states that
\[\lvert \rho \rvert ^2 = \frac{1}{24} 
\TR (\operatorname{Ad} \CAS)=\frac{\dim (\g)}{24} \mbox{ .}\] 
\end{remark}

\subsubsection{The Dirac Operator and the Hilbert Space}
Let $\Gamma$ be a finite graph.
To each
element $\gamma\in \Gg(\Gamma)$,
we associate to it a Hilbert space $H_\gamma:=L^2(G)\otimes S$
and an operator $D_\gamma:=D$ on $H_\gamma$. 
Intuitively speaking, we are associating each path $\gamma$
a copy of $G$, which is thought of as coming from 
holonomies of $G$-connections along $\gamma$.
Define the Hilbert space $\Hh_{\Ff_\Gamma}$ to  be
\[
\Hh_{\Ff_\Gamma}:=\bigotimes_{\gamma\in \Ff_\Gamma} H_\gamma  
\cong
  L^2(G^{|E_\Gamma|})\otimes 
S^{\otimes \lvert E_\Gamma \rvert }\] and the
Dirac operator  $\Dd_{\Ff_\Gamma}$ on $\Hh_{\Ff_\Gamma}$ to be
\begin{equation}
\label{DiraconGamma}
\Ww ( \g^{ \lvert E_\Gamma \rvert})\ni 
\Dd_{\Ff_\Gamma}:=\sum_{\gamma\in \Ff_\Gamma}D_\gamma
\end{equation}
with 
$D_\gamma$ being the
obvious extension to $\Hh_{\Ff_\Gamma}$,
where $\Ff_\Gamma$ is the set of free generators
of the groupoid $\Gg(\Gamma)$, and it has cardinality 
equals to the number of edges $|E_\Gamma|$.
When the set of free generators $\Ff_\Gamma$ is $E_\Gamma$, we
will denote the corresponding Hilbert space and Dirac operator by
$\Hh_\Gamma$ and $\Dd_\Gamma$ respectively.
At this stage, it may seem unclear if one ever needs
the case of $\Ff_\Gamma$ not being $E_\Gamma$.
It will be made clear in Section~\ref{AGN:dirac}
that this generalized definition is essential in constructing
a limit Dirac operator.


We extend the action of the Quantum Weil algebra
 $\Ww(\g ^{|E_\Gamma |})$ to
 $E\otimes \Hh_{\Ff_\Gamma}$ by letting it act as identity on
the $\MATRIX$-module $E$,
 and extend the
action of the   algebra of $\Gamma$-holonomies $\Bb_\Gamma$ on
$E\otimes \Hh_{\Ff_\Gamma}$
to act as identity on $S^{|E_\Gamma|}$.


\begin{proposition}
\label{graphtriple}
For any free generating set $\Ff_\Gamma$,
the triple
$(\Bb_\Gamma,B(E\otimes \Hh_{\Ff_\Gamma}),\Dd_{\Ff_\Gamma})$ is a spectral triple.
\end{proposition}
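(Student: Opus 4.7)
The plan is to verify the two axioms of Definition~\ref{AGN:d:semifinitespectraltriple} for $\KCYCLE=(\Bb_\Gamma,B(E\otimes\Hh_{\Ff_\Gamma}),\Dd_{\Ff_\Gamma})$. Since $\Nn=B(E\otimes\Hh_{\Ff_\Gamma})$, affiliation of $\Dd_{\Ff_\Gamma}$ is automatic once self-adjointness is established, and $\Kk$ is the usual ideal of compact operators. Writing $n:=|E_\Gamma|$, the Hilbert space becomes
\[
E\otimes\Hh_{\Ff_\Gamma}\;\cong\;L^2(G^n)\otimes S^{\otimes n}\otimes E,
\]
and $\Dd_{\Ff_\Gamma}=\sum_{k=1}^n D^{(k)}$, where $D^{(k)}=\tfrac{1}{\sqrt{-1}}\sum_{i=1}^q \partial_i^{(k)}\otimes\gamma_i^{(k)}$ with $\partial_i^{(k)}$ the left-invariant vector field $e_i$ on the $k$-th copy of $G$, and $\gamma_i^{(k)}$ the image of $e_i$ from the $k$-th copy of $\CL(\g)$ inside $\CL(\g^n)=\CL(\g)^{\hat{\otimes}n}$, acting trivially on $E$.

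First I would establish essential self-adjointness and the compact resolvent condition simultaneously by squaring $\Dd_{\Ff_\Gamma}$. The key structural observation is that, because $\CL(\g^n)$ is the \emph{graded} tensor product, Clifford generators $\gamma_i^{(k)}$ and $\gamma_j^{(l)}$ \emph{anti}-commute for $k\ne l$, while the vector fields $\partial_i^{(k)},\partial_j^{(l)}$ \emph{commute}. A short computation yields $D^{(k)}D^{(l)}+D^{(l)}D^{(k)}=0$ for $k\ne l$, so all off-diagonal terms in the square cancel in pairs and
\[
\Dd_{\Ff_\Gamma}^2\;=\;\sum_{k=1}^n (D^{(k)})^2.
\]
The summands $(D^{(k)})^2$ live on disjoint tensor factors, hence commute mutually and with $\Dd_{\Ff_\Gamma}$; their joint eigenspaces are finite-dimensional and preserved by $\Dd_{\Ff_\Gamma}$, giving essential self-adjointness on the smooth core. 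Each $(D^{(k)})^2$ is a copy of the basic Dirac Laplacian on $L^2(G)\otimes S$, which by Theorem~\ref{AGN:thm:eckhard}, Lemma~\ref{lem:casscalar}, and Peter--Weyl has discrete spectrum of finite multiplicities tending to $+\infty$. Consequently
\[
\operatorname{spec}(\Dd_{\Ff_\Gamma}^2)=\Big\{\textstyle\sum_{k=1}^n E_k \,:\, E_k\in\operatorname{spec}(D^2)\Big\},
\]
and each level $E\ge 0$ has only finite multiplicity because the constraint $0\le E_k\le E$ restricts the tuples $(E_1,\dots,E_n)$ to a finite set of finitely-degenerate eigenvalues. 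Hence $(1+\Dd_{\Ff_\Gamma}^2)^{-1/2}$ is compact.

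Next I would verify the bounded-commutator condition. A generator $h_\gamma\in\Bb_\Gamma$ acts on $E\otimes\Hh_{\Ff_\Gamma}$ as multiplication by the smooth $\MATRIX$-valued function $\nabla\mapsto\nabla(\gamma)$ on $\Aa_\Gamma\cong G^n$, trivially on $S^{\otimes n}$. Since each $\partial_i^{(k)}$ is a derivation of $C^\infty(G^n)$ commuting with the Clifford and $\MATRIX$ factors, the Leibniz rule yields
\[
[\Dd_{\Ff_\Gamma},h_\gamma]\;=\;\tfrac{1}{\sqrt{-1}}\sum_{k,i}\bigl(\partial_i^{(k)}h_\gamma\bigr)\otimes\gamma_i^{(k)},
\]
a multiplication operator by a smooth section on the compact manifold $G^n$, hence bounded. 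Linearity then extends this to all of $\Bb_\Gamma$.

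The main technical point I anticipate is the identity $\Dd_{\Ff_\Gamma}^2=\sum_k(D^{(k)})^2$: it hinges on interpreting $\Ww(\g^n)$ through the graded structure of $\CL(\g^n)$. Were the $\gamma_i^{(k)}$ from distinct factors to commute rather than anti-commute, then for $n\ge 2$ the joint spectrum of $\Dd_{\Ff_\Gamma}$ would acquire a $0$-eigenspace of infinite multiplicity through cancellations such as $(\mu,-\mu)$ across factors, and the compact-resolvent axiom would fail. Once this anti-commutation is pinned down, the remainder of the argument reduces to harmonic analysis on the compact Lie group $G^n$ together with the Leibniz rule for first-order differential operators.
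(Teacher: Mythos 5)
Your proof is correct, but it takes a more hands-on route than the paper. The paper disposes of essential self-adjointness and compact resolvent in one line by observing that $\Dd_{\Ff_\Gamma}$ is a formally self-adjoint \emph{elliptic} differential operator on the compact manifold $G^{\lvert E_\Gamma\rvert}$, and then gets the bounded commutator from $\Bb_\Gamma\subset C^\infty(\Aa_\Gamma,\MATRIX)$ exactly as you do. You instead prove the identity $\Dd_{\Ff_\Gamma}^2=\sum_k (D^{(k)})^2$ by the cross-term cancellation (commuting vector fields in $\Uu(\g^{\lvert E_\Gamma\rvert})$ against anti-commuting generators in $\CL(\g^{\lvert E_\Gamma\rvert})$) and then run Peter--Weyl on each factor. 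Both arguments rest on the same structural fact: the paper's ellipticity claim holds precisely because the principal symbol is Clifford multiplication in $\CL(\g^{\lvert E_\Gamma\rvert})$, where generators from distinct edge-copies of $\g$ anti-commute, so the symbol squares to $\lvert\xi\rvert^2$; your observation that commuting factors would destroy the compact-resolvent axiom is the operator-level shadow of the same point. What your longer route buys is twofold: it supplies the justification for the identity $\Dd^2=\sum_k a_k^2\Dd_k^2$, which the paper uses without proof in Proposition~\ref{dimSis1} and in the semi-finiteness theorem; and your explicit description of the spectrum as sums $\sum_k E_k$ makes visible exactly why compactness of the resolvent survives for finite $\lvert E_\Gamma\rvert$ but degenerates in the inductive limit (the $E_k=0$ levels each contribute a factor of $\dim(S)$ to the kernel), which is the phenomenon the semi-finite trace is later introduced to repair. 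The paper's appeal to elliptic theory is shorter but leaves that bookkeeping implicit.
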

\begin{proof}
$\Dd_{\Ff_\Gamma}$ is a formally self-adjoint elliptic differential operator on 
$G^{|E_\Gamma|}$. Hence it is essentially self-adjoint and has
 compact resolvent.
The fact that 
$[\Dd_{\Ff_\Gamma}, b]$ extends to a bounded operator on
 $\Hh_{\Ff_\Gamma}$ 
for all $b\in \Bb_\Gamma$ comes from the fact that
$\Bb_\Gamma$ is a sub-algebra of $C^\infty(\Aa_\Gamma,\MATRIX)$.
\end{proof}

\subsection{A Compactification of the Space of Connections}
\label{compactification}

\subsubsection{Systems of Graphs}
\begin{definition}
\label{AGN:defn:subgraph}
A \textbf{refinement} $\Gamma'$ of $\Gamma$,
denoted $\Gamma < \Gamma'$,
is a graph homomorphism $\Gamma \to \Gg(\Gamma')$
such that
 the image of every $e$ of $E_\Gamma$ in $\Gg(\Gamma')$ is 
a product of elements in $E_{\Gamma'}$ and 
the induced groupoid homomorphism
$\Gg(\Gamma) \dashrightarrow \Gg(\Gamma')$ is injective.
\end{definition}

\begin{definition}
\label{AGN:d:directedsystemofgraphs}
Let $\Ss=\{\Gamma_i\}_{i\in I} $ be a family of graphs
indexed by a directed set $I$, then we call $S$ a 
\textbf{directed system of graphs}
if for $\Gamma_i,\Gamma_j \in S$, one has
$\Gamma_i,\Gamma_j < \Gamma_k$ 
 for
$k\in I$ with $i,j < k$.
\end{definition}

Notice that a directed system of graphs is itself
a directed set. The original compactification 
of the space of smooth $G$-connections by Ashtekar-Lewandowski
uses the set of all embedded finite graphs as the directed set
\cite{rovelli}.

By the definition of graph refinements
$\Gamma_1 < \Gamma_2$ (Definition~\ref{AGN:defn:subgraph}),
a directed system of graphs $\{\Gamma_i\}_{i\in I} $
 gives rise to an inductive system of  groupoids
$\{\Gg(\Gamma_i)\}_{i\in I} $ with connecting morphisms being
 the groupoid inclusions.
This system of groupoids  
gives rise to a projective system of connection spaces
$\{ \Aa_{\Gamma_i}\}_{i\in I} $.
When the graphs are finite and the free generating set 
$\Ff_{\Gamma_i}$ of $\Gg(\Gamma_i)$ is chosen to be $E_{\Gamma_i}$
for each $i\in I$,
the morphisms 
\[
\Aa_{\Gamma_i} \twoheadleftarrow \Aa_{\Gamma_{j}}
\mbox{  for } i<j
\]
consist of \emph{projections}, and  \emph{multiplications} 
of Lie groups,
under the identification 
\eqref{AGN:eqn:generatoridentification}.
Hence, they are
\emph{smooth surjective submersions} and \emph{preserve the Haar measures}.
Therefore, 
$\{ \Aa_{\Gamma_i}\}_{i\in I} $ is a
projective system of topological measure spaces.

\begin{definition}
\label{AGN:d:generalconnectionspace}
Let $\Ss=\{\Gamma_i\}_{i\in I} $ a directed system
of finite graphs.
Define the \textbf{space of  generalized $G$-connections}
$\overline{\Aa}^\Ss$
to be the projective limit 
\[
\overline{\Aa}^\Ss :=\Plim \{ \Aa_{\Gamma_i}\}_{i\in I} 
\mbox{ .}
\]
\end{definition}

\begin{proposition}[\cite{limitmeasure}]
\label{AGN:p:generalconnectionspace}
The space of generalized $G$-connections
$\overline{\Aa}^\Ss$ 
is a connected compact Hausdorff measure space.
\end{proposition}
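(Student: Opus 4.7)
The plan is to verify the three properties (compact, Hausdorff, connected, and carrying a measure) separately, using the fact that the bonding maps $\Aa_{\Gamma_i} \twoheadleftarrow \Aa_{\Gamma_j}$ are continuous surjective submersions between products of copies of $G$ that preserve the Haar measures.

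First, for the compact Hausdorff property, I would realize $\overline{\Aa}^\Ss$ concretely as the subset of the product $\prod_{i\in I}\Aa_{\Gamma_i}$ consisting of coherent families, i.e.\ $(\nabla_i)_{i\in I}$ with the bonding maps sending $\nabla_j\mapsto \nabla_i$ for $i<j$. Each $\Aa_{\Gamma_i}\cong G^{|E_{\Gamma_i}|}$ is compact Hausdorff since $G$ is, so the product is compact Hausdorff by Tychonoff. The coherence condition is an intersection of closed sets (equalizers of continuous maps into the Hausdorff space $\Aa_{\Gamma_i}$), so $\overline{\Aa}^\Ss$ is a closed subspace of a compact Hausdorff space, hence compact Hausdorff.

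Second, for connectedness, I would use that each $\Aa_{\Gamma_i}\cong G^{|E_{\Gamma_i}|}$ is connected because $G$ is a connected compact Lie group and a finite product of connected spaces is connected. A standard fact about projective limits of compact Hausdorff spaces with surjective bonding maps is that the limit is connected provided every term is: if $U\subset\overline{\Aa}^\Ss$ were a proper clopen subset, then by compactness $U$ could be separated from its complement by a finite union of basic cylinder sets, which would descend to a proper clopen subset of some $\Aa_{\Gamma_i}$, contradicting its connectedness. I would invoke (or sketch) this argument directly.

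Third, for the measure, the key input is that the bonding maps preserve the normalized Haar measures, so the Haar measures $\mu_i$ on $\Aa_{\Gamma_i}$ form a projective system of Radon probability measures on compact Hausdorff spaces. The existence of a unique Radon probability measure $\mu$ on $\overline{\Aa}^\Ss$ with $(p_i)_*\mu=\mu_i$ for each projection $p_i\colon\overline{\Aa}^\Ss\to\Aa_{\Gamma_i}$ follows from the Prokhorov/Bochner extension theorem for projective limits of Radon measures; this is the content of the cited reference \cite{limitmeasure}. The main obstacle is precisely this step, since assembling a bona fide measure on the projective limit from a consistent family requires the tightness/Radon hypotheses, all of which are guaranteed here by compactness of the $\Aa_{\Gamma_i}$.
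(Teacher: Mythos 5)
Your proposal is correct: the closed-subspace-of-a-Tychonoff-product argument for compact Hausdorffness, the standard clopen-set argument for connectedness of an inverse limit of connected compact Hausdorff spaces with surjective bonding maps (using that each $\Aa_{\Gamma_i}\cong G^{|E_{\Gamma_i}|}$ is connected because $G$ is), and the Prokhorov--Bochner (equivalently, Stone--Weierstrass plus Riesz representation) construction of the projective limit of the Haar measures are exactly the right ingredients. The paper itself gives no proof of this proposition --- it is quoted directly from the cited reference --- so there is nothing to compare against beyond noting that your argument is the standard one and correctly identifies the measure-theoretic extension step as the part carried by that reference.
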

The limit measure on $\overline{\Aa}^\Ss$ 
is called the Ashtekar-Lewandowski measure \cite{rovelli}.

\begin{proposition}[\cite{groupoidconnection}]
\label{limitofhilbertishilbertoflimit}
The Hilbert space of $L^2$ functions on $\overline{\Aa}^S$
can be obtained as a limit of Hilbert spaces:
\[
L^2(\overline{\Aa}^{\Ss})\cong \Ilim_{i\in I} L^2(\Aa_{\Gamma_i})
\mbox{ .}
\]
\end{proposition}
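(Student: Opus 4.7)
The plan is to construct a canonical isometric isomorphism by exploiting the universal properties of both limits. The essential ingredients are already available in the excerpt: each connecting projection $p_{ji}\colon \Aa_{\Gamma_j}\twoheadrightarrow\Aa_{\Gamma_i}$ (for $i<j$) is a measure-preserving surjective submersion, and each canonical projection $p_{i}\colon\overline{\Aa}^{\Ss}\to\Aa_{\Gamma_i}$ pushes the Ashtekar-Lewandowski measure forward to the Haar measure on $\Aa_{\Gamma_i}$. Pulling back functions therefore produces isometric embeddings $p_{ji}^{*}\colon L^{2}(\Aa_{\Gamma_i})\hookrightarrow L^{2}(\Aa_{\Gamma_j})$ and $p_{i}^{*}\colon L^{2}(\Aa_{\Gamma_i})\hookrightarrow L^{2}(\overline{\Aa}^{\Ss})$, satisfying the compatibility $p_{j}^{*}\circ p_{ji}^{*}=p_{i}^{*}$ (dual to $p_{ji}\circ p_{j}=p_{i}$).

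Having set up the data, I would first invoke the universal property of $\Ilim L^{2}(\Aa_{\Gamma_i})$ to produce a unique bounded linear map $\Phi\colon\Ilim L^{2}(\Aa_{\Gamma_i})\to L^{2}(\overline{\Aa}^{\Ss})$ extending each $p_{i}^{*}$. Because every $p_{i}^{*}$ is isometric and, by the definition of $\Ilim$ recalled in Section~\ref{inductivelimitsofspectraltriples}, the inductive limit is the Hilbert-space closure of $\bigcup_{i}L^{2}(\Aa_{\Gamma_i})$ along the isometric connecting maps, $\Phi$ is automatically an isometric embedding with closed image. Thus injectivity is free and the entire problem reduces to showing that the image of $\Phi$ is dense.

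For density I would apply Stone--Weierstrass to the algebra of cylinder functions $\mathsf{Cyl}:=\bigcup_{i\in I}p_{i}^{*}C(\Aa_{\Gamma_i})\subset C(\overline{\Aa}^{\Ss})$. This is a unital $*$-subalgebra --- the directedness of $I$ together with $p_{j}^{*}p_{ji}^{*}=p_{i}^{*}$ shows that any two cylinder functions can be expressed on a common stage --- and it separates points of $\overline{\Aa}^{\Ss}$ essentially by the definition of the projective-limit topology, since two generalized connections with the same restriction to every $\Aa_{\Gamma_i}$ are equal. Since $\overline{\Aa}^{\Ss}$ is compact Hausdorff by Proposition~\ref{AGN:p:generalconnectionspace}, Stone--Weierstrass gives uniform density of $\mathsf{Cyl}$ in $C(\overline{\Aa}^{\Ss})$, and hence $L^{2}$-density in $L^{2}(\overline{\Aa}^{\Ss})$ via the usual approximation of $L^{2}$ functions by continuous ones on a finite regular Borel measure space.

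The main technical obstacle is verifying the measure-theoretic input: that the Ashtekar--Lewandowski measure on $\overline{\Aa}^{\Ss}$ genuinely pushes forward under each $p_{i}$ to the normalized Haar measure on $\Aa_{\Gamma_i}$. This is a Kolmogorov-type consistency/extension problem, requiring one to confirm that the compatible family of Haar measures on the $\Aa_{\Gamma_i}$ assembles into a Radon measure on the projective limit --- a fact cited as Proposition~\ref{AGN:p:generalconnectionspace} from \cite{limitmeasure}. Once this measure-theoretic foundation is granted, both the isometry claim and the Stone--Weierstrass density step are routine.
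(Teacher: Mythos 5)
The paper does not actually prove this proposition---it is quoted from \cite{groupoidconnection} without argument---so there is no internal proof to compare against. Your proposal is correct and is essentially the standard argument for the cited result: the connecting surjections are measure-preserving, so their pullbacks are compatible isometries; the universal property of the Hilbert-space inductive limit yields an isometry $\Phi$ with closed range; and surjectivity reduces to density of the cylinder algebra, which follows from Stone--Weierstrass on the compact Hausdorff space $\overline{\Aa}^{\Ss}$ (Proposition~\ref{AGN:p:generalconnectionspace}) plus density of $C(\overline{\Aa}^{\Ss})$ in $L^2$ of the finite Radon (Ashtekar--Lewandowski) measure. You also correctly isolate the one nontrivial input, namely that the consistent family of Haar measures extends to a measure on the projective limit pushing forward correctly under each $p_i$; the paper takes this from \cite{limitmeasure}, and granting it, every remaining step of your argument goes through.
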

$L^2(\overline{\Aa}^{\Ss})$ is the kinematical Hilbert space
in Loop Quantum Gravity \cite{rovelli}.

\subsubsection{Embedded Graphs}
Let $M$ be a compact manifold of dimension $d$, and
 $\Gamma$ be a graph in $M$. More precisely, 
the   vertices $V_\Gamma$ are a set of points in $M$, and
the  (directed) edges $E_\Gamma$ are a set of non-self-intersecting
piecewise smooth curves in $M$ with starting and end points
given by the source map $\SO$ and range map $\RA$ respectively.

Let $M\times G$ be the 
trivial principal $G$-bundle
over $M$ with a fixed trivialization, where $G$ is a \emph{compact} Lie group.
Let $\Aa=\Omega^1(M,\g)$ be the space of smooth $G$-connections
on $M\times G$, with
gauge action of $C^\infty(M,G)$ by \[g \cdot A=
\operatorname{Ad}_g(A)+g d g^{-1} \mbox{ .}\]
\begin{definition}
\label{AGN:d:coarsegrainedhol}
\mbox{ }

\begin{enumerate}
\item
Let $\Gamma$ be a finite embedded graph in $M$.
Define the map 
\begin{equation}
\label{AGN:eqn:coarsegrainedhol}
\HOL_\Gamma:\Aa \to \Aa_\Gamma:=\HOM (\Hhg(\Gamma),G)
\end{equation}
to be the holonomy of $\nabla$ along the
path $\gamma \in\Hhg(\Gamma)$, where $\nabla\in \Aa$
 is a smooth $G$-connection.
\item
Let $\Ss=\{ \Gamma_i \}_{i\in I} $ be 
 a directed system of finite graphs in $M$.
Denote by $\HOL$ the map
\begin{equation}
\label{AGN:eqn:hol}
\HOL : \Aa \to \overline{\Aa}^\Ss
 :=\Plim \{ \Aa_{\Gamma_i}\}_{i\in I}
\end{equation}
induced from  the maps \eqref{AGN:eqn:coarsegrainedhol}.
\end{enumerate}
\end{definition}

\begin{proposition}[\cite{agn1}]
\label{AGN:prop:denserange}  
Let $G$ be a  connected compact Lie group,
\begin{enumerate}
\item 
For any finite
graph $\Gamma$, 
the map
$\HOL_\Gamma$ \eqref{AGN:eqn:coarsegrainedhol} is a surjection.
\item
Let $\Ss=\{\Gamma_i\}_{i\in I}$ be a directed system of finite graphs
in $M$. Then
$\Aa$ has a dense image in  $\overline{\Aa}^\Ss$
under the induced map $\HOL$ \eqref{AGN:eqn:hol}.
\end{enumerate}
\end{proposition}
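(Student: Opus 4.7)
The statement has two parts. For Part~1, the plan is to reduce surjectivity to the problem of constructing a connection with prescribed holonomies along prescribed edges. By Corollary~\ref{AGN:lem:generatoridentification}, a point $\phi\in\Aa_\Gamma$ is equivalently the data of a tuple $\{g_e\}_{e\in E_\Gamma}\in G^{|E_\Gamma|}$ via $g_e:=\phi(e)$, so I need to produce $\nabla\in\Omega^1(M,\g)$ with $\HOL_\Gamma(\nabla)(e)=g_e$ for every $e\in E_\Gamma$.

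I would build $\nabla=\sum_{e\in E_\Gamma} A_e$ as a sum of local 1-forms with pairwise disjoint supports. Since the edges of $\Gamma$ are non-self-intersecting piecewise smooth curves meeting only at vertices, for each edge $e$ I shrink $e$ slightly inward from its endpoints and choose a tubular neighborhood $U_e$ of the shrunken sub-arc, arranging the $U_e$'s to be pairwise disjoint. Because $G$ is connected and compact, there is a smooth path $h_e:[0,1]\to G$ with $h_e(0)=\ID$ and $h_e(1)=g_e$; setting $\xi_e(t):=h_e(t)^{-1}h_e'(t)\in\g$, pulling $\xi_e(t)\,dt$ back to the sub-arc, and extending it to $U_e$ via a tubular retraction together with a bump function that equals $1$ along the arc produces a smooth 1-form $A_e$ supported in $U_e$. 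Since $A_e$ vanishes near the endpoints of $e$, the parallel transport along the full edge $e$ coincides with that along the shrunken sub-arc, which is exactly $g_e$. Disjointness of supports guarantees that no edge contaminates the holonomy along another.

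Part~2 is then immediate from Part~1 together with the definition of the projective-limit topology. A basic open neighborhood in $\overline{\Aa}^\Ss=\Plim_i \Aa_{\Gamma_i}$ has the form $\pi_{\Gamma_j}^{-1}(U)$ for some $j\in I$ and open $U\subset\Aa_{\Gamma_j}$. Surjectivity of $\HOL_{\Gamma_j}$ from Part~1 furnishes a smooth connection $\nabla\in\Aa$ with $\HOL_{\Gamma_j}(\nabla)\in U$, whence $\HOL(\nabla)\in\pi_{\Gamma_j}^{-1}(U)$, proving density.

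The main obstacle lies in Part~1: ensuring that the construction for one edge does not alter the holonomy along a neighboring edge with which it shares a vertex. This is precisely why one retracts each edge away from its endpoints before placing the local 1-form, so that the supports $U_e$ are truly disjoint from every other edge; flatness of $\nabla$ near vertex endpoints then guarantees that each $g_e$ depends only on the corresponding $A_e$. A secondary technicality is that the edges are merely piecewise smooth, but since kink points along any edge are isolated, the same cut-off construction applies smooth-piece by smooth-piece and the resulting holonomies multiply in the expected way.
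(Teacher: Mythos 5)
The paper does not actually prove this proposition --- it is imported from \cite{agn1} as a citation --- so there is no internal argument to compare yours against; judged on its own, your proof is the standard one and is essentially correct. Part~2 is clean: because the index set is directed, sets of the form $\pi_{\Gamma_j}^{-1}(U)$ really do form a basis of the projective-limit topology (a finite intersection $\bigcap_k \pi_{\Gamma_{j_k}}^{-1}(U_k)$ equals $\pi_{\Gamma_j}^{-1}\bigl(\bigcap_k p_{j_k j}^{-1}(U_k)\bigr)$ for any $j$ dominating all the $j_k$), and then surjectivity of each $\HOL_{\Gamma_j}$ together with $\pi_{\Gamma_j}\circ\HOL=\HOL_{\Gamma_j}$ gives density. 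Part~1 is the standard construction of bumping the logarithmic derivative $\xi_e=h_e^{-1}h_e'$ of a path onto a tubular neighborhood: since $h_e'=h_e\xi_e$ with $h_e(0)=\ID$, the parallel-transport ODE along the sub-arc reproduces $g_e$, and since $\Gg(\Gamma)$ is free on $E_\Gamma$, realizing an arbitrary tuple $(g_e)$ is exactly surjectivity onto $\HOM(\Gg(\Gamma),G)\cong G^{\lvert E_\Gamma\rvert}$. Two small points deserve to be made explicit. First, your disjointness argument silently assumes that distinct edges meet only at common endpoints, so that the compact shrunken sub-arc of $e$ has positive distance from every other edge; the paper's definition of an embedded graph does not state this, and if one edge passed through the interior of another the neighborhoods $U_e$ could not be chosen disjoint from all other edges. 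This hypothesis is standard in the Ashtekar--Lewandowski setting and should simply be recorded. Second, connectedness of $G$ enters precisely where you use it, namely to join $\ID$ to $g_e$ by a smooth path; for disconnected $G$ part~1 genuinely fails (holonomies of smooth connections on the trivial bundle land in the identity component), so it is worth flagging that this is the essential use of the hypothesis.
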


From now on, we will assume that $G$ is \emph{connected}.

\begin{definition}
\label{AGN:d:densegraph}
A system of graphs $\Ss$ is said to be 
\textbf{densely embedded} in $M$
 if  for every
point $m\in M$ there exists a coordinate 
chart $x=(x_1,\ldots,x_d)$ around $m$ such that
for all open subset $U$ containing $m$
 in this coordinate chart there exists a
collection of edges $(e_1,\ldots,e_d)\subset U$
 belonging to graphs in $\Ss$ such that:
\begin{enumerate}
\item the $e_i$ 
are straight lines with respect to the coordinate chart,
\item the tangent vectors of the $e_i$ are linearly independent.
\end{enumerate}
\end{definition}

\begin{proposition}[\cite{agn1}]
\label{AGN:prop:embedding}
Given a densely embedded system of finite graphs $\Ss$ in $M$,
the map $\HOL$ injects $\Aa$ into  $\overline{\Aa}^\Ss$.
\end{proposition}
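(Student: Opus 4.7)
The plan is to argue by contradiction. Suppose $A, A' \in \Aa$ satisfy $\HOL(A) = \HOL(A')$, so the two smooth $G$-connections produce identical holonomies along every edge in every graph of $\Ss$. If $A \neq A'$, the smooth $\g$-valued one-form $\omega := A - A'$ is nonzero at some $m \in M$; by continuity there exist $v \in T_m M$ and an open neighborhood $W$ of $m$ on which $\omega(v)$ is bounded away from $0\in \g$.

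To derive a contradiction I would exploit the dense-embedding hypothesis at $m$. Fix the coordinate chart supplied by Definition~\ref{AGN:d:densegraph} and take a shrinking sequence of open neighborhoods $U_n \subset W$ with $\operatorname{diam}(U_n) \to 0$. For each $n$, the definition furnishes $d$ straight edges $e_{1,n},\ldots,e_{d,n} \subset U_n$ coming from graphs of $\Ss$ with linearly independent tangent vectors. Write $\ell_{i,n}$ for the length, $u_{i,n}$ for the unit tangent direction, and $p_{i,n}$ for the starting point of $e_{i,n}$. The holonomy ODE in $G$, together with compactness of a neighborhood of $m$, supplies the uniform Taylor expansion
\[
\HOL_{e_{i,n}}(A) \;=\; \mathrm{I} \;-\; \ell_{i,n}\, A_{p_{i,n}}(u_{i,n}) \;+\; O(\ell_{i,n}^{2}),
\]
with remainder controlled by the $C^{1}$-norm of $A$ on a fixed compact neighborhood of $m$. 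The analogous expansion holds for $A'$, and the identity $\HOL_{e_{i,n}}(A)=\HOL_{e_{i,n}}(A')$ then forces
\[
\omega_{p_{i,n}}(u_{i,n}) \;=\; O(\ell_{i,n}).
\]

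Passing to subsequences so that $u_{i,n}\to \tilde u_i\in S^{d-1}$, and using $p_{i,n}\to m$ together with the smoothness of $\omega$, the limit yields $\omega_m(\tilde u_i)=0$ for each $i=1,\ldots,d$. The principal obstacle is to guarantee that the limit directions $\tilde u_1,\ldots,\tilde u_d$ span $T_m M$, since linear independence at each finite $n$ is an open condition which may degenerate in the limit. I would circumvent this by applying the density condition iteratively, noting that it holds not only at $m$ but at points arbitrarily close to $m$: for each of the $d$ standard basis directions $\partial/\partial x_j$, one selects at each stage the edge among the $d$ on offer whose unit tangent is nearest $\partial/\partial x_j$ and then refines $U_n$, producing a sequence of straight edges whose unit tangents converge to $\partial/\partial x_j$. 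The resulting identities $\omega_m(\partial/\partial x_j)=0$ for $j=1,\ldots,d$ force $\omega_m = 0$, contradicting the choice of $m$, and so $A = A'$ as desired.
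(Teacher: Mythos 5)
The paper does not actually prove this proposition --- it is quoted from \cite{agn1} --- so there is nothing internal to compare against; I am judging your argument on its own terms. Your overall strategy is the standard (and, as far as I know, the intended) one: reduce injectivity to the vanishing of the difference form $\omega=A-A'$ by equating first-order Taylor expansions of holonomies along short straight edges, so that $\omega_{p}(u)=O(\ell)$ for each edge of length $\ell$, starting point $p$ and unit tangent $u$, and then let the edges shrink to a point $m$. Up to and including the derivation of $\omega_{p_{i,n}}(u_{i,n})=O(\ell_{i,n})$ and the observation that the limiting directions might fail to span, the argument is sound, and you deserve credit for spotting that the degeneration of linear independence in the limit is the crux of the matter.

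However, the step you propose to close that gap does not work, and this is a genuine hole. Definition~\ref{AGN:d:densegraph} guarantees, in each neighborhood $U_n$ of $m$, only \emph{some} $d$ straight edges with linearly independent tangents; it says nothing about which directions occur, and in particular it does not provide edges with tangents close to a prescribed direction $\partial/\partial x_j$. Concretely, the edges offered in $U_n$ could have unit tangents $e_1,\ e_1+\tfrac{1}{n}e_2,\ \ldots,\ e_1+\tfrac{1}{n}e_d$ (normalized): these are linearly independent for every $n$, yet \emph{all} of them converge to $e_1$, so your rule ``pick the edge whose tangent is nearest $\partial/\partial x_j$'' yields sequences converging to $e_1$ for every $j$, and the limit identities only give $\omega_m(e_1)=0$. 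Nor can you recover the missing directions by solving the linear system at scale $n$: writing $v=\sum_i c_{i,n}u_{i,n}$ and summing the relations $\omega(u_{i,n})=O(\ell_{i,n})+O(\operatorname{diam}U_n)$ fails because the coefficients $c_{i,n}$ blow up like the reciprocal of the determinant of the frame, which the density hypothesis does not control. To repair the proof you need some quantitative non-degeneracy --- e.g.\ a uniform lower bound on $\lvert\det(u_{1,n},\ldots,u_{d,n})\rvert$, or the fact that for the concrete systems $\Ss_\square$ (and suitable triangulations) the available tangent directions at every scale contain a \emph{fixed} spanning set --- and that input has to be supplied explicitly rather than extracted from Definition~\ref{AGN:d:densegraph} as stated.
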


We give two examples of graph systems that
give rise to spaces of generalized connections
that $\Aa$ densely embeds into.

\begin{example}
\label{AGN:ex:triangulation}
Let $\Tt_1$ be a triangulation of $M$ and $\Gamma_1$ be the graph
consisting of all the edges in this triangulation with
any orientation.
Let $\Tt_{n+1}$ be the triangulation obtained by
barycentric subdividing each of the simplices in
 $\Tt_1$ $n$ times. The graph $\Gamma_{n+1}$ is
the graph consisting  the edges of $\Tt_{n+1}$ with any
orientation.
 In this way 
$\Ss_\triangle :=\{\Gamma_n \}_{n\in \naturalnumber}$
 is a directed system of finite graphs.
\end{example}

\begin{example}
\label{AGN:ex:dlattice}
 Let $\Gamma_1$ be a finite, 
$d$-dimensional lattice in $M$ and let $\Gamma_2$ be the
lattice obtained by subdividing each cell in $\Gamma_1$ into $2^d$ cells. 
Correspondingly, let $\Gamma_{n+1}$ be the lattice obtained by repeating $n$ such subdivisions
of $\Gamma_0$ . In this way 
 $\Ss_\square :=\{\Gamma_n\}_{n\in \naturalnumber}$
 is a directed system of finite graphs.
\end{example}

\subsection{The Limit of Spectral Triples on Graphs}
\label{AGN:dirac}
Given a finite graph $\Gamma$ and a
free generating set $\Ff_{\Gamma}$ for $\Gg(\Gamma)$, we saw how one builds
a spectral triple $(\Bb_\Gamma,
B(E\otimes \Hh_{\Ff_\Gamma}),\Dd_{\Ff_\Gamma})$ over it in Section~\ref{AGN:compactification}.
For systems of graph refinements like 
 $\Ss_\triangle$ and $\Ss_\square$ in Examples \ref{AGN:ex:triangulation} and \ref{AGN:ex:dlattice}, we
construct a compactification of the space
of connections.
With these examples in mind, we will restrict our directed set $I$
to the set of natural numbers $\naturalnumber$ 
and our graphs to be finite.
We would like to obtain a spectral triple over this 
compactified space of connections, and the way we proceed
is by taking the limit of some system of spectral triples
 \begin{equation}
\label{triplesystem}
(\Bb_{\Gamma_1},\Nn_{\Ff_{\Gamma_1}},\Dd_{\Ff_{\Gamma_1}}) 
\rightarrow 
(\Bb_{\Gamma_2},\Nn_{\Ff_{\Gamma_2}},\Dd_{\Ff_{\Gamma_2}}) 
\rightarrow 
(\Bb_{\Gamma_3},\Nn_{\Ff_{\Gamma_3}},\Dd_{\Ff_{\Gamma_3}}) 
\rightarrow \cdots
\mbox{ .}
\end{equation}

In the following, we will
construct the connecting morphisms
$(Q_{ij},P_{ij},\iota_{ij})$ (Definition~\ref{triplemorphism})
for the collection of 
spectral triples $\{ 
(\Bb_{\Gamma_j},\Nn_{\Gamma_j},\Dd_{\Gamma_j} )\}
_{j\in\naturalnumber}$
induced from a graph system 
$\{ \Gamma_j\}_{j\in\naturalnumber}$.

\subsubsection{Choice of Generators for $\Hhg(\Gamma)$}
\label{subsectioncoordinatechange}
It turns out that if we use the generating set $E_{\Gamma_i}$ for
each $\Gg(\Gamma_i)$,
 constructing the $\iota_{ij}$ 
intertwining the operators $\Dd_i$ and $\Dd_j$ is rather difficult.
AGN's solution is to introduce a ``change of generators''
\cite{agn1},
  which
 simplifies the work of constructing the $\iota_{ij}$.
 We will present
the choice 
of generators and 
 construct the $\iota_{ij}$ 
based on the new coordinates given to the connection spaces.

Recall that when $\Gamma_j$ is a refinement of $\Gamma_i$, we
obtain a morphism between the connection spaces
$\Aa_{\Gamma_i} \twoheadleftarrow \Aa_{\Gamma_j}$
induced by the groupoid embedding $\Hhg(\Gamma_i) \hookrightarrow 
\Hhg(\Gamma_j)$, and the
morphism consists of
  projections and multiplications
under
the identification \eqref{AGN:eqn:generatoridentification}.
One would like those connecting morphisms
 to be as simple as possible, and we have a procedure
to turn those connecting morphisms to be only composed of
projections.
We will demonstrate such a generator change procedure in the
following.
The identification of 
$\Aa_\Gamma = \HOM(\Hhg(\Gamma),G) 
\cong G^{\lvert E_\Gamma \rvert}$ 
\eqref{AGN:eqn:generatoridentification}
 is given 
by the $G$-assignments on 
 the free generating set, the set of edges.
However, one could use a different free generating set,
then $ \HOM(\Hhg(\Gamma),G)$ will be identified 
 with $G^{\lvert E_\Gamma \rvert}$ differently.
Following is  a description of the choice of
prefered free generating sets.
Let $\Ff_1:=E_{\Gamma_1}$, so
$\Ff_1$ freely generates 
$\Aa_{\Gamma_1}$.
Since $\Hhg( \Gamma_1 )\hookrightarrow 
\Hhg(\Gamma_2)$,
 we choose the 
generating set $\Ff_2$ of $\Hhg(\Gamma_2)$
 to be the set of images of $\Ff_1$ under
the groupoid  inclusion union with other paths in $\Hhg(\Gamma_2)$
 so that
$\Ff_2$ freely generates $\Hhg(\Gamma_2)$.
Similarly, choose the set $\Ff_3$ to be the set of
images of $\Ff_2$ under the groupoid inclusion $\Hhg(\Gamma_2)
\hookrightarrow \Hhg(\Gamma_3)$ union other paths in $\Hhg(\Gamma_3)$
so that $\Ff_3$ freely generates $\Hhg(\Gamma_3)$.
Repeat this procedure inductively.
Let us consider the following example:
\begin{eqnarray}
\label{edges}
\xymatrix{
\Gamma_1:&
 \ar@{-}[rrrrrrrr]_{e }|-{\object@{>}}  &  &   &&  &&&& \\
&&& &&\ar @{~>}[d]&& &&\\
&& &&&&&&\\
 \Gamma_2:& \ar@{-}[rrrr]_{e^{-}}|-{\object@{>}} &&&&
 \ar@{-}[rrrr]_{e^{+}}|-{\object@{>}} &&&&  \\
&&& &&\ar @{~>}[d]&& &&\\
&& &&&&&&\\
 \Gamma_3:&\ar@{-}[rr]_{e^{--}}|-{\object@{>}} &&
\ar@{-}[rr]_{e^{-+}}|-{\object@{>}} 
&&
 \ar@{-}[rr]_{e^{+-}}|-{\object@{>}} &&
 \ar@{-}[rr]_{e^{++}}|-{\object@{>}} 
 &&\\
&&&&& \vdots&& &&\\
 }  
\end{eqnarray}
First with
$\Ff_1 = E_{\Gamma_1} = \{e\}$,
the generating set $\Ff_2$ of $\Hhg(\Gamma_2)$ will consist
 the image of the edge $e\in E_{\Gamma_1}$, $(e^{-},e^{+})$, together
with another path, say $(e^{+})$.
Hence the generating set $\Ff_2$ for $\Hhg(\Gamma_2)$ is
$\Ff_2:= \{ (e^{-},e^{+}),(e^{+}) \}$, which
freely generates $\Hhg(\Gamma_2)$.
We repeat the process for $\Hhg(\Gamma_3)$.
Choose the generators in $\Hhg(\Gamma_3)$ to be
the image of $\Ff_2$, together with some other paths.
The image of $\{(e^{-},e^{+}),(e^{+})\}$ under the
groupoid inclusion
is $\{(e^{--},e^{-+},e^{+-},e^{++}),(e^{+-},e^{++})\}$, we
give $\Hhg(\Gamma_3)$ the set of free generators  
 $\Ff_3:=\{(e^{--},e^{-+},e^{+-},e^{++}),(e^{+-},e^{++})
,(e^{-+}),(e^{++})\}$.

The ultimate goal of this generator choice is that
we want the set of free generators for
 $\Hhg(\Gamma_{i+1})$
to be the set of free generators for $\Hhg(\Gamma_{i})$
together with some other paths, so that we 
obtain a nested sequence of generating sets
$\Ff_1\subset \Ff_2 \subset \ldots$
for the system $\{\Hhg(\Gamma_i)\}_{i\in \naturalnumber}$.
In this process of choosing the generators,
there are choices to make for the generators 
that do not come from images of the previous 
generating set, for instance we could have
chosen the free generating set of $\Hhg(\Gamma_3)$
to be 
\[\{(e^{--},e^{-+},e^{+-},e^{++}),(e^{+-},e^{++})
,(e^{-+}),(e^{-+},e^{+-},e^{++})\} .\]
We allow such a freedom here.
However, according to AGN there
is a physics reason
for choosing   
\[\{(e^{--},e^{-+},e^{+-},e^{++}),(e^{+-},e^{++})
,(e^{-+}),(e^{++})\}\]
instead,
which is discussed in \cite{agn3}.


After obtaining the nested sequence of 
generating sets for 
$\left\{ \Hhg(\Gamma_i)_{i\in \naturalnumber} \right\}$,
we identify each $\HOM(\Hhg(\Gamma_i),G)$ with $\MAP(\Ff_i,G)$,
which gives $\Aa_{\Gamma_i}\cong G^{\lvert E_{\Gamma_i} \rvert }$
a new identification. Hence, the projective system 
$\left\{ \Aa_{\Gamma_i} \right\}_{i\in\naturalnumber}$ becomes
$\left\{ G^{\lvert E_{\Gamma_i} \rvert} 
\right\}_{i\in\naturalnumber}$
with the connecting morphisms
\[
G^{\lvert E_{\Gamma_i} \rvert} 
 \twoheadleftarrow
 G^{ \lvert E_{\Gamma_{i+1}} \rvert } 
\]
given by \emph{projections} of dropping 
$\lvert E_{\Gamma_{i+1}} \rvert -\lvert E_{\Gamma_{i}} \rvert$
coordinates.
For convenience, we will assume that
the connecting morphisms drop the 
\emph{last} $\lvert E_{\Gamma_{i+1}} \rvert - 
\lvert E_{\Gamma_{i}} \rvert$ coordinates.

\subsubsection{Morphisms between Spectral Triples on Graphs}

Given two finite graphs $\Gamma$ and $\Gamma'$, where 
$\Gamma'$ is a refinement of $\Gamma$, we will now construct
a morphism $(Q,P,\iota)$ between their 
corresponding spectral triples
 $(\Bb_\Gamma,B(E\otimes\Hh_{\Ff_\Gamma}),\Dd_{\Ff_\Gamma})$
and $(\Bb_{\Gamma'},
B(E\otimes \Hh_{\Ff_{\Gamma'}}),\Dd_{\Ff_{\Gamma'}})$,
where the generating set $\Ff_{\Gamma'}$ is chosen
according to Subsection~\ref{subsectioncoordinatechange}.

The surjection $\Aa_{\Gamma}\twoheadleftarrow \Aa_{\Gamma'}$
 of connection spaces induces
 an embedding of 
$L^2$ functions
$L^2(\Aa_\Gamma) \hookrightarrow 
L^2(\Aa_{\Gamma'} )$.
Under the new identification 
$\HOM(\Hhg(\Gamma),G)\cong G^{\lvert E_\Gamma \rvert}$
 described in
Subsection~\ref{subsectioncoordinatechange},
the map 
\begin{equation}
\label{L2embedding}
L^2(G^{\lvert E_{\Gamma} \rvert } ) \hookrightarrow 
L^2(G^{\lvert E_{\Gamma'} \rvert } )
\end{equation}
is given by extending the functions by constant in the 
last $\lvert E_{\Gamma'} \rvert - \lvert E_{\Gamma} \rvert$
variables.
We then construct the map
\begin{equation}
\label{Sembedding}
 S^{\otimes \lvert E_{\Gamma} \rvert }
\hookrightarrow S^{\otimes \lvert E_{\Gamma'} \rvert } \mbox{ ,}
\end{equation}
given by embedding $S^{\otimes \lvert E_{\Gamma} \rvert }$
as the subspace $S^{\otimes \lvert E_{\Gamma} \rvert }\otimes 
 \mathbb{I}^{\otimes
 { \lvert E_{\Gamma'} \rvert -\lvert E_{\Gamma} \rvert }} \mbox{ ,}
$
where $S$ is the cyclic representation of the $C^*$-algebra
$\CL(\g)$ with (normalized) cyclic vector $\mathbb{I}$.
The product of maps \eqref{L2embedding} and 
\eqref{Sembedding} gives us the desired Hilbert space map 
\[
\iota:
\Hh_{\Ff_\Gamma} \hookrightarrow \Hh_{\Ff_{\Gamma'}} \mbox{ .}
\]

Let $\iota_* :\Ww(\g^{\lvert E_{\Gamma} \rvert } )
\to \Ww(\g^{\lvert E_{\Gamma'} \rvert } )$
be the  map induced by  embedding 
$\g^{\lvert E_{\Gamma} \rvert }$ 
as the \emph{first} $\lvert E_{\Gamma} \rvert $
coordinates in
$ \g^{\lvert E_{\Gamma'} \rvert}$.
Since such an embedding 
$\g^{\lvert E_{\Gamma} \rvert } \hookrightarrow 
\g^{\lvert E_{\Gamma'} \rvert }$ preserves both 
the \emph{metrics} and 
the \emph{Lie brackets}, $\iota_*$ is an algebra homomorphism.
$\Ww(\g^{\lvert E_\Gamma \rvert})$ acts 
on $\Hh_{\Gamma}$ as discussed in Subsection~\ref{DH}.

 \begin{lemma} 
\label{quantumweilcompatible}
Let $w\in \Ww(\g^{\lvert E_\Gamma \rvert})$
and $\iota \in \Hh_{\Gamma}$.
Then $\iota_* w ( \iota (\eta) ) = \iota (w (\eta))$ whenever
 $w(\eta)$ is defined.
\end{lemma}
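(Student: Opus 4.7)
The plan is to reduce the statement to checking compatibility on a generating set of $\Ww(\g^{|E_\Gamma|})$, since $\iota_*$ is an algebra homomorphism: if $\iota_* w_1 \circ \iota = \iota \circ w_1$ and $\iota_* w_2 \circ \iota = \iota \circ w_2$ (on appropriate domains), then the same holds for $w_1 w_2$, because $\iota_*(w_1 w_2) \iota(\eta) = \iota_*(w_1) \iota_*(w_2) \iota(\eta) = \iota_*(w_1) \iota(w_2 \eta) = \iota(w_1 w_2 \eta)$. Additivity in $w$ is immediate. Since $\Ww(\g^{|E_\Gamma|}) = \Uu(\g^{|E_\Gamma|}) \otimes \CL(\g^{|E_\Gamma|})$ is generated as an algebra by $\g^{|E_\Gamma|} \otimes 1$ (sitting in $\Uu(\g^{|E_\Gamma|})$) and $1 \otimes \g^{|E_\Gamma|}$ (sitting in $\CL(\g^{|E_\Gamma|})$), it suffices to verify the two cases $w = X \otimes 1$ with $X \in \g^{|E_\Gamma|}$ and $w = 1 \otimes c$ with $c \in \g^{|E_\Gamma|}$.

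First I would handle the Clifford side. Writing $\eta = f \otimes s$ with $f \in L^2(G^{|E_\Gamma|})$ and $s \in S^{\otimes |E_\Gamma|}$, the embedding is $\iota(\eta) = f \otimes (s \otimes \mathbb{I}^{\otimes(|E_{\Gamma'}|-|E_\Gamma|)})$, where $f$ is viewed as independent of the extra $G$-coordinates. For $c \in \g^{|E_\Gamma|}$ sitting in, say, the $i$-th factor with $i \leq |E_\Gamma|$, the map $\iota_*$ places $c$ in the same $i$-th factor of $\g^{|E_{\Gamma'}|}$, so its Clifford action on $S^{\otimes |E_{\Gamma'}|}$ affects only the $i$-th tensor slot. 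Since that slot is identical before and after $\iota$, and the new slots are acted on by the identity, one has $\iota_*(1\otimes c) \iota(\eta) = f \otimes (c\cdot s) \otimes \mathbb{I}^{\otimes (|E_{\Gamma'}|-|E_\Gamma|)} = \iota((1\otimes c)\eta)$.

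Next I would handle the $\Uu$-side. For $X \in \g^{|E_\Gamma|}$ lying in the $i$-th copy with $i \leq |E_\Gamma|$, it acts on $L^2(G^{|E_\Gamma|})$ as the left-invariant vector field in the $i$-th variable: $(X f)(g_1,\ldots,g_{|E_\Gamma|}) = \frac{d}{dt}\big|_{t=0} f(g_1,\ldots,g_i \exp(tX),\ldots,g_{|E_\Gamma|})$. The embedding $\iota_*$ puts $X$ in the same $i$-th coordinate of $\g^{|E_{\Gamma'}|}$, so $\iota_*(X)$ differentiates in the same $i$-th $G$-variable of $L^2(G^{|E_{\Gamma'}|})$. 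Because $\iota f$ is simply $f$ extended to be constant in the last $|E_{\Gamma'}|-|E_\Gamma|$ variables, and $\iota_*(X)$ differentiates in a variable untouched by that extension, the two operations commute: $\iota_*(X)(\iota f) = \iota(X f)$, and tensoring with the identity on the Clifford side completes the verification for $\eta = f \otimes s$.

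There is no serious obstacle here beyond bookkeeping: the main thing to be careful about is that the statement holds ``whenever $w(\eta)$ is defined'', so domain issues for the unbounded operators $X \in \Uu(\g^{|E_\Gamma|})$ need only be tracked pointwise on $\eta \in \DOM(w)$. By an induction on the length of a word in the generators of $\Ww(\g^{|E_\Gamma|})$, using the multiplicative step described above, the result extends from generators to arbitrary $w$, completing the proof.
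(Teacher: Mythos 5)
Your proof is correct and is essentially the paper's argument: the paper simply observes that $\g^{\lvert E_\Gamma \rvert}$ differentiates only in the first $\lvert E_\Gamma \rvert$ variables of $L^2(G^{\lvert E_{\Gamma'}\rvert})$ and acts by Clifford multiplication only on the first $\lvert E_\Gamma\rvert$ tensor slots of $S^{\otimes\lvert E_{\Gamma'}\rvert}$, so the $\Ww(\g^{\lvert E_\Gamma\rvert})$-action preserves the subspace $\Hh_{\Ff_\Gamma}$ and agrees with the original action there. Your generator-by-generator verification and the multiplicative extension via the homomorphism $\iota_*$ just spell out the details the paper leaves implicit.
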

\begin{proof}
 $\g^{\lvert E_{\Gamma} \rvert}$
differentiates in the first $\lvert E_{\Gamma} \rvert$
variables on $L^2(G^{\lvert E_{\Gamma'} \rvert},\MATRIX)$
and acts by Clifford action on the first
$\lvert E_{\Gamma} \rvert$ copies of
 $S^{\otimes \lvert E_{\Gamma'} \rvert}$, so 
the action of $\Ww(g^{\lvert E_{\Gamma} \rvert})$
on $\Hh_{\Ff_{\Gamma'}}$ preserves the subspace
$\Hh_{\Ff_{\Gamma}}$. The result follows.
\end{proof}

The surjection 
$\HOM(\Hhg(\Gamma),G) \twoheadleftarrow 
\HOM(\Hhg(\Gamma'),G)$
 induces an embedding of 
$G$-valued
maps 
\begin{equation}
\label{someembedding}
\MAP\left(\HOM(\Hhg(\Gamma),G),G\right) \hookrightarrow
\MAP\left(\HOM(\Hhg(\Gamma'),G),G \right) \mbox{ .}\end{equation}
Fix a vertex $\nu$ in $\Gamma$, 
one has 
an embedding 
$\Hhg_\nu(\Gamma)\hookrightarrow \Hhg_\nu(\Gamma')$
of the isotropy groups.
The group homomorphism $h$ \eqref{embedisotropy}
completes
the following commutative diagram:
\begin{equation}
\label{Bdiagram}
\xymatrix{
\Hhg_\nu(\Gamma) 
\ar[dd] \ar[rrr]^{h\hspace{1cm}} &&& \MAP\left(\HOM(\Hhg(\Gamma),G),G\right)  \ar[dd] \\
&& \ar@(ul,dl) & \\
\Hhg_\nu(\Gamma') \ar[rrr]^{h\hspace{1cm}} &&&
\MAP\left(\HOM(\Hhg(\Gamma'),G),G\right) 
} \mbox{ .}
\end{equation}
In particular, when \eqref{someembedding} is restricted to
the $G$-valued smooth functions \newline
$C^\infty(\Aa_\Gamma,G)\subset 
\MAP\left(\HOM(\Hhg(\Gamma),G),G\right) $.
Therefore, diagram \eqref{Bdiagram} restricts to the following
diagram:
\[
\xymatrix{
\Hhg_\nu(\Gamma) 
\ar[dd] \ar[rrr]^{h} &&&
C^\infty(\Aa_{\Gamma},G)  \ar[dd] \\
&& \ar@(ul,dl) & \\
\Hhg_\nu(\Gamma') \ar[rrr]^{h} &&&
C^\infty(\Aa_{\Gamma'},G) 
} \mbox{ .}
\]

We let $Q:\Bb_\Gamma \to \Bb_{\Gamma'}$ 
be the $*$-homomorphism
extended from the group homomorphism
$h(\Hhg_\nu(\Gamma)) \to h(\Hhg_\nu(\Gamma'))$.

Let $P:B(E\otimes \Hh_{\Ff_\Gamma}) \to 
B(E\otimes \Hh_{\Ff_{\Gamma'}})$ be 
the $*$-homomorphism
induced from $\iota:\Hh_{\Ff_\Gamma} \hookrightarrow \Hh_{\Ff_{\Gamma'}}$.
We can think of $B(E\otimes \Hh_{\Ff_\Gamma})$ as block matrices in 
$B(E\otimes \Hh_{\Ff_{\Gamma'}})$.

\begin{proposition}
\label{graphtriplemorphism}
$(Q,P,\iota )$ is a morphism between the spectral triples
$(\Bb_{\Gamma},B(E\otimes \Hh_{\Ff_\Gamma}),\Dd_{\Ff_\Gamma})$
and $(\Bb_{\Gamma'},B(E\otimes \Hh_{\Ff_{\Gamma'}}),
\Dd_{\Ff_{\Gamma'}})$.
\end{proposition}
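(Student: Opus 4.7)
The plan is to verify the three conditions of Definition~\ref{triplemorphism} in turn, since at this stage the spectral triples on graphs have not yet been equipped with a grading. The first task is to confirm that $\iota$ is a linear isometry. The $L^2$-component \eqref{L2embedding} preserves norms because the Haar measure on $G$ is normalized, so extending a function by the constant $1$ in the extra variables leaves its $L^2$-norm unchanged; the Clifford component \eqref{Sembedding} preserves norms because $\mathbb{I}$ is a normalized cyclic vector. Their tensor product is then an isometry $\Hh_{\Ff_\Gamma} \hookrightarrow \Hh_{\Ff_{\Gamma'}}$, which extends trivially to $E \otimes \Hh_{\Ff_\Gamma} \hookrightarrow E \otimes \Hh_{\Ff_{\Gamma'}}$.

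Next I would handle the compatibility of $Q$ and $P$ with $\iota$. The map $Q$ on $\Bb_\Gamma$ is, by construction, the linear extension of the group homomorphism $h(\Hhg_\nu(\Gamma)) \to h(\Hhg_\nu(\Gamma'))$; the $*$-structure is preserved because $h_\gamma^* = h_{\gamma^{-1}}$ is tracked by the isotropy-group inclusion, and the intertwining $Q(b)\,\iota = \iota\, b$ (for $b$ acting by pointwise multiplication on $L^2$) is precisely the restriction of the commutative diagram~\eqref{Bdiagram}. For $P$ I would set $P(a) := \iota\, a\, \iota^*$; since $\iota$ is an isometry, $\iota^* \iota = I$, so $P(ab) = \iota\, a\, \iota^* \iota\, b\, \iota^* = P(a)P(b)$ and $P(a^*) = P(a)^*$, making $P$ a (non-unital) $*$-homomorphism with $P(a)\,\iota = \iota\, a\, \iota^* \iota = \iota\, a$. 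This realizes $B(E\otimes \Hh_{\Ff_\Gamma})$ as a corner of $B(E\otimes \Hh_{\Ff_{\Gamma'}})$, as the text suggests.

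The heart of the argument is the intertwining of $\Dd_{\Ff_\Gamma}$ and $\Dd_{\Ff_{\Gamma'}}$ through $\iota$. The generator-choice procedure of Subsection~\ref{subsectioncoordinatechange} guarantees that $\Ff_{\Gamma'}$ contains the image of $\Ff_\Gamma$, so one can split
\begin{equation*}
\Dd_{\Ff_{\Gamma'}} \;=\; \iota_* \Dd_{\Ff_\Gamma} \;+\; \sum_{\gamma \in \Ff_{\Gamma'} \setminus \Ff_\Gamma} D_\gamma.
\end{equation*}
For $\eta \in \DOM(\Dd_{\Ff_\Gamma})$, Lemma~\ref{quantumweilcompatible} applied to $w = \Dd_{\Ff_\Gamma}$ gives $\iota_* \Dd_{\Ff_\Gamma}(\iota \eta) = \iota(\Dd_{\Ff_\Gamma} \eta)$. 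For each remaining generator $\gamma$, the corresponding $L^2(G) \otimes S$-tensor factor of $\iota(\eta)$ is $1 \otimes \mathbb{I}$, which lies in $\mathbb{C} \otimes S$ and hence in $\ker D$ by Theorem~\ref{AGN:thm:eckhard}; consequently every extra $D_\gamma$ annihilates $\iota(\eta)$. Thus $\Dd_{\Ff_{\Gamma'}} \iota \eta = \iota\, \Dd_{\Ff_\Gamma} \eta$, which also yields $\iota(\DOM(\Dd_{\Ff_\Gamma})) \subset \DOM(\Dd_{\Ff_{\Gamma'}})$. The main subtlety I expect is a clean treatment of the domain: since $\Dd_{\Ff_{\Gamma'}}$ is the closure of an elliptic operator, one should first run this computation on smooth vectors, where everything is unambiguous, and then extend by continuity using that $\iota$ is isometric and preserves smoothness.
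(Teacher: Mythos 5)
Your proposal is correct and follows essentially the same route as the paper: the same splitting $\Dd_{\Ff_{\Gamma'}} = \iota_*\Dd_{\Ff_\Gamma} + \sum_{\gamma\in\Ff_{\Gamma'}\setminus\Ff_\Gamma} D_\gamma$, the same appeal to Lemma~\ref{quantumweilcompatible} for the first summand, and the same observation that the remaining $D_\gamma$ annihilate $\iota(\Hh_{\Ff_\Gamma})$ (you justify this via the kernel description of Theorem~\ref{AGN:thm:eckhard}, though only the easy inclusion $\mathbb{C}\otimes S\subset\ker D$ is needed, which the paper leaves implicit). Your added details --- the isometry check for $\iota$, the corner formula $P(a)=\iota a\iota^*$, and the remark on first working with smooth vectors --- are harmless elaborations of what the paper states more tersely.
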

\begin{proof}
The operator $\Dd_{\Ff_{\Gamma'}} \in \Ww(\g^{\lvert E_{\Gamma'} 
\rvert})$ \eqref{DiraconGamma}
 can be written as
\[
\Dd_{\Ff_{\Gamma'}}:= 
\sum _{\gamma \in \Ff_{\Gamma'}}
 D_\gamma
=  
\sum _{\gamma \in \Ff_{\Gamma}}
 D_\gamma
+\sum _{\gamma \in \Ff_{\Gamma'}\backslash \Ff_{\Gamma}}
 D_\gamma
 \]
Furthermore, 
$\sum _{\gamma \in \Ff_{\Gamma'}\backslash \Ff_{\Gamma}}
 D_\gamma$
 acts on 
$\iota(\Hh_{\Ff_\Gamma})$ by zero and 
$\sum _{\gamma \in \Ff_{\Gamma}} 
 D_\gamma = \Dd_{\Ff_\Gamma}$
Thus, together with Lemma~\ref{quantumweilcompatible},
\[
\iota(\Dd_{\Ff_{\Gamma}}\eta) =  (\Dd_{\Ff_{\Gamma}})( \iota(\eta))
=\Dd_{\Gamma'}(\iota(\eta))
\]
for $\eta \in \Hh_{\Ff_{\Gamma}}$
 and the diagram
\[
\xymatrix{
\DOM(\Dd_{\Gamma}) \ar[dd]^{\iota}
\ar[rr]^{\Dd_{\Gamma}} && \Hh_{\Gamma}
\ar[dd]^{\iota}\\
&&\\
\DOM(\Dd_{\Gamma'})\ar[rr]^{\Dd_{\Gamma'}} && \Hh_{\Gamma'}
}
\]
commutes.

As the surjection $\Aa_{\Gamma} \twoheadleftarrow 
\Aa_{\Gamma'}$ induces the injections
\begin{eqnarray*}
C^\infty(\Aa_\Gamma,\MATRIX) & \hookrightarrow& 
C^\infty(\Aa_{\Gamma'},\MATRIX)
\\
  L^2(\Aa_\Gamma, E) 
&\hookrightarrow&  L^2(\Aa_{\Gamma'} , E) \mbox{ .}
\end{eqnarray*}
$C^\infty(\Aa_\Gamma,\MATRIX)$ as a sub-algebra of 
$C^\infty(\Aa_{\Gamma'},\MATRIX)$
preserves the subspace\newline $L^2(\Aa_\Gamma,E)
\subset L^2(\Aa_{\Gamma'},E)$.
Since $\Bb_{\Gamma}$ is a sub-algebra of 
$C^\infty(\Aa_\Gamma,\MATRIX)$ and it acts trivially on 
$S^{\otimes \lvert E_{\Gamma} \rvert}$, it preserves
the subspace $E\otimes \Hh_{\Ff_\Gamma}\subset 
E\otimes\Hh_{\Ff_{\Gamma'}}$.
Finally, by construction, $B(E\otimes \Hh_{\Ff_\Gamma})$ as
a sub-algebra of $B(E\otimes \Hh_{\Ff_{\Gamma'}})$ preserves 
the subspace $\Hh_{\Ff_\Gamma}\subset \Hh_{\Ff_{\Gamma'}}$ as well.
Hence $(Q,P,\iota)$ forms a morphism as defined in Definition~\ref{triplemorphism}
\end{proof}


\subsubsection{The System of Spectral Triples}
Given a system of finite graphs $\{\Gamma_j\}_{j\in \naturalnumber}$
and a system of generating sets $\{\Ff_j\}_{j\in\naturalnumber}$
described in Subsection~\ref{subsectioncoordinatechange},
there is associated a collection of spectral triples
$\{(\Bb_{\Gamma_j},B(E\otimes\Hh_{\Ff_j}),\Dd_{\Ff_j}\}_{j\in 
\naturalnumber}$, and between every pair of spectral triples
in the collection, there is a morphism $(Q_{ij},P_{ij},\iota_{ij})$
between them. In fact this collection of morphisms 
make the collection of spectral triples into a system of spectral triples.

\begin{proposition}
\label{graphtriplesystem}
The collection of spectral triples
$\{(\Bb_{\Gamma_j},B(E\otimes\Hh_{\Ff_j}),\Dd_{\Ff_j})\}_{j\in 
\naturalnumber}$
together with the collection of morphisms
 $\{(Q_{ij},P_{ij},\iota_{ij}\}_{i\leq j}$
forms an inductive system of spectral triples.
\end{proposition}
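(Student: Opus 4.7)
The plan is to verify the three ingredients of Definition~\ref{directtriplesystem}. First, each node $(\Bb_{\Gamma_j},B(E\otimes\Hh_{\Ff_j}),\Dd_{\Ff_j})$ is a spectral triple by Proposition~\ref{graphtriple}. Second, each $(Q_{ij},P_{ij},\iota_{ij})$ is a morphism of spectral triples: for $j=i+1$ this is exactly Proposition~\ref{graphtriplemorphism}, and for general $i<j$ one either repeats that proof verbatim (since $\Gamma_j$ is a refinement of $\Gamma_i$ with a nested generating set) or defines the morphism as the composition of consecutive ones, provided that composition is itself a morphism. The remaining and central task is the cocycle identity
\[
(Q_{jk},P_{jk},\iota_{jk})\circ(Q_{ij},P_{ij},\iota_{ij}) \;=\; (Q_{ik},P_{ik},\iota_{ik})
\]
for all $i<j<k$.

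The Hilbert-space component is where the rigidity of the construction is used. The nested generating sets $\Ff_i\subset\Ff_j\subset\Ff_k$ from Subsection~\ref{subsectioncoordinatechange} identify $\Aa_{\Gamma_m}\cong G^{|E_{\Gamma_m}|}$ in such a way that the projections $\Aa_{\Gamma_i}\twoheadleftarrow\Aa_{\Gamma_j}\twoheadleftarrow\Aa_{\Gamma_k}$ simply drop coordinates, and their composition agrees with the projection $\Aa_{\Gamma_i}\twoheadleftarrow\Aa_{\Gamma_k}$ on the nose. Hence the $L^2$-embeddings \eqref{L2embedding} which ``extend functions by constants in the extra variables'' compose strictly, and the Clifford factor \eqref{Sembedding}, being the embedding $\eta\mapsto\eta\otimes\mathbb{I}^{\otimes\cdot}$, composes strictly as well. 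Therefore $\iota_{jk}\circ\iota_{ij}=\iota_{ik}$ by direct inspection.

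The other two components reduce to the Hilbert-space case. The map $P_{ij}$ is by construction the embedding of $B(E\otimes\Hh_{\Ff_i})$ as block operators in $B(E\otimes\Hh_{\Ff_j})$ via $\iota_{ij}$; strict composition of the $\iota$'s immediately gives $P_{jk}\circ P_{ij}=P_{ik}$. The map $Q_{ij}$ is the $*$-homomorphic extension of the group homomorphism $h(\Gg_\nu(\Gamma_i))\to h(\Gg_\nu(\Gamma_j))$ induced by the isotropy inclusion $\Gg_\nu(\Gamma_i)\hookrightarrow \Gg_\nu(\Gamma_j)$ at a single base vertex $\nu$ common to all $\Gamma_j$; the functoriality of $h$ encoded in diagram~\eqref{Bdiagram} gives $Q_{jk}\circ Q_{ij}=Q_{ik}$.

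The step I expect to require the most care is checking that the composite $(Q,P,\iota)$ still satisfies the Dirac-intertwining square \eqref{intertwineiota}, since this is the only axiom of Definition~\ref{triplemorphism} that is not purely formal. However, the splitting
\[
\Dd_{\Ff_k}\;=\;\sum_{\gamma\in\Ff_i}D_\gamma\;+\;\sum_{\gamma\in\Ff_k\setminus\Ff_i}D_\gamma
\]
together with Lemma~\ref{quantumweilcompatible} shows that the second summand annihilates $\iota_{ik}(\Hh_{\Ff_i})$ while the first restricts to $\Dd_{\Ff_i}$; this is exactly the single-step intertwining argument of Proposition~\ref{graphtriplemorphism} applied with $\Gamma'=\Gamma_k$. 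Assembling these pieces, the diagram in Definition~\ref{directtriplesystem} commutes strictly and the proposition follows.
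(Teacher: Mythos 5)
Your proposal is correct and follows essentially the same route as the paper, whose proof is simply the observation that the connecting morphisms commute ``by construction''; you have merely spelled out why the construction guarantees this (nested generating sets making the $\iota$'s compose strictly, the $P$'s being induced by the $\iota$'s, and the $Q$'s composing by functoriality of $h$). Note only that Proposition~\ref{graphtriplemorphism} is already stated for an arbitrary refinement $\Gamma<\Gamma'$, so it applies directly to any pair $i<j$ without needing to compose consecutive morphisms.
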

\begin{proof}
By construction,
the diagram
\[
\xymatrix{
(\Bb_{\Gamma_i},B(E\otimes \Hh_{\Ff_i}),\Dd_{\Ff_i})
\ar[rr]^{(Q_{ij},P_{ij},\iota_{ij})} 
\ar[rdd]_{(Q_{ik},P_{ik},\iota_{ik})}    &&
(\Bb_{\Gamma_j},B(E\otimes\Hh_{\Ff_j}),\Dd_{\Ff_j}) \ar[ldd]^{(Q_{jk},P_{jk},\iota_{jk})} \\
& 
&&\\
& (\Bb_{\Gamma_k},B(E\otimes \Hh_{\Ff_k}),\Dd_{\Ff_k})
}
\]
commutes for $i\leq j \leq k$. Hence, 
$\{(\Bb_{\Gamma_j},B(E\otimes \Hh_{\Ff_j}),\Dd_{\Ff_j})\}_{j\in 
\naturalnumber}$ 
is an inductive system of spectral triples.
\end{proof}

Denote the limit triple of $\{(\Bb_{\Gamma_j},B(E\otimes \Hh_{\Ff_j}),\Dd_{\Ff_j})\}_{j\in 
\naturalnumber}$
by 
$
(\Bb,B(E\otimes \Hh),\Dd)
$. That is, $\Bb:=\Ilim \Bb_{\Gamma_j}$, 
$\Hh:=\Ilim \Hh_{\Ff_j}$, and $\Dd:=\Ilim \Dd_{\Ff_j}$.

\begin{remark}
The inductive system of Lie algebras 
$\{\g^{\lvert E_{\Gamma_j} \rvert } \}_{j\in\naturalnumber}$
with connecting morphisms given by 
embedding into $\g^{\lvert E_{\Gamma_j} \rvert }$
into the first $\lvert E_{\Gamma_j} \rvert$
coordinates of $\g^{\lvert E_{\Gamma_{j+1}} \rvert } $,
gives rise to an inductive system of 
quantum Weil algebras $
\{\Ww( \g^{\lvert E_{\Gamma_j} \rvert }) \}
_{j\in\naturalnumber}$.
Denote its limit $\Ilim \Ww( \g^{\lvert E_{\Gamma_j} \rvert }) $
by $\Ww$, then $\Dd$ is an element of $\Ww$. 
The limit quantum Weil algebra $\Ww$ can be
viewed as the limit of the universal enveloping algebras
$\Ilim \Uu(   \g^{\lvert E_{\Gamma_j} \rvert }) $
tensor with the limit of the Clifford algebras
$\Ilim \CL(   \g^{\lvert E_{\Gamma_j} \rvert }) $,
 which are
the Canonical Commutation Relation (CCR) algebra, 
and the Canonical Anti-Commutation Relation (CAR) algebra.
\end{remark}

\begin{remark}
We have assumed our directed set $I$ for the
graph system $\{\Gamma_i\}_{i\in I}$ to be the set of 
natural numbers $\naturalnumber$. Up to now, our construction
works for any countable directed set $I$, as long as we
the system of generating sets $\{\Ff_i\}_{i\in I}$ for the
system of groupoids $\{\Gg(\Gamma_i)\}_{i\in I}$ satisfy 
the relation that $\Ff_i \subset \Ff_j$ whenever
$i<j \in I$.
\end{remark}

Unfortunately, $(\Bb,B(E\otimes \Hh),\Dd)$ is not a
semi-finite spectral triple. The reason being that
$\Dd$ does not have compact resolvent.

The operator $\Dd_{\Ff_j}\in \Ww(\g^{\lvert 
E_{\Gamma_j} \rvert })$ \eqref{DiraconGamma}
can be written as
\begin{equation}
\label{eq:partialsumnotation}
\Dd_{\Ff_j}:=\sum_{k=1}^{j} \Dd_k
\end{equation}
where
\begin{equation}
\label{newnotation}
\Dd_k:= \sum _{\gamma\in \Ff_k \backslash \Ff_{k-1}} D_\gamma
\end{equation}
is the Dirac operator 
corresponds to those edges added to $\Gamma_{k-1}$
to form $\Gamma_k$ via the refinement.

We fix a real valued sequence $\left\{a_k\right\}_{k=1}^\infty$,
and scale each $\Dd_k$ by $a_k$ so that
\[
\Dd_{\Ff_j}=\sum_{k=1}^{j}
a_k \Dd_k \mbox{ .}
\]
Propositions \ref{graphtriplemorphism} and
\ref{graphtriplesystem} continuous to hold as long as
$a_k\neq 0$ for all $k$, so
the limit triple $(\Bb,\Nn,\Dd)$ is still
defined, however it now depends on the sequence $\{a_k\}$,
as $\Dd$ depends on $\{a_k\}$.
  Therefore,  the sequence $\{a_k\}$ will be included as
a dynamical variable in $\Dd$.

The physical interpretation of the sequence $\{a_k\}$
 is the following \cite{agn3}.
In the new notations described in Equations~\eqref{eq:partialsumnotation}
and~\eqref{newnotation},
the operator $\Dd_{\Ff_{i+1}}$ (viewed as an operator on $\Hh$)
is obtained from
$\Dd_{\Ff_i}$ by adding on an extra part $\Dd_{i+1}$
that corresponds to edges obtained from a graph refinement.
Those new edges are necessarily shorter. As
the Dirac operator carries a notion of the 
inverse volume in the language of noncommutative geometry
\cite{greenbook},
the part $\Dd_{i+1}$ that is added to $\Dd_{\Ff_i}$
to make $\Dd_{\Ff_{i+1}}$ is supposed to carry less
weight for it to correspond to those shorter edges.
 For graph refinements in Example~\ref{AGN:ex:triangulation}, there is not
an obvious choice of a lesser weight that one should
assign to $\Dd_{i+1}$. However,
in Example~\ref{AGN:ex:dlattice}, the refinement is 
obtained from subdividing each cell in a $d$-lattice into 
$2^d$ cells. Thus, the ``volume'' the new
edges from the refinement carry should be $\frac{1}{2^d}$
of the original. As the relation between the Dirac operator and
volume is an inversed  relation, the operator $\Dd_{i+1}$
ought to be scaled to $2^{di} \Dd_{i+1}$.
Therefore, when the lattice graphs are used, one obtains
a  weight assignment sequence $\{2^{dk}\}_{k=1}^\infty$
that depends only on the dimension of the base manifold.
From now, we will be using the property that  the 
directed set being
$\naturalnumber$.


\begin{proposition}
\label{dimSis1}
If $\dim (S) =1$, then $(\Bb,B(E\otimes\Hh),\Dd)$ is a spectral triple
whenever $a_k^2 \nearrow \infty$.
\begin{proof}
With Proposition~\ref{propertiesoflimittriple}, we only 
need to prove that $\Dd$ has compact resolvent.
That is, when the projection
$1_{[0,\lambda]} (\Dd^2)$ has finite rank for all 
$\lambda < \infty$, where $1_{[0,\lambda]}$ is the
characteristic function supported on the
closed interval $[0,\lambda]\subset \mathbb{R}$.
First we look at non-zero eigenvalues of $\Dd^2$.
Since
$a_k^2 \nearrow \infty$, every non-zero eigenvalue
of the operator $a_k^2 \Dd_k^2$
will eventually be outside of the interval $[0,\lambda]$
as $k$ increases.
 As  $\Dd^2=\sum_{k=1}^{\infty} a_k^2 \Dd_k^2$, 
there are only finitely many eigenvalues (not counting multiplicity)
 of $\Dd^2$ in $[0,\lambda]$.
Each of such non-zero eigenvalue of $\Dd^2$ in $[0,\lambda]$
 must have finite multiplicity because 
the non-zero eigenvalues of each $a_k^2\Dd_k^2$ have 
finite multiplicities.
Now it remains to show that the zero eigenspace of $\Dd^2$ has finite
dimension.
As a consequence of Theorem~\ref{AGN:thm:eckhard},
 the zero eigenspace of $\Dd^2$ 
is $E\otimes \Ilim_{i\in\naturalnumber} S^{\otimes \lvert E_{\Gamma_i}
\rvert}$,
 where $E\subset L^2(\overline{\Aa}^{\Ss}, E)$ is embedded
as constant functions,
 which has rank $\dim(E)$
 as $\dim(S)=1$. Hence $1_{[0,\lambda]} (\Dd^2)$
has finite rank as long as $E$ is a finite dimensional 
representation of $\MATRIX$ and $\Dd$ has compact resolvent.
\end{proof}
\end{proposition}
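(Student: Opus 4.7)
The plan is to use Theorem~\ref{propertiesoflimittriple} to dispose of the bounded-commutator axiom together with essential self-adjointness of $\Dd$ and its affiliation with $B(E\otimes\Hh)$, so that everything reduces to verifying the compact-resolvent condition, i.e.\ that the spectral projection $1_{[0,\lambda]}(\Dd^2)$ has finite rank for every finite $\lambda$.

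The next step is to exploit the tensor-factor structure of $\Hh$ together with the weighted decomposition $\Dd = \sum_k a_k \Dd_k$ coming from \eqref{eq:partialsumnotation} and \eqref{newnotation}. Because blocks $\Dd_k$ and $\Dd_l$ for $k\neq l$ involve Clifford generators on disjoint tensor factors, they anti-commute, so the cross terms in $\Dd^2$ cancel and $\Dd^2 = \sum_k a_k^2 \Dd_k^2$. Under the hypothesis $a_k^2 \nearrow \infty$, for each fixed $\lambda$ there are only finitely many $k$ for which $a_k^2 \Dd_k^2$ admits a non-zero eigenvalue in $[0,\lambda]$; and each such $\Dd_k^2$ is an elliptic Dirac square on a finite product of copies of $G$, so its eigenspaces are finite-dimensional. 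This bounds the non-zero part of the range of $1_{[0,\lambda]}(\Dd^2)$.

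The delicate piece is $\ker \Dd$, where Theorem~\ref{AGN:thm:eckhard} is the decisive input: on each basic factor one has $\ker D = \mathbb{C}\otimes S$. Because the Hilbert-space connecting maps $\iota_{ij}$ are built using the cyclic vector $\mathbb{I}\in S$ as in \eqref{Sembedding}, the kernel of $\Dd$ on $E\otimes\Hh$ works out to $E\otimes \Ilim_j S^{\otimes |E_{\Gamma_j}|}$. The hypothesis $\dim(S)=1$ collapses every $S^{\otimes |E_{\Gamma_j}|}$ to a one-dimensional space, so the inductive limit is one-dimensional and $\ker \Dd$ has dimension $\dim(E)$. The real obstacle, and the whole point of the statement, is precisely this kernel step: as soon as $\dim(S)\geq 2$ the tensor powers $S^{\otimes |E_{\Gamma_j}|}$ blow up and $\ker \Dd$ becomes infinite-dimensional, which is exactly what forces the construction in the next subsection to be modified so as to produce a genuine \emph{semi-finite} spectral triple.
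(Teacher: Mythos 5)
Your proposal is correct and follows essentially the same route as the paper: reduce to compact resolvent via Theorem~\ref{propertiesoflimittriple}, use $\Dd^2=\sum_k a_k^2\Dd_k^2$ with $a_k^2\nearrow\infty$ to control the non-zero spectrum in $[0,\lambda]$, and invoke Theorem~\ref{AGN:thm:eckhard} to identify $\ker\Dd$ with $E\otimes\Ilim_j S^{\otimes|E_{\Gamma_j}|}$, which is $\dim(E)$-dimensional when $\dim(S)=1$. The only addition is your explicit justification that the cross terms in $\Dd^2$ vanish because the $\Dd_k$ anti-commute (commuting enveloping-algebra parts times anti-commuting Clifford parts in $\Ww(\g^{|E_{\Gamma_j}|})$), a step the paper asserts without comment; that is a correct and worthwhile clarification, since mere commutation of the blocks would not suffice for the spectral argument.
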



\subsection{Semi-finite Spectral Triple}
\label{AGN:semifinite}

\subsubsection{Limit of Semi-finite Spectral Triples}

When $\dim(S)>1$, Proposition~\ref{dimSis1} no longer
holds, as $1_{\{0\}}(\Dd^2)=E\otimes \Ilim_{i\in\naturalnumber} S^{\otimes \lvert E_{\Gamma_i}
\rvert}$ has infinite rank.
Fortunately, there is semi-finite trace from
a von Neumann algebra other than $B(E\otimes \Hh)$ that
computes the ``dimension'' of $S$ to be $1$, and 
it in the end gives us a semi-finite spectral triple
by reducing the complication to the case of 
Proposition~\ref{dimSis1}.
We go back to the system of spectral triples
$\{(\Bb_{\Gamma_j},B(E\otimes \Hh_{\Ff_j}),\Dd_{\Ff_j})\}_{j\in 
\naturalnumber}$, and replace each $B(E\otimes \Hh_{\Ff_j})$
with a more suitable semi-finite von Neumann algebra 
$\Nn_{\Ff_j}\subset B(E\otimes \Hh_{\Ff_j})$.

Recall that  Theorem~\ref{AGN:thm:eckhard} asserts that
 $D\in \Uu(\g)\otimes \CL(\g)$, as an operator on $L^2(G)\otimes S$,
 has kernel $\ker(D)=\mathbb{C}\otimes S$. The projection onto
$\ker(D)$ is $P_1\otimes 1 \in B(L^2(G))\otimes B(S)$,
where $P_1$ denotes the projection onto the space 
of constant functions.
Denote the weak operator closure of $B(L^2(G))\otimes \CL(\g)$
by $\Nn_1$, it comes equipped with the semi-finite trace
 $\tau_1$ by extending
 $\TR\otimes \TR_{\CL}$, where $\TR$ is the operator trace on 
$B(L^2(G))$ and $\TR_{\CL}$ is the Clifford trace on $\CL(\g)$.
Then clearly $D\in\Uu(g)\otimes \CL(\g)$ is affiliated
with $\Nn_1$ and 
\[ \tau_1(P_1\otimes 1)=\TR(P_1)\cdot \TR_{\CL}(1) = 1 \mbox{ .}
\]
Therefore, $\ker(D)$ has ``dimension'' $1$ relative to the
trace $\tau_1$.
In fact $\tau_1$ is the operator trace on
 $B(L^2(G)\otimes S)$ normalized by dividing the 
dimension of $S$. Thus each eigenvalue of $D$ now has 
 ``multiplicity'' $\frac{1}{\dim(S)}$ of before.


On the Hilbert space $E\otimes\Hh_{\Ff_j}$, let
$\Nn_{\Ff_j}$ be the weak operator closure of
$B(L^2(G^{\lvert E_{\Gamma_j} \rvert },E))
\otimes \CL (\g^{\lvert E_{\Gamma_j} \rvert })$, 
which is equipped with the trace $\tau_{\Ff_j}$ by 
extending $\TR\otimes \TR_{\CL}$, where again
$\TR$ is the operator trace and $\TR_{\CL}$ is the
Clifford trace on $\CL(\g^{\lvert E_{\Gamma_j} \rvert})$.

\begin{lemma}
$(\Bb_{\Gamma_j},\Nn_{\Ff_j},\Dd_{\Ff_j})$ forms
a semi-finite spectral triple.
\end{lemma}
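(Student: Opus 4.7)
The plan is to verify in turn the four requirements of Definition~\ref{AGN:d:semifinitespectraltriple}: that $\Nn_{\Ff_j}$ is a semi-finite von Neumann algebra, that $\Bb_{\Gamma_j}\subset\Nn_{\Ff_j}$, that $\Dd_{\Ff_j}$ is affiliated with $\Nn_{\Ff_j}$ with bounded commutators with $\Bb_{\Gamma_j}$, and that $(1+\Dd_{\Ff_j}^2)^{-1/2}$ lies in the ideal $\Kk_{\Ff_j}$ of $\tau_{\Ff_j}$-compact operators. The first point is essentially a bookkeeping check: $B(L^2(G^{|E_{\Gamma_j}|},E))$ is a type I factor with semi-finite operator trace, $\CL(\g^{|E_{\Gamma_j}|})$ is a finite-dimensional $C^*$-algebra with the finite normalized Clifford trace $\TR_{\CL}$, so the tensor product weak-operator closure is a semi-finite von Neumann algebra with trace $\tau_{\Ff_j}=\TR\otimes\TR_{\CL}$.

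For the second point I would note that $\Bb_{\Gamma_j}$ is represented by $\MATRIX$-valued multiplication on $L^2(\Aa_{\Gamma_j},E)$ by \eqref{brep} and acts as the identity on $S^{\otimes |E_{\Gamma_j}|}$, so $\Bb_{\Gamma_j}$ sits in $B(L^2(G^{|E_{\Gamma_j}|},E))\otimes 1\subset\Nn_{\Ff_j}$. For the third point, $\Dd_{\Ff_j}$ lies in the quantum Weil algebra $\Ww(\g^{|E_{\Gamma_j}|})=\Uu(\g^{|E_{\Gamma_j}|})\otimes\CL(\g^{|E_{\Gamma_j}|})$, with $\Uu(\g^{|E_{\Gamma_j}|})$ acting by left-invariant differential operators on $L^2(G^{|E_{\Gamma_j}|})$; since $\Dd_{\Ff_j}$ is essentially self-adjoint (being formally self-adjoint elliptic on a compact manifold), its spectral projections lie in $\Nn_{\Ff_j}$, so it is affiliated with $\Nn_{\Ff_j}$. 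The boundedness of $[\Dd_{\Ff_j},b]$ for $b\in\Bb_{\Gamma_j}$ is exactly the content of Proposition~\ref{graphtriple}, and therefore requires no additional work.

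The substantive step is the verification of $\tau_{\Ff_j}$-compactness. A positive operator $T$ belongs to $\Kk_{\Ff_j}$ precisely when $\tau_{\Ff_j}(1_{[\varepsilon,\infty)}(T))<\infty$ for every $\varepsilon>0$, so it suffices to show that each spectral projection $1_{[0,\lambda]}(\Dd_{\Ff_j}^2)$ has finite $\tau_{\Ff_j}$-trace. Writing $\Dd_{\Ff_j}^2=\sum_{k\leq j}a_k^2\Dd_k^2$ as commuting positive operators on the tensor factors, every eigenspace of $\Dd_{\Ff_j}^2$ decomposes as a tensor product of eigenspaces of the individual $a_k^2\Dd_k^2=a_k^2\sum_{\gamma\in\Ff_k\setminus\Ff_{k-1}}D_\gamma^2$. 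Each $D_\gamma^2$ on $L^2(G)\otimes S$ has discrete spectrum with finite-dimensional non-zero eigenspaces (by ellipticity on a compact manifold), and by Theorem~\ref{AGN:thm:eckhard} its kernel is precisely $\mathbb{C}\otimes S$, whose projection has $\tau_1$-trace equal to $\TR(P_1)\cdot\TR_{\CL}(1)=1$. Hence the kernel of $\Dd_{\Ff_j}$ has finite $\tau_{\Ff_j}$-trace equal to $\dim(E)$, while for any $\lambda>0$ only finitely many tensor product eigenvalues fit inside $[0,\lambda]$ (this is where finiteness of the graph is essential: the sum defining $\Dd_{\Ff_j}^2$ is finite), and each contributes a finite-dimensional eigenspace. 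Summing gives $\tau_{\Ff_j}(1_{[0,\lambda]}(\Dd_{\Ff_j}^2))<\infty$, which is the desired compactness.

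The main obstacle here is psychological rather than technical: although $\ker\Dd_{\Ff_j}=E\otimes\mathbb{C}\otimes S^{\otimes|E_{\Gamma_j}|}$ is infinite-dimensional in the ordinary sense when $\dim S>1$, the whole point of replacing $B(E\otimes\Hh_{\Ff_j})$ by $\Nn_{\Ff_j}$ is that the Clifford trace assigns finite $\tau_{\Ff_j}$-measure to this kernel, reducing the compactness verification to the $\dim(S)=1$ argument of Proposition~\ref{dimSis1}.
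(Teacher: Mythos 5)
Your proof is correct and follows essentially the same route as the paper's, which simply asserts that $\Dd_{\Ff_j}$ is affiliated with $\Nn_{\Ff_j}$ and has $\tau_{\Ff_j}$-compact resolvent and defers the commutator bound to Proposition~\ref{graphtriple}; you are just filling in the compactness check that the paper leaves implicit. One small slip in your closing remark: for a \emph{finite} graph $\Gamma_j$ the kernel $E\otimes\mathbb{C}\otimes S^{\otimes\lvert E_{\Gamma_j}\rvert}$ is still finite-dimensional even when $\dim S>1$, so ordinary compactness of the resolvent (already established in Proposition~\ref{graphtriple}) together with the domination $\tau_{\Ff_j}\leq\TR$ on positive operators yields $\tau_{\Ff_j}$-compactness immediately --- the infinite-dimensionality of the kernel, and hence the real need for the semi-finite trace, only arises in the limit $j\to\infty$.
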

\begin{proof}
$\Dd_{\Ff_j}\in \Ww(\g^{\lvert E_{\Gamma_j} \rvert})$ 
is affiliated with $\Nn_{\Ff_j}$ and has compact resolvent 
relative to $\tau_{\Ff_j}$.
The rest of the proof proceeds similar to the proof of
Proposition~\ref{graphtriple}.
\end{proof}

For $i\leq j$, the map of embedding $\g^{\lvert E_{\Gamma_i} \rvert}$
to the first $\lvert E_{\Gamma_i} \rvert$ copies
in $\g^{\lvert E_{\Gamma_j} \rvert}$ induces 
an algebra map $\CL( \g^{\lvert E_{\Gamma_i} \rvert})
\to \CL(\g^{\lvert E_{\Gamma_j} \rvert})$.
The map of extending 
functions in $L^2(G^{\lvert E_{\Gamma_i} \rvert},E)$ by 
constants in the last $\lvert E_{\Gamma_j} \rvert
-\lvert E_{\Gamma_i} \rvert$ variables defines
a Hilbert space embedding \newline
$L^2(G^{\lvert E_{\Gamma_i} \rvert},E)
\hookrightarrow L^2(G^{\lvert E_{\Gamma_j} \rvert},E)$,
which induces an embedding   \newline
 $B(L^2(G^{\lvert E_{\Gamma_i} \rvert},E)) 
\hookrightarrow B(L^2(G^{\lvert E_{\Gamma_i} \rvert},E))$.
By extending the product of these two maps, we
obtain a   $*$-homomorphism
\begin{eqnarray}
\label{vNmap}
P_{ij}: \Nn_{\Gamma_i} \to \Nn_{\Gamma_j} \mbox{ .}
\end{eqnarray}

\begin{lemma}
$\{(\Bb_{\Gamma_j},\Nn_{\Ff_j},\Dd_{\Ff_j})\}_{j\in 
\naturalnumber}$ forms an inductive system of semi-finite spectral triples
\end{lemma}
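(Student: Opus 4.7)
The plan is to assemble the connecting morphisms $(Q_{ij},P_{ij},\iota_{ij})$ for each pair $i\leq j$ in $\naturalnumber$ and to verify the cocycle condition from Definition~\ref{directtriplesystem}. The $\iota_{ij}$ and $Q_{ij}$ are already in hand: they are exactly the Hilbert space embedding and the $*$-homomorphism on the holonomy algebras constructed in the discussion preceding Proposition~\ref{graphtriplemorphism}, and that proposition has already verified all of the compatibility diagrams involving $\iota_{ij}$, $Q_{ij}$, and the Dirac operators $\Dd_{\Ff_i}, \Dd_{\Ff_j}$. The only genuinely new object is $P_{ij}:\Nn_{\Ff_i}\to\Nn_{\Ff_j}$ defined in \eqref{vNmap}, and the remaining work is to check that it behaves as required by Definition~\ref{triplemorphism}.

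First, I would verify that $P_{ij}$ is a well-defined $*$-homomorphism. It is constructed as the extension of a tensor product of two maps: the embedding $B(L^2(G^{\lvert E_{\Gamma_i}\rvert},E))\hookrightarrow B(L^2(G^{\lvert E_{\Gamma_j}\rvert},E))$ induced by constant extension in the last $\lvert E_{\Gamma_j}\rvert-\lvert E_{\Gamma_i}\rvert$ variables, and the algebra map $\CL(\g^{\lvert E_{\Gamma_i}\rvert})\to\CL(\g^{\lvert E_{\Gamma_j}\rvert})$ induced by the isometric Lie algebra embedding into the first $\lvert E_{\Gamma_i}\rvert$ coordinates. Each factor is a normal injective $*$-homomorphism between semi-finite von Neumann algebras, so their spatial tensor product extends uniquely to a weak-operator continuous $*$-homomorphism on $\Nn_{\Ff_i}$. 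Spatiality of the construction immediately gives $P_{ij}(a)\circ\iota_{ij}=\iota_{ij}\circ a$ for every $a\in\Nn_{\Ff_i}$, which is exactly the intertwining square in condition 2 of Definition~\ref{triplemorphism}.

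Next I would check the cocycle relations $\iota_{jk}\circ\iota_{ij}=\iota_{ik}$, $Q_{jk}\circ Q_{ij}=Q_{ik}$, and $P_{jk}\circ P_{ij}=P_{ik}$ for $i\leq j\leq k$. The first two were established in Proposition~\ref{graphtriplesystem}. The third follows from the same combinatorial fact used there, namely that the chosen generating sets are nested $\Ff_i\subset\Ff_j\subset\Ff_k$ and consequently the coordinate embeddings $G^{\lvert E_{\Gamma_i}\rvert}\hookrightarrow G^{\lvert E_{\Gamma_j}\rvert}\hookrightarrow G^{\lvert E_{\Gamma_k}\rvert}$ (resp.\ the Lie algebra embeddings and their Clifford-algebra lifts) compose strictly. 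Hence both the $B(L^2)$ factor and the $\CL$ factor of $P_{ij}$ compose correctly, and their tensor product composition $P_{jk}\circ P_{ij}$ agrees with $P_{ik}$ on the norm-dense subalgebra $B(L^2(G^{\lvert E_{\Gamma_i}\rvert},E))\otimes\CL(\g^{\lvert E_{\Gamma_i}\rvert})$ and hence on all of $\Nn_{\Ff_i}$ by normality.

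Putting these pieces together, the triangle of morphisms in Definition~\ref{directtriplesystem} commutes for all $i\leq j\leq k$, so $\{(\Bb_{\Gamma_j},\Nn_{\Ff_j},\Dd_{\Ff_j})\}_{j\in\naturalnumber}$ is an inductive system of semi-finite spectral triples. The main technical subtlety is the extension of $P_{ij}$ from the elementary tensors to the weak closure $\Nn_{\Ff_i}$, which is ultimately standard because each of the two constituent embeddings is spatial and hence normal; once this is noted, everything else reduces to the computations already made in Propositions~\ref{graphtriplemorphism} and~\ref{graphtriplesystem}.
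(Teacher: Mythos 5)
Your proof is correct and takes essentially the same route as the paper's: the paper's own argument simply records the intertwining identity $P_{ij}(a)(\iota(\eta))=\iota(a(\eta))$, which holds by construction of $P_{ij}$ in \eqref{vNmap}, and defers everything else to the proofs of Propositions~\ref{graphtriplemorphism} and~\ref{graphtriplesystem}. You fill in more detail (normality of the tensor-product extension of $P_{ij}$ and the composition law $P_{jk}\circ P_{ij}=P_{ik}$ via the nested generating sets), but the substance is identical.
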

\begin{proof}
By construction, $P_{ij}(a)(\iota(\eta)) = \iota(a(\eta))$
for $a\in\Nn_{\Ff_i}$ and $\eta \in \Hh_{\Ff_i}$.
The rest of the proof proceeds similar to the proof of 
Proposition~\ref{graphtriplesystem}.
\end{proof}

Let $(\Bb,\Nn,\Dd)$ be the limit of 
$\{(\Bb_{\Gamma_j},\Nn_{\Ff_j},\Dd_{\Ff_j})\}_{j\in 
\naturalnumber}$, i.e., $\Nn$ is the von Neumann algebra
limit $\Ilim \Nn_{\Ff_j}$.

By extending 
Proposition \ref{limitofhilbertishilbertoflimit},
one has $ \Ilim L^2(G ^{\lvert E_{\Gamma_j} \rvert }, E)=
L^2(\overline{\Aa}^{\Ss},E)$.

$\Nn$ is the weak operator closure of 
$B(L^2(\overline{\Aa}^{\Ss} , E))
\otimes \Ilim \CL(\g^{\lvert E_{\Gamma_j} \rvert})$, 
it is equipped with the semi-finite trace $\tau$ by extending
$\TR\otimes \TR_{\CL}$, where 
$\TR$ is the operator trace on \newline
$B(L^2(\overline{\Aa}^{\Ss}, E))$
and $\TR_{\CL}$ is the Clifford trace 
on $ \Ilim \CL(\g^{\lvert E_{\Gamma_j} \rvert})$.

\begin{remark}
The algebra $\Nn$ is of Type $\operatorname{II_\infty}$, 
as the CAR algebra 
$\Ilim \CL(\g^{\lvert E_{\Gamma_j} \rvert})$,
being an ``infinite tensor'' of Type $\operatorname{I_n}$ algebras,
is Type $\operatorname{II_1}$.
\end{remark}

\begin{theorem}
$(\Bb,\Nn,\Dd)$ is a semi-finite spectral triple 
whenever $a_k^2 \nearrow \infty$.
\end{theorem}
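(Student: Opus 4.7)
The plan is to check that the limit triple $(\Bb,\Nn,\Dd)$ satisfies all axioms of Definition~\ref{AGN:d:semifinitespectraltriple}. Parts of this are already free: Theorem~\ref{propertiesoflimittriple} guarantees that $\Dd$ is essentially self-adjoint, affiliated with $\Nn$, and that $[\Dd,b]$ is bounded for every $b\in\Bb$. The only remaining axiom is that $(1+\Dd^{2})^{-\frac{1}{2}}\in\Kk$, i.e.\ that $\Dd$ has $\tau$-compact resolvent. Equivalently, I need to show that the spectral projection $1_{[0,\lambda]}(\Dd^{2})$ has finite $\tau$-trace for every finite $\lambda$.

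The strategy mimics the proof of Proposition~\ref{dimSis1}, but uses the semi-finite trace $\tau$ in place of the operator trace, so that the Clifford factor $\Ilim\CL(\g^{|E_{\Gamma_j}|})$ is assigned $\TR_{\CL}$-dimension $1$ instead of $\dim(S)^{\infty}$. The starting observation is that the summands $a_k \Dd_k$ defined in \eqref{newnotation} act on pairwise disjoint tensor factors of $\Hh$, so they commute and can be simultaneously spectrally decomposed. Consequently the spectrum of $\Dd^{2}$ is obtained by summing eigenvalues: any eigenvalue of $\Dd^{2}$ is of the form $\sum_{k} a_k^{2}\mu_k$, where $\mu_k\geq 0$ is an eigenvalue of $\Dd_k^{2}$.

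First I would handle the contribution from $\ker(\Dd^{2})$. By Theorem~\ref{AGN:thm:eckhard} applied to each edge, $\ker(D_\gamma)=\mathbb{C}\otimes S\subset H_\gamma$, and the projection onto this kernel has $\tau_1$-dimension $\TR(P_1)\cdot \TR_{\CL}(1)=1$. The infinite tensor product of these edge-kernels over $\gamma\in \Ilim \Ff_j$, sitting inside $\Hh$ with respect to the distinguished vacuum vectors $1\otimes\mathbb{I}$, therefore has $\tau$-dimension equal to $\prod_{\gamma} 1 = 1$; tensoring with the finite-dimensional $\MATRIX$-module $E$ contributes the factor $\dim(E)$. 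Hence $\tau(1_{\{0\}}(\Dd^{2}))=\dim(E)<\infty$. Next I would treat the non-zero spectrum of $\Dd^{2}$ in $[0,\lambda]$. Since $a_k^{2}\nearrow\infty$, for all but finitely many $k$ the smallest non-zero eigenvalue of $a_k^{2}\Dd_k^{2}$ exceeds $\lambda$ (this minimum is controlled uniformly, e.g.\ by Remark~\ref{remark:smallesteigenvalue} combined with the fact that $\Dd_k$ is an orthogonal sum of the basic Dirac operators on the finitely many new edges). So in any joint eigenstate contributing to $1_{(0,\lambda]}(\Dd^{2})$, only finitely many indices $k$ — say $k\in K\subset \naturalnumber$ with $|K|<\infty$ — can have $\mu_k\neq 0$, and for those $k$ the value $\mu_k$ lies in a finite set. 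Each such finite cluster gives a subspace isomorphic to a finite sum of eigenspaces $V_{\lambda_{k}}\otimes S$ of $D$ on the excited factors, tensored with the kernel on all other factors; each excited eigenspace has finite $\tau_{1}$-dimension, the remaining infinite product of kernels has $\tau$-dimension $1$, and we multiply by $\dim(E)$. Summing finitely many such finite contributions shows $\tau(1_{[0,\lambda]}(\Dd^{2}))<\infty$.

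The main obstacle is bookkeeping the infinite tensor product structure rigorously: one must verify that the $\tau$-trace really factors across the infinite tensor decomposition, so that an "infinite product of $1$'s" of the $\tau_1$-normalized kernel projections genuinely equals $1$ rather than collapsing or diverging, and that the joint spectral measure of the commuting family $\{a_k^{2}\Dd_k^{2}\}$ is compatible with the strong resolvent limit identified in Theorem~\ref{propertiesoflimittriple}(2) together with Theorem~\ref{thm:strongresolventthm}. This is where care is needed, but the normalization built into $\tau_{\Ff_j}=\TR\otimes \TR_{\CL}$ is precisely chosen so that the inclusions $P_{ij}$ of \eqref{vNmap} are trace-preserving on the vacuum, which is what makes the infinite product well-defined and equal to $1$.
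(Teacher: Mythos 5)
Your proposal is correct and follows essentially the same route as the paper: reduce via Theorem~\ref{propertiesoflimittriple} to showing $\tau(1_{[0,\lambda]}(\Dd^2))<\infty$, use $a_k^2\nearrow\infty$ to confine the non-zero spectrum in $[0,\lambda]$ to finitely many finite-multiplicity clusters, and use Theorem~\ref{AGN:thm:eckhard} together with the normalized trace $\TR\otimes\TR_{\CL}$ to get $\tau(1_{\{0\}}(\Dd^2))=\dim(E)$. The extra care you flag about the infinite tensor product and the trace factoring on the vacuum is a genuine point the paper treats tersely, but it does not change the argument.
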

\begin{proof}
With Proposition~\ref{propertiesoflimittriple}, we only 
need to prove that $\Dd$ has compact resolvent.
That is, 
$\tau( 1_{[0,\lambda]} (\Dd^2))<\infty$ for all 
$\lambda < \infty$, where $1_{[0,\lambda]}$ is the
characteristic function supported on the
closed interval $[0,\lambda]\subset \mathbb{R}$.

Since
$a_k^2 \nearrow \infty$, every non-zero eigenvalue
of the operator $a_k^2 \Dd_k^2$
will eventually be outside of the interval $[0,\lambda]$
as $k$ increases. As  $\Dd^2=\sum_{k=1}^{\infty} a_k^2 \Dd_k^2$,
there are only finitely many eigenvalues 
 of $\Dd^2$ in $[0,\lambda]$.
Each such non-zero eigenvalue  in $[0,\lambda]$
 has finite multiplicity relative to $\tau$ because 
the non-zero eigenvalues of each $a_k^2\Dd_k^2$ have 
finite multiplicities.
By Theorem~\ref{AGN:thm:eckhard},
$1_{\{0\}}(\Dd^2)=P_E\otimes 1$,
where $P_E$ is the projection onto the space
of constant sections $E$ in $L^2(\overline{\Aa}^\Ss,E)$,
so
 \[\tau(1_{\{0\}}(\Dd^2))=\TR(P_E) \cdot \TR_{\CL}(1)=\dim(E)
\mbox{ .}\]
 Hence $\tau(1_{[0,\lambda]} (\Dd^2))<\infty$
for all $\lambda$ and $\Dd$ has compact resolvent.
$(\Bb,\Nn,\Dd)$ is a semi-finite spectral triple.
\end{proof}

The upshot of using the semi-finite von Neumann algebra
$\Nn$ is that, ordinarily 
\[\TR( 1_{\{0\}}(\Dd^2) ) = \dim(E)\cdot \dim \left( 
\Ilim S^{\otimes \lvert E_{\Gamma_j} \rvert }  \mbox{ .}
\right) = \dim(E) \cdot \infty\]
One hopes to get a finite number out of it, so 
we divide the above quantity by $\infty$, more
precisely by the dimension of
$\Ilim S^{\otimes \lvert E_{\Gamma_j} \rvert }$.
Thus, we form the quantity 
\[
\frac{1}
{  \dim\left( S^{\otimes \lvert E_{\Gamma_j}\rvert }\right)} 
\TR_{\Ff_j}( 1_{\{0\}}(\Dd_{\Ff_j}^2) ) \mbox{ ,}
\]
where $\TR_{\Ff_j}$ is the operator trace on $B(E\otimes \Hh_{\Ff_j})$.
By taking the limit as $j\to\infty$, we get $\dim(E)$, which is finite.

This procedure has a flavor of renormalization in Quantum field theory.

\subsubsection{The $\mathbb{Z}_2$-grading}

Suppose that the cyclic representation $S$ of $\CL(\g)$ is
the Clifford  algebra $\CL(\g)$ itself, which is 
a $\mathbb{Z}_2$ graded vector space with say $+1$-eigenspace
$\CL^+(\g)$ and $-1$-eigenspace $\CL^-(\g)$.
Then it equips the Hilbert space $E\otimes\Hh_{\Ff_\Gamma}$ with
a grading so that the action of the algebra $\Bb_\Gamma$ is even and
the action of the Dirac operator $\Dd_{\Ff_\Gamma}$ is odd, making
the spectral triple $(\Bb_\Gamma,B(E\otimes \Hh_{\Ff_\Gamma}),
\Dd_{\Ff_\Gamma})$ even.
But it does not make our semi-finite spectral triple
$(\Bb_\Gamma,\Nn_{\Ff_\Gamma},\Dd_{\Ff_\Gamma})$ even, since
the grading operator is \emph{not} an element of
the von Neumann algebra $\Nn_{\Ff_\Gamma}$.
However, if $\g$ is even dimensional of dimensional $2k$, one could equip 
$S$ with  the $\mathbb{Z}_2$ grading given by the chirality element
$(\sqrt{-1})^{k} e_1\cdots e_{2k}$
in $\CL(\g)$ for the basis $\{e_i\}_{i=1}^{2k} \subset \g$. The
chirality element is self-adjoint and squares to $1$. It
induces a grading operator $\chi_{\Ff_\Gamma}$
of $E\otimes \Hh_{\Ff_\Gamma}$ in $\Nn_{\Ff_\Gamma}$ so that
$\chi_{\Ff_\Gamma}$ anti-commutes with $\Dd_{\Ff_\Gamma}$ and commutes with
any element in $\Bb_{\Gamma}$. As a result, 
$(\Bb_\Gamma,\Nn_{\Ff_\Gamma},\Dd_{\Ff_\Gamma})$
is an \emph{even} semi-finite spectral triple with respect to
$\chi_{\Ff_\Gamma}$
 if $G$ is even dimensional.

With notations as before, we let 
$(\Bb_{\Gamma_j},\Nn_{\Ff_j},\Dd_{\Ff_j})_{j\in I}$ be a system
of semi-finite spectral triples. Suppose that $G$ is 
\emph{even dimensional},
then $(\Bb_{\Gamma_j},\Nn_{\Ff_j},\Dd_{\Ff_j})$
 is an even semi-finite spectral triple with respect to the
grading operator $\chi_{\Ff_j}\in \Nn_{\Ff_j}$.
The map $P_{j\infty}:\Nn_{\Ff_j}\to \Nn$ sends each 
$\chi_{\Ff_j}\in \Nn_{\Ff_j}$ to $\Nn$, so we obtain 
a net of operators $\{\chi_{\Ff_j}\in \Nn_{\Ff_j}\}_{j\in I}$
in $\Nn$. Denote by $\chi$ its strong operator limit.
Then by Theorem~\ref{propertiesoflimittriple},
$(\Bb,\Nn,\Dd)$ is an \emph{even} semi-finite spectral triple
with respect to $\chi\in\Nn$.

\begin{proposition}
$(\Bb,\Nn,\Dd)$ is an even (resp. odd) semi-finite
spectral triple if the Lie group is even (resp. odd)
dimensional.
\end{proposition}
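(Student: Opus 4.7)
The plan is to leverage the inductive-limit machinery of Theorem~\ref{propertiesoflimittriple}. The preceding theorem has already guaranteed that $(\Bb,\Nn,\Dd)$ is a semi-finite spectral triple (under $a_k^2\nearrow\infty$), so the only new content is the $\mathbb{Z}_2$-grading, and the two parities are handled in parallel by noting what structure is or is not inherited from the finite stages.

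In the odd case, no chirality element is introduced on the $S$ factor, so each $(\Bb_{\Gamma_j},\Nn_{\Ff_j},\Dd_{\Ff_j})$ is merely an odd semi-finite spectral triple; the connecting morphisms $(Q_{ij},P_{ij},\iota_{ij})$ satisfy the ``odd morphism'' version of Definition~\ref{triplemorphism} (conditions 1--3 only), and the limit is then automatically an odd semi-finite spectral triple with no further work.

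In the even case ($\dim\g=2k$), the preceding subsection has already placed on each $(\Bb_{\Gamma_j},\Nn_{\Ff_j},\Dd_{\Ff_j})$ a grading $\chi_{\Ff_j}\in\Nn_{\Ff_j}$ arising from the chirality element $(\sqrt{-1})^k e_1\cdots e_{2k}$ applied to each of the $|E_{\Gamma_j}|$ tensor factors of $S$. The single non-routine step is to verify condition 4 of Definition~\ref{triplemorphism}, i.e.\ that $\chi_{\Ff_j}\circ\iota_{ij}=\iota_{ij}\circ\chi_{\Ff_i}$. Since $\iota_{ij}$ tensors a vector in $\Hh_{\Ff_i}$ with $\mathbb{I}^{\otimes(|E_{\Gamma_j}|-|E_{\Gamma_i}|)}$, this compatibility reduces to the cyclic vector $\mathbb{I}\in S$ being a $+1$-eigenvector of the chirality action; this is a constraint on the choice of $S$ and the normalization of $\mathbb{I}$, and it is the point I expect to be the main obstacle.

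Once the parity compatibility of the cyclic vector is in hand, the family $\{(\Bb_{\Gamma_j},\Nn_{\Ff_j},\Dd_{\Ff_j},\chi_{\Ff_j})\}_{j\in\naturalnumber}$ is an inductive system of even semi-finite spectral triples, and Theorem~\ref{propertiesoflimittriple}(4) applies directly: the net $\{P_{j\infty}(\chi_{\Ff_j})\}$ has a strong operator limit $\chi:=\Ilim\chi_{\Ff_j}\in\Nn$, which anti-commutes with $\Dd$ and commutes with every $b\in\Bb$; the identities $\chi^*=\chi$ and $\chi^2=1$ pass to the strong operator limit from their counterparts at each finite stage. Combined with the previously established semi-finite spectral triple property, this yields the even semi-finite spectral triple structure on $(\Bb,\Nn,\Dd)$, completing the proof.
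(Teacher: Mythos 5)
Your argument is the paper's argument: when $\dim\g$ is even, grade each finite-stage triple $(\Bb_{\Gamma_j},\Nn_{\Ff_j},\Dd_{\Ff_j})$ by the chirality element, take the strong operator limit $\chi=\Ilim\chi_{\Ff_j}$ of the net $\{P_{j\infty}\chi_{\Ff_j}\}$, and invoke Theorem~\ref{propertiesoflimittriple}(4); the odd case requires nothing new. The one place you go beyond the paper is in isolating the intertwining condition $\chi_{\Ff_j}\circ\iota_{ij}=\iota_{ij}\circ\chi_{\Ff_i}$ (condition 4 of Definition~\ref{triplemorphism}), which the paper never checks, and your reduction of it to ``$\mathbb{I}$ is a $+1$-eigenvector of the chirality action'' is exactly right: since $\iota_{ij}$ pads by $\mathbb{I}^{\otimes(|E_{\Gamma_j}|-|E_{\Gamma_i}|)}$, the net $\{P_{j\infty}\chi_{\Ff_j}\}$ does not even stabilize on $\Hh_{\Ff_i}$ unless $\omega\mathbb{I}=\mathbb{I}$. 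Your suspicion that this is a genuine constraint is correct: it fails for both of the paper's explicit choices of $(S,\mathbb{I})$ --- for $S=\CL(\g)$ with $\mathbb{I}=1$ one has $\omega\cdot 1=\omega\neq\pm1$, and for $S=\wedge^\bullet(\g)$ with $\mathbb{I}=1\in\wedge^0(\g)$ the Clifford action sends $1$ to a multiple of $e_1\wedge\cdots\wedge e_{2k}$. It is, however, satisfiable: take $S$ to be the irreducible spinor module (on which every unit vector is cyclic) and choose $\mathbb{I}$ in the $+1$-eigenspace of the chirality element; with that choice your conditional step closes and the rest of your argument, including passing $\chi^*=\chi$ and $\chi^2=1$ to the strong limit, is routine and matches the paper.
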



\section{JLO Theory}

\label{ch:jlo}

JLO theory due to Jaffe, 
Lesniewski and Osterwalder \cite{jlo}
 is a cohomological Chern character that assigns a cocycle,
hence a class, in entire cyclic
cohomology to a weakly $\theta$-summable 
semi-finite spectral triple.
The cohomology class 
is homotopy invariant, thus the JLO
character descends
to a map from $\KKEI$-homology classes to 
entire cyclic cohomology classes.
The cocycle computes the Type II Fredholm index 
and spectral flow of the operator $\Dd$, hence
it provides an index formula in the setting of noncommutative
geometry.
Here we will give a summary of the JLO theory. For details
please refer to \cite{l}.

\subsection{Entire Cyclic Cohomology}
Let $A$ be a unital Banach algebra over $\mathbb{C}$  and
$C^n(A)$ to be the space of
$n$-linear functionals over $A$.
Define the operators
 $b: C^n(A) \rightarrow  C^{n+1}(A)$ and $B: C^n(A)\rightarrow  C^{n-1}(\Bb)$  by the formulas
\begin{eqnarray*}
 (b\phi_n)\left(a_0,\ldots,a_n\right)&:=&\sum^{n-1}_{j=0} (-1)^{j} \phi_n(a_0,\ldots,a_{j}a_{j+1},\ldots,a_n) 
+ (-1)^{n}\phi_n(a_n a_0,a_1,\ldots,a_{n-1}) \mbox{ ,}\\
(B\phi_n)\left(a_0,\ldots,a_n\right)&:=&\sum^n_{j=0}(-1)^{nj}
\phi_n(1,a_j,\ldots,a_n,a_0,\ldots,a_{j-1}) \mbox{ ,}
\end{eqnarray*}
for $\phi\in C^n(\Bb)$.

Simple calculation shows that $b^2=B^2=Bb+bB=0$.
Therefore $(b+B)^2=0$ and we get the following bicomplex:
\[
    {
\xygraph{
!{<0cm,0cm>;<2cm,0cm>:<0cm,2cm>::}
!{(1,5) }*+{\vdots}="15"
!{(2,5) }*+{\vdots}="25"
!{(3,5) }*+{\vdots}="35"
!{(4,5) }*+{\vdots}="45"
!{(0,4) }*+{\cdots}="04"
!{(1,4) }*+{C^3(A)}="14"
!{(2,4) }*+{C^{2}(A)}="24"
!{(3,4) }*+{C^{1}(A)}="34"
!{(4,4) }*+{C^{0}(A)}="44"
"15":"14"^{B}
"25":"24"^{B}
"35":"34"^{B}
"45":"44"^{B}
"14":"04"^{b}
"24":"14"^{b}
"34":"24"^{b}
"44":"34"^{b}
!{(0,3) }*+{\cdots}="03"
!{(1,3) }*+{C^2(A)}="13"
!{(2,3) }*+{C^{1}(A)}="23"
!{(3,3) }*+{C^{0}(A)}="33"
"14":"13"^{B}
"24":"23"^{B}
"34":"33"^{B}
"13":"03"^{b}
"23":"13"^{b}
"33":"23"^{b}
!{(0,2) }*+{\cdots}="02"
!{(1,2) }*+{C^1(A)}="12"
!{(2,2) }*+{C^{0}(A)}="22"
"13":"12"^{B}
"23":"22"^{B}
"12":"02"^{b}
"22":"12"^{b}
!{(0,1) }*+{\cdots}="01"
!{(1,1) }*+{C^0(A)}="11"
"12":"11"^{B}
"11":"01"^{b}
!{(4,3) }*+{}="a"
!{(2,1) }*+{}="b"
"a":"b"_{(b+B)}
}
}\mbox{ .}
\]
The space $C^\bullet(A):=\prod_{n=0}^\infty C^{n}(A)$  has a natural
$\mathbb{Z}_2$ grading given by 
 $C^+(A)=\prod_{k=0}^\infty C^{2k}(A)$ and
 $C^-(A)=\prod_{k=0}^\infty C^{2k+1}(A)$.
We get a cochain complex $\left(C^\bullet(A), b+B \right)$ with the odd boundary map
$b+B$. However, the cohomology of this cochain complex is trivial.
In order to make it nontrivial, we 
restrict our attention to cochains 
that satisfy the following growth condition as $n$ varies.

\begin{definition}
\label{defnentire}
Define
\[
 C^\bullet _\omega (A):=\left\{ \phi _\bullet \in C^\bullet (A): 
\sum_{n=0}^\infty  \Gamma(\frac{n}{2}) \left\lVert \phi_{n}\right\rVert
z^{n}
   \mbox{ is an entire function in $z$ }  \right\}
\]
where $\left\lVert\phi_n\right\rVert:=\sup
\{|\phi_n(a_0,\ldots,a_n)|: \left\lVert a_j\right\rVert \leq 1  \mbox{ }\forall j\}$.
We call the cochain $\phi_\bullet$ 
\textbf{entire} if
$\phi_\bullet \in C^\bullet _\omega (A)$. It is easy to see that if 
$\phi_\bullet$ is entire, so is $(b+B)\phi_\bullet$. Hence
 $\left(C^\bullet_\omega(A),b+B\right)$ is a subcomplex of $\left(C^\bullet(A), b+B \right)$.
The cohomology defined by $\left(C^\bullet _\omega(A),b+B\right)$ is the \textbf{entire cyclic cohomology} of $A$, denoted 
$\HE^\bullet(A)=\HE^+(A)\oplus \HE^-(A)$. 
\end{definition}


\subsection{The JLO Character}
\label{s:jlochar}
The JLO character  assigns cocycles in entire cyclic cohomology to 
semi-finite spectral triples satisfying an appropriate
 summability condition. We begin by defining the summability conditions of 
main concern.

\begin{definition}
 An semi-finite spectral triple $\KCYCLE$ is:
\begin{itemize}
\item[(a)]
\textbf{$p$-summable} if $\tau\left((1+\Dd^2)^{-p/2}\right) <\infty$ ;
\item[(b)]
\textbf{$\theta$-summable} if $
 \tau(e^{-t\Dd^2}) < \infty $ for all $t>0$;
\item[(c)]
\textbf{weakly $\theta$-summable} 
if $ \tau(e^{-t\Dd^2}) < \infty $ for some $0<t<1$.

\end{itemize}
\end{definition}

Observe that 
$p$-summability implies $\theta$-summability,
which in turn implies weak $\theta$-summability.

Let $\Delta_n:=\{(t_1,\ldots,t_n)\in \mathbb{R}^n : 0\leq t_1 \leq \cdots \leq t_n \leq 1\}$ be the
standard $n$-simplex and $d^nt=dt_1\cdots dt_n$ is the standard Lesbeque measure on $\Delta_n$ with
volume $\frac{1}{n!}$.
\begin{definition}
\label{JLO}
\mbox{ }
\begin{enumerate}
\item 
 The  \textbf{odd JLO character}
 $\Ch^-(\Dd)\in C^-( A )$ 
of a weakly $\theta$-summable \textit{odd} 
semi-finite spectral triple
$\KCYCLE$ is defined to be

\[
 \Ch^-(\Dd):=\sum_{k=0}^\infty \Ch^{2k+1}(\Dd)\mbox{ ,}
\]
\item
The \textbf{even JLO character} $\Ch^+(\Dd)\in C^+(A)$
of a weakly $\theta$-summable \textit{even} 
semi-finite spectral triple
 $\KCYCLE$ is defined to be
\[
 \Ch^+(\Dd):=\sum_{k=0}^\infty \Ch^{2k}(\Dd)\mbox{ ,}
\]
where
$A$ is the closure of $\Bb$ with respect to the norm
$\lVert \cdot \rVert + \lVert [\Dd,\cdot] \rVert$ and 

 \[
  \left(\Ch^{n}(\Dd),(a_0,\ldots,a_{n})_n\right):=
\int_{\Delta_n}\tau 
\left( \chi 
a_0 e^{-t_1 \mathcal{D}^2 }
[\Dd,a_1] e^{-(t_2-t_1) \Dd^2 }
\ldots
[\Dd,a_n] e^{-(1-t_n) \Dd^2 } 
\right)d^nt
 \]
with the convention $\chi=1$ if 
 $n$
 is odd.
\end{enumerate}
\end{definition}


\begin{theorem}[\cite{l}]
\label{jloisacocycle}
The JLO character $\Ch^\bullet(\Dd)$
 is an entire cyclic cocycle in $\HE^\bullet( A)$.\newline
More specifically,
\[
 \Ch^\bullet(\Dd)\in C^\bullet _\omega( A)\hspace{0.3cm} \mbox{ and }\hspace{0.3cm}
(b+B)\Ch^\bullet(\Dd)=0\mbox{ .}
\]

\end{theorem}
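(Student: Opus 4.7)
The proof naturally splits into two independent claims: the entire growth bound on the cochain $\Ch^\bullet(\Dd)$, and the cocycle relation $(b+B)\Ch^\bullet(\Dd)=0$. I would handle them in that order.

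For the entire bound, the plan is to control the $\tau$-trace norm of the operator appearing under each simplex integral. The inputs are the non-commutative H\"older inequality for the Schatten-type ideals of $(\Nn,\tau)$, the fact that each $[\Dd,a_i]$ is a bounded element of $\Nn$ (guaranteed by the earlier proposition of \cite{cprs2} together with the defining property of $\Bb$), and weak $\theta$-summability. Because the simplex variables $t_1,\dots,t_n$ together with the tail $1-t_n$ partition the unit interval, one can allocate H\"older exponents so that each $e^{-(t_{j+1}-t_j)\Dd^2}$ is placed in an $L^{p_j}$-space with $\sum 1/p_j=1$, producing the pointwise estimate
\[
\bigl|\tau\bigl(\chi a_0 e^{-t_1\Dd^2}[\Dd,a_1]\cdots[\Dd,a_n] e^{-(1-t_n)\Dd^2}\bigr)\bigr|
\leq \|a_0\|\prod_{i=1}^n\|[\Dd,a_i]\|\cdot \tau(e^{-\Dd^2}).
\]
Rescaling $\Dd\mapsto \sqrt{t_0}\,\Dd$ for some $t_0<1$ with $\tau(e^{-t_0\Dd^2})<\infty$ reduces the weak case to one where the above right-hand side is finite. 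Integrating over $\Delta_n$ contributes the factor $1/n!$, yielding $\|\Ch^n(\Dd)\|\leq (C^n/n!)\prod_i \|a_i\|_\Dd$ for the norm $\|\cdot\|_\Dd:=\|\cdot\|+\|[\Dd,\cdot]\|$. Since $\Gamma(n/2)/n!$ decays faster than any geometric sequence, the series $\sum_n \Gamma(n/2)\|\Ch^n(\Dd)\|z^n$ is entire in $z$, placing $\Ch^\bullet(\Dd)$ in $C^\bullet_\omega(A)$.

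For the cocycle relation, the engine is the Duhamel identity
\[
[a, e^{-s\Dd^2}] = -\int_0^s e^{-u\Dd^2}[\Dd^2,a]\,e^{-(s-u)\Dd^2}\,du,
\]
combined with the graded Leibniz rule $[\Dd^2,a]=\Dd[\Dd,a]+[\Dd,a]\Dd$. I would compute $(b\Ch^{n-1})(a_0,\dots,a_n)$ by expanding each $[\Dd,a_ja_{j+1}]=[\Dd,a_j]a_{j+1}+a_j[\Dd,a_{j+1}]$ in Definition~\ref{JLO}, and separately compute $(B\Ch^{n+1})(a_0,\dots,a_n)$ by inserting a $1$ (for which $[\Dd,1]=0$) and using cyclic reindexing. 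The terms produced by $B$ acquire an extra exponential factor, and I would apply the Duhamel identity \emph{in reverse} to each $B$-summand, splitting the factor $e^{-s\Dd^2}$ at the appropriate insertion site. The resulting telescope, combined with trace cyclicity and the anti-commutation of $\chi$ with $\Dd$, matches the terms coming from $b$ sign-by-sign and term-by-term, so $(b\Ch^{n-1}+B\Ch^{n+1})(a_0,\dots,a_n)=0$.

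The main obstacle is bookkeeping in the semi-finite setting, where unbounded operators are affiliated with $\Nn$ rather than acting on a fixed Hilbert space with a scalar trace. Every manipulation involving Duhamel, cyclicity, or approximation of $\Dd$ must be justified by checking that the intermediate operators lie in the appropriate $L^p(\Nn,\tau)$ spaces, not just that they are densely defined. Weak $\theta$-summability provides exactly the slack for this: after the rescaling $\Dd\mapsto\sqrt{t_0}\,\Dd$ with $t_0<1$, there remains a margin $1-t_0$ which can be split among the extra resolvent factors produced by each application of Duhamel, keeping every trace in sight finite. Once this is set up, both the growth estimate and the cocycle identity reduce to the classical JLO manipulations carried out formally, with trace cyclicity replaced by $\tau$-cyclicity and operator-trace estimates replaced by $\tau$-H\"older inequalities.
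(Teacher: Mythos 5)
The paper does not actually prove this theorem: it is imported verbatim from \cite{l} (the section explicitly defers all details to that reference), so there is no in-paper proof to compare against. Your outline — the H\"older/heat-kernel estimate over the simplex for membership in $C^\bullet_\omega(A)$, plus the Duhamel expansion matching the $b$- and $B$-terms for the cocycle identity, with the weak $\theta$-summability margin absorbing the extra exponential factors in the semi-finite $L^p(\Nn,\tau)$ bookkeeping — is precisely the standard argument used in \cite{l} and in the original JLO/Getzler--Szenes treatments, and it is correct.
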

As a result, the JLO character defines an entire cyclic
 cohomology class called the JLO class.





\subsection{Homotopy Invariance of the JLO Class}
\label{homotopyinvarianceJLO}

Suppose that $\Dd_t$ is a $t$-parameter family 
of operators so that it
defines a differentiable family of  weakly $\theta$-summable
semi-finite spectral triples $\left(\Bb,\Nn,\Dd_t  \right)$.
Namely, $\Dd_t$  is a $t$-parameter family of self-adjoint operators on $\Hh$ 
with common domain of definition
so that the following is satisfied:
\begin{itemize}
\item
$\Dd_t$ is  affiliated with $\Nn$ for all $t$,
\item
For all
 $a\in \Bb$, $[\Dd_t , a]$ is 
a  norm-differentiable family of operators in  $  \Nn$  ,
 and 
there is a constant $C$ for each compact interval such that
$ \left\lVert [\Dd_t,a]\right\rVert \leq C \left\lVert a\right\rVert$,
\item
$(1+\Dd_t^2)^{-1/2}$ is a  norm-differentiable
 family of operators in $\Kk$,
\item
There exists a $u\in(0,1)$ such that
$\tau(e^{-u\Dd_t^2})$ is  bounded for each compact interval.
 \end{itemize}

If $(\Bb,\Nn,\Dd_t)$ is equipped with a $\mathbb{Z}_2$ grading 
$\chi\in\Nn$ so that $a$ is even for all $a\in \Bb$ and
$\Dd_t$ is odd for all $t$, then similarly we call the
family of semi-finite spectral triple $(\Bb,\Nn,\Dd_t)$ \emph{even}.
\begin{theorem}[\cite{l}]
\label{cobound}
 If $\dot{\Dd_t}=F_t|\Dd_t|^{1+\varepsilon}+R_t$
 for $0\leq \varepsilon <1$ and 
 $F_t, R_t \in \Nn$ are continuous families of operators 
that are uniformly bounded in $t$ then
the entire cyclic cohomology class of
$\Ch^\bullet(\Dd_t)$ is independent of $t$.

\end{theorem}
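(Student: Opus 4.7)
The plan is to prove a transgression formula of the form
\[
\frac{d}{dt}\Ch^\bullet(\Dd_t) \;=\; (b+B)\,\ch^\bullet(\Dd_t,\dot{\Dd_t})
\]
for an appropriately defined ``transgression'' cochain $\ch^\bullet(\Dd_t,\dot{\Dd_t})$, and then to integrate from $0$ to any $t$ to conclude that $\Ch^\bullet(\Dd_t)$ and $\Ch^\bullet(\Dd_0)$ differ by a $(b+B)$-coboundary. What must be verified is that both sides of the transgression identity live inside the entire complex $C^\bullet_\omega(A)$, so that the coboundary is genuine in $\HE^\bullet(A)$.

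First I would differentiate a single JLO term $(\Ch^n(\Dd_t),(a_0,\ldots,a_n))$ in $t$ by applying Duhamel's formula to each of the heat operators $e^{-(t_{j+1}-t_j)\Dd_t^2}$, yielding an insertion of $-(\dot{\Dd_t}\Dd_t+\Dd_t\dot{\Dd_t})$ integrated over an auxiliary simplex variable, plus an insertion of $[\dot{\Dd_t},a_j]$ into each slot where $[\Dd_t,a_j]$ originally sat. Collecting these insertions, I would define
\[
\ch^n(\Dd_t,\dot{\Dd_t})(a_0,\ldots,a_n) \;:=\; \sum_{j=0}^{n}\int_{\Delta_{n+1}}\! \tau\!\left(\chi\, a_0\, e^{-s_1\Dd_t^2}[\Dd_t,a_1]\cdots e^{-s_j\Dd_t^2}\,\dot{\Dd_t}\,e^{-s_{j+1}\Dd_t^2}[\Dd_t,a_{j+1}]\cdots e^{-s_{n+1}\Dd_t^2}\right)\! ds,
\]
with $\chi=1$ in the odd case. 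A purely algebraic computation using $[\Dd_t,\cdot]$ as a graded derivation on the entries and the trace property of $\tau$ then identifies $\frac{d}{dt}\Ch^n(\Dd_t)$ with $(b\ch^{n-1}+B\ch^{n+1})(\Dd_t,\dot{\Dd_t})$; this step is essentially the same manipulation that proves $\Ch^\bullet(\Dd_t)$ is itself a cocycle in Theorem~\ref{jloisacocycle}, applied to the ``one-parameter deformation'' cochain.

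The hard part will be the analytic estimate showing $\ch^\bullet(\Dd_t,\dot{\Dd_t})\in C^\bullet_\omega(A)$, because $\dot{\Dd_t}=F_t|\Dd_t|^{1+\varepsilon}+R_t$ is genuinely \emph{unbounded} when $\varepsilon\geq 0$. The key estimate I would establish is that for any $s>0$,
\[
\bigl\||\Dd_t|^{1+\varepsilon}\, e^{-s\Dd_t^2}\bigr\| \;\leq\; C\,s^{-(1+\varepsilon)/2},
\]
by the spectral theorem applied to the bounded function $\lambda\mapsto|\lambda|^{1+\varepsilon}e^{-s\lambda^2}$. Substituting the decomposition of $\dot{\Dd_t}$ and using $\|F_t\|,\|R_t\|$ bounded uniformly in $t$, the factor $\dot{\Dd_t}e^{-s_{j+1}\Dd_t^2}$ inside $\ch^n$ contributes at worst $s_{j+1}^{-(1+\varepsilon)/2}$; combined with weak $\theta$-summability of $e^{-u\Dd_t^2}$, a Hölder-type argument in the $\tau$-trace gives the norm bound
\[
\bigl\|\ch^n(\Dd_t,\dot{\Dd_t})\bigr\| \;\leq\; C^n\,\frac{1}{\Gamma\bigl(\tfrac{n+1-\varepsilon}{2}\bigr)}\,\prod_{j=1}^n\|[\Dd_t,a_j]\|.
\]
Because $\varepsilon<1$, the exponent $(1+\varepsilon)/2$ is strictly less than $1$, so the simplex integral converges and, after Stirling, the resulting sequence decays fast enough that $\sum_n\Gamma(n/2)\|\ch^n\|z^n$ is entire in $z$; this is exactly where the hypothesis $\varepsilon<1$ is indispensable.

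Once the transgression cochain is in $C^\bullet_\omega(A)$ and the identity $\frac{d}{dt}\Ch^\bullet(\Dd_t)=(b+B)\,\ch^\bullet(\Dd_t,\dot{\Dd_t})$ holds, continuity of $t\mapsto\ch^\bullet(\Dd_t,\dot{\Dd_t})$ in the entire topology (guaranteed by the norm-differentiability hypotheses on $[\Dd_t,a]$, $(1+\Dd_t^2)^{-1/2}$, and the uniform bounds on $F_t,R_t$) allows Bochner integration, giving
\[
\Ch^\bullet(\Dd_1)-\Ch^\bullet(\Dd_0)\;=\;(b+B)\!\int_0^1\!\ch^\bullet(\Dd_t,\dot{\Dd_t})\,dt,
\]
which exhibits $\Ch^\bullet(\Dd_t)$ as cohomologous to $\Ch^\bullet(\Dd_0)$ in $\HE^\bullet(A)$ for every $t$, completing the proof.
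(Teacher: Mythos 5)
The paper does not actually prove Theorem~\ref{cobound}; it is quoted from \cite{l} without proof, so there is no in-paper argument to compare against. Your outline is the standard Getzler--JLO transgression argument (differentiate in $t$, produce a transgression cochain with a single $\dot{\Dd_t}$ insertion, show it is entire, integrate), and that is certainly the right strategy and almost certainly the one used in \cite{l}. Structurally the proposal is sound, but two analytic points are handled too quickly.

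First, your final norm estimate is too weak to give entireness under Definition~\ref{defnentire}. That definition requires $\sum_n \Gamma(\tfrac{n}{2})\lVert \phi_n\rVert z^n$ to be entire, so you need roughly $\lVert \ch^n\rVert \lesssim C^n/n!$; with your claimed bound $C^n/\Gamma\bigl(\tfrac{n+1-\varepsilon}{2}\bigr)$ the product $\Gamma(\tfrac{n}{2})\lVert\ch^n\rVert z^n \sim (Cz)^n\, n^{-(1-\varepsilon)/2}$ has only a finite radius of convergence. The machinery you describe does give the right answer --- the Dirichlet integral of $s_{j+1}^{-(1+\varepsilon)/2}$ over $\Delta_{n+1}$ produces $\Gamma\bigl(\tfrac{1-\varepsilon}{2}\bigr)\big/\Gamma\bigl(n+\tfrac{3-\varepsilon}{2}\bigr)$, which is $1/n!$-type decay and does make the cochain entire precisely because $\varepsilon<1$ --- but the exponent bookkeeping in your displayed bound must be corrected, since as written it does not prove membership in $C^\bullet_\omega(A)$.

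Second, the Duhamel step is not as innocent as ``apply Duhamel to each heat factor and collect.'' Differentiating $e^{-s\Dd_t^2}$ inserts $\dot{\Dd_t}\Dd_t+\Dd_t\dot{\Dd_t}$, which under your hypothesis is of order $|\Dd_t|^{2+\varepsilon}$; the corresponding heat-smoothed norm $\bigl\lVert |\Dd_t|^{2+\varepsilon}e^{-r\Dd_t^2}\bigr\rVert \sim r^{-1-\varepsilon/2}$ is \emph{not} locally integrable at $r=0$, so the intermediate terms of your term-by-term differentiation are not individually norm-convergent. One must either pair one factor of $\Dd_t$ with an adjacent heat operator before estimating (using $\lVert \Dd_t e^{-r\Dd_t^2}\rVert\lesssim r^{-1/2}$ and keeping only $|\Dd_t|^{1+\varepsilon}$ next to $F_t$), or first establish the identity $\tfrac{d}{dt}\Ch^\bullet(\Dd_t)=(b+B)\,\ch^\bullet(\Dd_t,\dot{\Dd_t})$ for suitably regularized operators and pass to the limit, exploiting the cancellations in the $b+B$ combination. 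This is the technical crux of the transgression lemma for unbounded perturbations, and it is exactly where the hypothesis that $F_t,R_t$ are bounded and $\varepsilon<1$ earns its keep; your write-up should not pass over it.
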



The following Proposition gives a stability of bounded
perturbation of weakly $\theta$-summable
semi-finite spectral triple.
It is Theorem C in \cite{getzlereven}.
\begin{proposition}[\cite{l}]
\label{thmC}
 For a 
weakly  $\theta$-summable semi-finite spectral triple
 $\KCYCLE$, and an operator $V\in\Nn$ 
such that $V$ has the same degree as $\Dd$, i.e. 
$\lvert V \rvert _\chi = \lvert \Dd \rvert _\chi$.
 Then  $\left(\Bb, \Nn,\Dd+V \right)$ is
again a weakly $\theta$-summable semi-finite spectral triple
\end{proposition}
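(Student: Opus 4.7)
The plan is to verify the three defining conditions of a semi-finite spectral triple for $(\Bb,\Nn,\Dd+V)$ one at a time, assuming $V$ is self-adjoint (which is implicit, since $\Dd+V$ must be self-adjoint). The algebraic pieces are routine: $\Dd+V$ is self-adjoint on $\DOM(\Dd)$ by Kato--Rellich because $V$ is bounded; it is affiliated with $\Nn$ because both $\Dd$ and $V$ are; the commutator $[\Dd+V,b] = [\Dd,b] + [V,b]$ is bounded for each $b \in \Bb$ since $[\Dd,b]$ is bounded and $V,b$ are both bounded; and when a grading $\chi \in \Nn$ is present, the hypothesis $\lvert V \rvert_\chi = \lvert \Dd \rvert_\chi$ keeps $\Dd+V$ odd. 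What remains is to show $(1+(\Dd+V)^2)^{-1/2} \in \Kk$ and $\tau(e^{-t(\Dd+V)^2}) < \infty$ for some $t \in (0,1)$.

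Both remaining points follow from a single operator inequality. For any $\mu \in (0,1)$, positivity of $(\sqrt{\mu}\,\Dd \pm V/\sqrt{\mu})^2$ yields $\pm(\Dd V + V\Dd) \leq \mu\, \Dd^2 + V^2/\mu$, and substituting into $(\Dd+V)^2 = \Dd^2 + (\Dd V + V\Dd) + V^2$ gives
\begin{equation*}
(\Dd+V)^2 \;\geq\; (1-\mu)\,\Dd^2 \;-\; \lVert V \rVert^2/\mu \mbox{ .}
\end{equation*}
For the compact resolvent, set $A = 1 + (\Dd+V)^2$ and $C_\mu = \lVert V \rVert^2/\mu$; then $A + C_\mu \geq (1-\mu)(1+\Dd^2)$, so operator monotonicity of $x \mapsto x^{-1}$ on positive operators gives $(A+C_\mu)^{-1} \leq (1-\mu)^{-1}(1+\Dd^2)^{-1} \in \Kk$. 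Writing $A^{-1} = (I + C_\mu A^{-1})(A+C_\mu)^{-1}$ and using that $\lVert A^{-1} \rVert \leq 1$ together with the ideal property of $\Kk$ shows $A^{-1} \in \Kk$; taking the positive square root preserves $\tau$-compactness.

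For weak $\theta$-summability, the same inequality combined with trace monotonicity $\tau(e^{-A}) \leq \tau(e^{-B})$ for $A \geq B \geq 0$ gives
\begin{equation*}
\tau(e^{-t(\Dd+V)^2}) \;\leq\; e^{t\lVert V \rVert^2/\mu}\; \tau(e^{-t(1-\mu)\Dd^2}) \mbox{ .}
\end{equation*}
Given $\tau(e^{-u\Dd^2}) < \infty$ for some $u \in (0,1)$, I pick any $t$ with $u < t < 1$ and then $\mu = 1 - u/t \in (0,1)$, so that $t(1-\mu) = u$ and the right-hand side is finite. The main obstacle is justifying the trace-monotonicity step rigorously for unbounded operators affiliated with the semi-finite von Neumann algebra $\Nn$: one needs the type-II analogue of Weyl's inequality, phrased through generalized singular numbers of $\tau$-measurable operators, which one applies first to spectrally truncated approximations $1_{[0,N]}((\Dd+V)^2)$ and then passes to the limit by normality of $\tau$. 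Once that foundational inequality is in hand, the proof reduces to the parameter choice above.
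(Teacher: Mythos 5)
The paper does not actually prove this proposition: it is quoted from \cite{l} and attributed to Theorem~C of \cite{getzlereven}, so there is no in-text argument to compare against. Your proof is, as far as I can check, correct, and it takes a genuinely different (and more elementary) route than the standard one. The usual proof of Getzler's Theorem~C controls $e^{-t(\Dd+V)^2}$ by a Duhamel/Volterra expansion in powers of the perturbation $\Dd V+V\Dd+V^2$ around $e^{-t\Dd^2}$, estimating each term with the H\"older inequality for the semi-finite trace; that machinery is what also yields the entire transgression estimates used elsewhere in the JLO theory. Your argument instead rests on the single quadratic-form inequality $(\Dd+V)^2\geq(1-\mu)\Dd^2-\lVert V\rVert^2/\mu$ (valid on $\DOM(\Dd)=\DOM(\Dd+V)$, where the form domains coincide, which is what licenses the resolvent comparison), from which both $\tau$-compactness of the resolvent and weak $\theta$-summability follow by monotonicity; the parameter choice $u<t<1$, $\mu=1-u/t$ is exactly right. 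What you lose is that the Duhamel approach gives quantitative bounds uniform in $V$ on bounded sets, which are needed for the homotopy results such as Theorem~\ref{cobound}; what you gain is brevity. The one step you should not wave at too quickly is the trace monotonicity $\tau(e^{-A})\leq\tau(e^{-B})$ for $A\geq B$ in the form sense: you correctly identify that this is the type-$\operatorname{II}$ Weyl comparison for the spectral distribution functions $\lambda\mapsto\tau\left(1_{[0,\lambda]}(\cdot)\right)$ (as in Fack--Kosaki), and you must also factor out the additive constant $\lVert V\rVert^2/\mu$ separately since your lower bound for $(\Dd+V)^2$ is not itself positive; both points are routine but belong in a complete write-up.
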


\section{The JLO Class of AGN's Space of Connections}
\label{ch:jloandagn}

When the Dirac  operator $\Dd$ of 
the noncommutative space of connections of AGN
 is weakly $\theta$-summable,
there is an entire cyclic cocycle
associated to it.
As we have observed in Section~\ref{AGN:construction}, AGN's
spectral triple, and
 hence
the associated JLO cocycle, depends on the weight
assignment, which is a diverging sequence $\{a_k\}$.
We give an explicit condition on allowable
perturbations of the given weight assignment so that
the associated JLO class remains invariant.
In a more recent paper \cite{agn3},
 Aastrup-Grimstup-Paschke-Nest eliminate the
 weight ambiguity by using lattice graphs,
which results in the most reasonable choice of 
weight assignment
that depends only on the dimension of the base manifold.
If one re-runs AGN's construction on a sub-manifold,  
the resulting spectral triple will be
defined using the weight assignment corresponding to the sub-manifold,
which in general would be 
 different from the
spectral triple obtained from pulling back the
construction on the  full manifold.
Although the two 
semi-finite spectral triples
 are assigned different weights,
Section~\ref{AGN:jlo} eliminates that
dimension dependence at the level of 
entire cyclic cohomology by exploiting
the  homotopy invariance of the JLO character described
 in Section~\ref{homotopyinvarianceJLO} 
(see \cite{l} for details).
Section~\ref{AGN:theta} analyzes
 the weak $\theta$-summability of the Dirac operator
$\Dd$ in terms of the diverging sequence $\{a_k\}$.

\subsection{Weight Independence of the JLO Class}
\label{AGN:jlo}
Let $\{\Gamma_k\}_{k\in I}$ be a directed system of finite graphs
with the system of groupoid generators $\{\Ff_{k}\}_{k\in I}$ chosen
according to Subsection~\ref{subsectioncoordinatechange}.
Recall that
\begin{itemize}
\item 
 $D_\gamma$ is the basic Dirac operator 
associated to a path in the a graph (see \eqref{AGN:eqn:basicdirac});
\item
$\Dd_k:=\sum _{\gamma \in \Ff_k \backslash \Ff_{k-1}} D_\gamma$
 is the Dirac  operator corresponds to those edges added  
to $\Gamma_{k-1}$ to form $\Gamma_k$ (see \eqref{newnotation});
\item
$\Dd_{\Ff_n}=\sum_{k=1}^n a_k \Dd_k$
is the Dirac operator on the finite graph $\Gamma_n$ (see \eqref{eq:partialsumnotation}).
\end{itemize}
By Theorem~\ref{propertiesoflimittriple}(2),
 $\Dd$ is the strong resolvent
 limit of $\{ \Dd_{\Ff_n}\}_{n=1}^\infty$.
 We will write the derivative of the weight changes of each term
in $\Dd$ as the operator $F_t \lvert \Dd _t\rvert ^{1+\varepsilon}
+R_t$,
and apply
Theorem~\ref{cobound} to give
a concrete condition on allowable variation of the weight 
assignments $\{a_k\}_{k=1}^\infty$. In particular,
we will show that
the variation corresponding to the dimension change in the base
manifold gives rise to a transgression cochain that is entire.

\begin{lemma}
\label{AGN:lemma:sequencecontrol}
Let $\{a_k(t)\}$ be a family of sequences parametrized by $t$ so that
each $a_k(t)$ is differentiable in $t$.
Let $\Dd(t)$ be
the $t$-family of Dirac operators  given by the weight assignment
$\{a_k(t)\}_{k=1}^\infty$ (i.e.
$\sum_{k=1}^n a_k(t) \Dd_k = \Dd_{\Ff_n} (t)\to \Dd(t)$ in 
the strong resolvent sense).
If there exist $\varepsilon \in [0,1)$ and $m\in(0,\infty)$
 such that
\[
\sup _k \left( \frac{ \lvert\dot{a_k}(t)\rvert }{
\lvert a_k(t)\rvert^{1+\varepsilon} } \right)
 \leq m  \mbox{ }\mbox{ for all } t \mbox{ ,}
\]
 then   there exist continuous families of
operators $F_t,R_t\in \Nn$   such that
\[\dot{\Dd}(t)=F_t \lvert \Dd(t)\rvert ^{1+\varepsilon} +R_t
 \mbox{ .}\]
and $\lVert F_t \rVert, \lVert R_t\rVert \leq m$ uniformly.
\end{lemma}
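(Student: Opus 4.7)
The plan is to exploit the anti-commutation structure of the operators $\{\Dd_k\}$ to reduce the statement to a scalar inequality on joint spectra. I would first observe that the $\Dd_k$'s mutually anti-commute: each $\Dd_k = \sum_{\gamma \in \Ff_k \setminus \Ff_{k-1}} D_\gamma$ is supported on a set of factors disjoint from the supports of the other $\Dd_{k'}$'s, and Clifford generators on different factors of the graded Clifford algebra $\CL(\g^{|E|})$ anti-commute. Hence all cross terms cancel, yielding
\[
\Dd(t)^2 = \sum_k a_k(t)^2 \Dd_k^2,\qquad \dot{\Dd}(t)^2 = \sum_k \dot{a}_k(t)^2 \Dd_k^2,
\]
with the $\Dd_k^2$ mutually commuting, hence jointly diagonalisable.

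Next I would set $R_t := 0$ and
\[
F_t := \dot{\Dd}(t)\,|\Dd(t)|^{-(1+\varepsilon)}(1-P_0),
\]
where $P_0$ is the projection onto $\ker(\Dd(t)) = \bigcap_k \ker(\Dd_k)$ (the equality uses $a_k \neq 0$ and the diagonal form of $\Dd(t)^2$). On this intersected kernel, $\dot{\Dd}(t) = \sum_k \dot{a}_k \Dd_k$ also vanishes, so the identity $\dot{\Dd}(t) = F_t |\Dd(t)|^{1+\varepsilon} + R_t$ holds formally; it remains to verify $\|F_t\| \leq m$.

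To bound the norm, I would restrict to a joint eigenspace $V_\lambda$ of $\{\Dd_k^2\}$ with eigenvalues $(\lambda_k^2)$, only finitely many of which are nonzero on the dense subspace $\bigcup_j \Hh_{\Ff_j}$ (since $\Dd_k$ annihilates the constant-extension embedding of earlier stages, as left-invariant vector fields kill constants and $\mathbb{I}$ is annihilated by the Clifford part's orthogonal generators). On $V_\lambda$, $|\Dd(t)|$ acts as the scalar $\sqrt{\sum_k a_k^2 \lambda_k^2}$ and $\|\dot{\Dd}(t)|_{V_\lambda}\|^2 = \sum_k \dot{a}_k^2 \lambda_k^2$, so by the hypothesis
\[
\|F_t|_{V_\lambda}\|^2 \leq m^2 \cdot \frac{\sum_k a_k^{2\varepsilon}\cdot a_k^2 \lambda_k^2}{\bigl(\sum_k a_k^2 \lambda_k^2\bigr)^{1+\varepsilon}}.
\]
By Theorem~\ref{AGN:thm:eckhard}, the nonzero spectrum of each $\Dd_k^2$ lies in $[|\rho|^2,\infty)$; arranging by rescaling the invariant metric that $|\rho|^2 \geq 1$ gives $\lambda_k^2 \geq 1$ whenever $\lambda_k \neq 0$, so $a_k^{2\varepsilon}\leq (a_k^2\lambda_k^2)^\varepsilon$. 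Putting $u_k := a_k^2\lambda_k^2$, the fraction is bounded by $\sum_k u_k^{1+\varepsilon}/(\sum_k u_k)^{1+\varepsilon} \leq 1$, the last being the standard $\ell^p$ inequality for $p = 1+\varepsilon \geq 1$. Hence $\|F_t\| \leq m$.

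Finally, $F_t \in \Nn$ since $\dot{\Dd}(t)$ and the bounded operator $|\Dd(t)|^{-(1+\varepsilon)}(1-P_0)$ are both affiliated with $\Nn$ and $F_t$ is bounded; continuity of $F_t$ in $t$ follows from the differentiability of each $a_k(t)$ together with Theorem~\ref{thm:strongresolventthm} applied to functional calculi of $\Dd(t)^2$. The main obstacle is the spectral lower bound used in step three: one must know the smallest nonzero eigenvalue of each $\Dd_k^2$ is at least $1$ (after a metric rescaling), since otherwise the bound degrades by a factor depending on that eigenvalue. The anti-commutation structure is essential here, as it is what diagonalises $\Dd(t)^2$ in terms of commuting $\Dd_k^2$'s, reducing the operator identity to a pointwise scalar estimate on the joint spectrum.
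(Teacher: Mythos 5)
Your proof is correct and rests on the same core mechanism as the paper's: reduce the operator estimate to a scalar inequality on the joint spectrum of the commuting family $\{\Dd_k^2\}$ (the paper phrases this as multi-variable functional calculus on $\{c\Dd_k^2\}$), and then combine the same three ingredients --- the hypothesis $\dot{a}_k^2\le m^2 a_k^{2(1+\varepsilon)}$, the spectral gap of $D^2$ above zero (so that $\lambda_k^2\le\lambda_k^{2(1+\varepsilon)}$ after normalization), and the inequality $\sum_k u_k^{1+\varepsilon}\le\left(\sum_k u_k\right)^{1+\varepsilon}$. The one genuine divergence is your choice $R_t=0$ and $F_t=\dot{\Dd}(t)\,\lvert\Dd(t)\rvert^{-(1+\varepsilon)}(1-P_0)$: the paper instead takes $F_t=R_t=\dot{\Dd}(t)\,(1+\lvert\Dd(t)\rvert^{1+\varepsilon})^{-1}$, which is automatically bounded, whereas your $F_t$ is bounded only because the non-zero spectrum of $\Dd(t)^2=\sum_k a_k(t)^2\Dd_k^2$ is bounded away from $0$; that in turn needs $\inf_k a_k(t)^2>0$, which holds for the diverging weight sequences to which the lemma is applied but should be stated explicitly. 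Two minor points: your parenthetical that $\mathbb{I}$ is annihilated by the Clifford generators is false in general and also unnecessary, since $D_\gamma$ kills any vector constant in the $\gamma$-variable simply because the $\Uu(\g)$-factor acts by left-invariant vector fields; and your normalization $\lvert\rho\rvert^2\ge 1$ plays exactly the role of the constant $c$ in the paper, whose own computation in fact only delivers $\lVert F_t\rVert,\lVert R_t\rVert\le c^{\varepsilon}m$ rather than $m$, so your bookkeeping at that step is, if anything, slightly cleaner.
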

\begin{proof}
We  first analyze the operator
\[
\lvert \dot{\Dd}(t) \rvert \left(
1 + \lvert \Dd(t) \rvert ^{1+\varepsilon} \right)^{-1} \mbox{ .}
\]
Since $\Dd_{\Ff_n}(t) \to \Dd(t)$ in the strong resolvent sense for all $t$,
\[
\frac{\Dd_{\Ff_n}(t+h) - \Dd_{\Ff_n}(t)}{h} \longrightarrow
\frac{ \Dd(t+h) - \Dd(t)}{h}
\]
for all $h$, and we have the strong resolvent limit
\[
\sum_{k=1}^n \dot{a_k}(t) \Dd_k=
\dot{\Dd_{\Ff_n}}(t) \longrightarrow \dot{\Dd}(t) \mbox{ .}
\]
Thus, we obtain the limit
\[
\lvert \dot{\Dd}_{\Ff_n}(t) \rvert \left(
1 + \lvert \Dd_{\Ff_n}(t) \rvert ^{1+\varepsilon} \right)^{-1} 
 \longrightarrow 
\lvert \dot{\Dd}(t) \rvert \left(
1 + \lvert \Dd(t) \rvert ^{1+\varepsilon} \right)^{-1} \mbox{ ,}
\]
which a priori is only in  the strong resolvent sense.

Let $0<\frac{1}{c}$ be the smallest \emph{non-zero} eigenvalue of $D^2$.
For instance, $c=8$ when $G=SU(2)$ by Remark~\ref{remark:smallesteigenvalue}. Then
the open interval $(0,1)$ does not contain any part of 
the spectrum of $c\Dd_k^2$ as each $\Dd_k^2$ is a finite 
sum of $D^2$'s.
%
We bound the operator
$\lvert \dot{\Dd}_{\Ff_n}(t) \rvert \left(
1 + \lvert \Dd_{\Ff_n}(t) \rvert ^{1+\varepsilon} \right)^{-1} $ by using
multi-variable functional calculus on the   
set of commuting self-adjoint operators
 $\{c \Dd_k^2\}_{k=1}^n$.
The following multi-varible function is bounded
when $x_k\geq 1$ or $=0$ for all $k$:
\begin{eqnarray*}
f_t^n\left(\{x_k\}_{k=1}^n \right)
:=\frac{c^{\varepsilon} \left( \sum_{k=1}^n \dot{a_k}(t)^2 x_{k}
 \right)^{\frac{1}{2}}}
{  c^{1+\varepsilon}+ \left( \sum_{k=1}^n a_k(t)^2 x_{k}\right) 
^{\frac{1+\varepsilon}{2}}   } \leq
\frac{ c^{\varepsilon}   m
\left( \sum_{k=1}^n a_k(t)^{2(1+\varepsilon)} x_{k}
 \right)^{\frac{1}{2}}}{
  c^{1+\varepsilon}+ \left( \sum_{k=1}^n a_j(t)^2 x_{k}\right) 
^{\frac{1+\varepsilon}{2}}   } \hspace{2cm}\\
\leq
\frac{c^{\varepsilon}  m
\left( \sum_{j=1}^n a_k(t)^{2(1+\varepsilon)} x_{k}^
{(1+\varepsilon)}
 \right)^{\frac{1}{2}}}{
  c^{1+\varepsilon}+ \left( \sum_{k=1}^n a_k(t)^2 x_{k}\right) 
^{\frac{1+\varepsilon}{2}}  }
\leq
\frac{c^{\varepsilon}  m
\left( \sum_{j=1}^n a_k(t)^{2 } x_{k}
 \right)^{\frac{1+\varepsilon}{2}}}{
  c^{1+\varepsilon}+ \left( \sum_{k=1}^n a_k(t)^2 x_{k}\right) 
^{\frac{1+\varepsilon}{2}}   } \leq c^{\varepsilon}  m \mbox{ .}
\end{eqnarray*}
As $\lVert f_t^n \rVert _\infty \leq c^{\varepsilon} m$,
$f_t^n\left(\{c\Dd_k^2\}_{k=1}^\infty
\right)$ is uniformly bounded.
By construction
\[
f_t^n\left(\{c\Dd_k^2\}_{k=1}^\infty
\right) = 
\lvert \dot{\Dd}_{\Ff_n}(t) \rvert \left(
1 + \lvert \Dd_{\Ff_n}(t) \rvert ^{1+\varepsilon} \right)^{-1} \mbox{ ,}
\]
so $\lvert \dot{\Dd}_{\Ff_n}(t) \rvert \left(
1 + \lvert \Dd_{\Ff_n}(t) \rvert ^{1+\varepsilon} \right)^{-1} $
is uniformly bounded by $ c^{\varepsilon}  m$ and 
\[
\lvert \dot{\Dd}_{\Ff_n}(t) \rvert \left(
1 + \lvert \Dd_{\Ff_n}(t) \rvert ^{1+\varepsilon} \right)^{-1} 
 \longrightarrow 
\lvert \dot{\Dd}(t) \rvert \left(
1 + \lvert \Dd(t) \rvert ^{1+\varepsilon} \right)^{-1} 
\] in strong operator topology.
The fact that
$\Dd_k$ is affiliated with $\Nn$ for each $k$ implies that 
  $\lvert \dot{\Dd}_{\Ff_n}(t) \rvert \left(
1 + \lvert \Dd_{\Ff_n}(t) \rvert ^{1+\varepsilon} \right)^{-1} 
\in \Nn$ for each $n$.
As $\Nn$ is strong operator closed,
  the limit
$ \lvert 
\dot{\Dd}(t) \rvert \left(
1 + \lvert \Dd(t) \rvert ^{1+\varepsilon} \right)^{-1}
 \in \Nn$.
Set $F_t=R_t=\dot{\Dd} (t) \left(
1 + \lvert \Dd(t) \rvert ^{1+\varepsilon} \right)^{-1}
 \in \Nn$, then
$ \dot{\Dd } (t) = 
F_t \lvert \Dd(t) \rvert ^{1+\varepsilon}    + R_t$.
Furthermore,
observe that $\lVert f_t^n \rVert _\infty $
uniformly bounded by $ c^{\varepsilon}m$ 
for all $n$,
the operator norms of $F_t$ and $R_t$ are also uniformly
bounded by $c^{\varepsilon} m$.
\end{proof}
\begin{theorem}
\label{AGN:theorem:final}
If $\{a_k(t)\}$ is a differentiable family of sequences so that
$\Dd(t)$ is weakly $\theta$-summable for all $t$, and that
there exist $\varepsilon \in [0,1)$ and $m\in(0,\infty)$
 such that
\[
\sup_k \left( \frac{\lvert \dot{a_k}(t) \rvert }
{\lvert a_k(t) \rvert ^{1+\varepsilon}} \right)
 \leq m \mbox{ } \mbox{ for all } t \mbox{ ,}
\]
 then
$\Ch(\Dd(t))$ defines the same entire cyclic cohomology class 
in $\HE ^\bullet (A)$
for any $t$, where $A$ is the closure of $\Bb$ with 
respect to the norm $\lVert b \rVert _{\operatorname{Lip}}
:=\lVert b \rVert + \lVert [\Dd_{t_0},b] \rVert $ for a fixed
 $t_0$ and all $b\in\Bb$.
\end{theorem}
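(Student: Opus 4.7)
The proof is a direct application of the weight-control lemma (Lemma~\ref{AGN:lemma:sequencecontrol}) followed by the homotopy invariance theorem for the JLO character (Theorem~\ref{cobound}). The plan is to check that the one-parameter family $(\Bb,\Nn,\Dd(t))$ satisfies the list of technical conditions preceding Theorem~\ref{cobound}, produce the required decomposition $\dot{\Dd}(t)=F_t|\Dd(t)|^{1+\varepsilon}+R_t$ from the preceding lemma, and invoke the invariance statement.

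First, I would verify that $\{(\Bb,\Nn,\Dd(t))\}_t$ is a differentiable family of weakly $\theta$-summable semi-finite spectral triples in the precise sense required for Theorem~\ref{cobound}. Self-adjointness of $\Dd(t)$ and affiliation with $\Nn$ for each $t$ follow from Theorem~\ref{propertiesoflimittriple}. Weak $\theta$-summability of each $\Dd(t)$ is assumed. For norm-differentiability of $[\Dd(t),b]$ on $\Bb$: any $b\in\Bb$ lies in some $\Bb_{\Gamma_n}$, so $[\Dd(t),b]=\sum_{k=1}^{n} a_k(t)[\Dd_k,b]$ is a \emph{finite} linear combination of fixed bounded operators with differentiable scalar coefficients, hence norm-differentiable and locally uniformly bounded. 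The uniform bound $\|[\Dd(t),b]\|\leq C\|b\|_{\operatorname{Lip}}$ required for the extension to the completion $A$ then follows on each $\Bb_{\Gamma_n}$ by comparing the finite sum $\sum_k a_k(t)[\Dd_k,b]$ to $\sum_k a_k(t_0)[\Dd_k,b]$ via the continuous ratio $a_k(t)/a_k(t_0)$ and extending by density.

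Second, I would apply Lemma~\ref{AGN:lemma:sequencecontrol} with the stated supremum condition to obtain continuous families $F_t, R_t\in\Nn$ with $\|F_t\|,\|R_t\|\leq c^{\varepsilon} m$ uniformly such that
\[
\dot{\Dd}(t)=F_t|\Dd(t)|^{1+\varepsilon}+R_t \mbox{ .}
\]
With this decomposition in hand, the conclusion of Theorem~\ref{cobound} is exactly the statement that $[\Ch^{\bullet}(\Dd(t))]\in\HE^\bullet(A)$ is independent of $t$, which is what we want.

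The main obstacle I expect is not in applying the two theorems themselves but in the bookkeeping required to lift norm-differentiability and the uniform heat-trace bound from the finite-graph truncations $\Dd_{\Ff_n}(t)$ to the strong resolvent limit $\Dd(t)$. Specifically, the resolvent family $(1+\Dd(t)^2)^{-1/2}\in\Kk$ must be shown to be norm-differentiable in $t$, and $\tau(e^{-u\Dd(t)^2})$ must be bounded uniformly on each compact $t$-interval for some $u\in(0,1)$. For the former, I would write the resolvent as a strong limit of the resolvents of $\Dd_{\Ff_n}(t)$ (using Theorem~\ref{thm:strongresolventthm}) and differentiate via the uniform bound on $|\dot a_k(t)|/|a_k(t)|^{1+\varepsilon}$ from Lemma~\ref{AGN:lemma:sequencecontrol}. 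For the latter, I would use that the hypothesized supremum condition controls how fast $a_k(t)$ can shrink, so that a uniform choice of $u$ works across compact intervals in $t$. Once these verifications are complete, the conclusion is immediate from Theorem~\ref{cobound}.
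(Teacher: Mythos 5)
Your proposal follows exactly the paper's own route: apply Lemma~\ref{AGN:lemma:sequencecontrol} to obtain the decomposition $\dot{\Dd}(t)=F_t\lvert\Dd(t)\rvert^{1+\varepsilon}+R_t$ and then invoke Theorem~\ref{cobound} to conclude that the JLO class is independent of $t$. The only difference is that you spell out the verification that $(\Bb,\Nn,\Dd(t))$ satisfies the hypotheses of Theorem~\ref{cobound} (norm-differentiability of the commutators on each $\Bb_{\Gamma_n}$, the uniform heat-trace bound), which the paper's two-line proof takes for granted; this is added care, not a different argument.
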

\begin{proof}
By Lemma~\ref{AGN:lemma:sequencecontrol}, 
$\dot{\Dd}(t)=A_t \lvert \Dd(t) \rvert ^{1+\varepsilon}
 +R_t $.
By Theorem~\ref{cobound},
$\hCh^\bullet(\Dd(t),\dot{\Dd(t)})$ is entire and
$[\Ch(\Dd(t_1))]=[\Ch(\Dd(t_2))]$ for
any $t_1$, $t_2$ as an entire cyclic cohomology class,
 and the proof is complete.
\end{proof}

Now we will give an application of Theorem~\ref{AGN:theorem:final}.
Recall that $M$ is the base manifold of dimension $d$,
and $\{\Gamma_i\}_{i\in\naturalnumber}$ is a system of embedded graphs
in $M$.
In \cite{agn3}, Aastrup-Grimstrup-Nest choose
the sequence $a_j=(2^d)^j$
by using lattice graphs of dimension $d$.
Let $a_j(t)=(2^{d+t})^j$, then we see that

\begin{equation}
\label{AGN:eqn:latticebound}
\sup_j \left( \frac{\lvert \dot{a_j}(t)\rvert }{
\lvert a_j(t) \rvert ^{1+\varepsilon}} \right)
 = \sup_j \left( 
 \frac{  j \ln (2)    }{(2^{(d+t)j})^{\varepsilon}}
 \right) \mbox{ ,}
\end{equation}
which is finite as long as $t+d >0$
and  $\varepsilon > 0$.
By Theorem~\ref{AGN:theorem:final}, we see that
as long as $t$ is so that $\Dd(t)$ is weakly $\theta$-summable, 
which can be achieved by an overall scaling of $\{a_j\}$ (see Section~\ref{AGN:theta}),
 the choice of $d$ does not affect its JLO class.
We will use this observation to show that the 
entire cyclic cohomology class associated to AGN's Dirac operator
is well-behaved under immersions.

Let $N$ and $M$ be compact manifolds of dimension $c$ and $d$ 
respectively.
We give $N$  a system of $c$-lattice graphs
 (see Example~\ref{AGN:ex:dlattice}) 
and 
assume that $N$ immerses in $M$ in a way that
the 
system of  $c$-lattice graphs in
$N$ extends to a system of  $d$-lattice graphs in $M$.
That is, the $d$-lattice graphs are \emph{refinements} of 
the $c$-lattice graphs such that,
other than the joining vertices,
the new edges added for the refinements lie entire outside of
$N$,
 Fix a vertex $\nu$ in 
a $c$-lattice graph of $N$, denote by
$\left(\Bb_c,\Nn_c, \Dd_c\right)$ and
$\left(\Bb_d,\Nn_d, \Dd_d\right)$
the semi-finite spectral triples 
constructed according to Section~\ref{AGN:construction}
 for $N$ and $M$.
Because the $d$-lattice graphs are refinements of 
$c$-lattice graphs, we get an embedding of the
algebras $\iota:\Bb_c \hookrightarrow \Bb_d$. 
The embedding $\iota$ pulls back the semi-finite spectral triple 
$\left(\Bb_d,\Nn_d, \Dd_d\right)$  
to  $\left(\Bb_c ,\Nn_d, \Dd_d\right)$.
Denote by $A$ the closure of $\Bb_c$ under
the norm  $\lVert b \rVert _{\operatorname{Lip}}:= \lVert b \rVert + 
\lVert [\Dd_c,b]\rVert$ for $b\in\Bb_c$.

\begin{theorem}
\label{AGN:cor:immersion}
When
 $\left(\Bb_c ,\Nn_c, \Dd_c\right)$  and
 $\left(\Bb_c,\Nn_d, \Dd_d\right)$  are weakly
 $\theta$-summable, 
their JLO cocycles 
define the same JLO class in 
$\HE^\bullet(A)$.
\end{theorem}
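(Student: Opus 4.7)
My plan is to connect the two triples by first interpolating weights on the inside edges, and then integrating out the outside tensor factor, invoking the homotopy invariance Theorem~\ref{AGN:theorem:final} at each step. Fix a $d$-lattice system $\{\Gamma_n^d\}$ in $M$ whose restriction to $N$ is the given $c$-lattice system $\{\Gamma_n^c\}$, so that every edge in $\Gamma_n^d\setminus\Gamma_n^c$ lies outside $N$. This yields a tensor product decomposition $\Hh_d\cong\Hh_c\otimes\Hh_{\mathrm{out}}$ under which $\Dd_d=\Dd_d^N+\Dd_d^{\mathrm{out}}$, where $\Dd_d^N=\sum_k 2^{dk}\Dd_k^N$ acts on the $\Hh_c$ factor (the same operators $\Dd_k^N$ that appear in $\Dd_c=\sum_k 2^{ck}\Dd_k^N$, only with different weights) and $\Dd_d^{\mathrm{out}}=\sum_k 2^{dk}\Dd_k^{\mathrm{out}}$ acts on $\Hh_{\mathrm{out}}$. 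Under this decomposition $\iota(\Bb_c)=\Bb_c\otimes 1$, so in particular $\iota(\Bb_c)$ commutes with $\Dd_d^{\mathrm{out}}$.

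The first step interpolates the inside weights. I introduce the family $\Dd(t):=\sum_k 2^{(c+t(d-c))k}\Dd_k^N+\sum_k 2^{dk}\Dd_k^{\mathrm{out}}$ for $t\in[0,1]$, deforming only the $N$-part weights $a_k(t)=2^{(c+t(d-c))k}$. The computation in~\eqref{AGN:eqn:latticebound} shows $\sup_k\lvert\dot a_k(t)\rvert/\lvert a_k(t)\rvert^{1+\varepsilon}<\infty$ for every $\varepsilon>0$; assuming (or arranging via an overall preliminary scaling) that $\Dd(t)$ is weakly $\theta$-summable throughout, Theorem~\ref{AGN:theorem:final} produces $[\Ch(\Dd_d)]=[\Ch(\Dd(0))]$ with $\Dd(0)=\Dd_c\otimes 1+1\otimes\Dd_d^{\mathrm{out}}$.

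The second step removes the outside summand. Since $\iota(\Bb_c)$ commutes with $1\otimes\Dd_d^{\mathrm{out}}$ and the two summands of $\Dd(0)$ anticommute in the graded tensor product, $\Dd(0)^2=\Dd_c^2\otimes 1+1\otimes(\Dd_d^{\mathrm{out}})^2$ and $e^{-s\Dd(0)^2}=e^{-s\Dd_c^2}\otimes e^{-s(\Dd_d^{\mathrm{out}})^2}$. Combined with the trace factorization $\tau_d=\tau_c\otimes\tau_{\mathrm{out}}$ and the fact that the $n+1$ simplex parameters sum to $1$, the JLO cocycle factors as
\[
\Ch^n(\Dd(0))(b_0,\dots,b_n)=\Ch^n(\Dd_c)(b_0,\dots,b_n)\cdot \tau_{\mathrm{out}}\!\left(\chi_{\mathrm{out}}\,e^{-(\Dd_d^{\mathrm{out}})^2}\right),
\]
with $\chi_{\mathrm{out}}=1$ in the odd case. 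To identify the residual scalar with $1$, I would apply Theorem~\ref{AGN:theorem:final} a second time along the family obtained by rescaling the outside weights $2^{dk}\mapsto\lambda\cdot 2^{dk}$ with $\lambda\to\infty$: the non-zero spectrum of the outside Dirac operator is pushed to infinity, so $e^{-\lambda^2(\Dd_d^{\mathrm{out}})^2}$ converges strongly to the projection onto $\ker(\Dd_d^{\mathrm{out}})$, whose $\tau_{\mathrm{out}}$-mass equals $1$ by Theorem~\ref{AGN:thm:eckhard} together with the normalization $\TR_{\CL}(1)=1$ of the Clifford trace.

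The main obstacle will be this last step: controlling the behavior of the cocycle as $\lambda\to\infty$ (Theorem~\ref{AGN:theorem:final} directly yields class-equality only along continuous finite-parameter deformations) and justifying the clean tensor-product factorization of the semifinite trace $\tau_d$ on the decomposition $\Nn_d\cong\Nn_c\otimes\Nn_{\mathrm{out}}$. Once these technical points are settled, chaining the equalities produces $[\Ch(\Dd_d)]=[\Ch(\Dd(0))]=[\Ch(\Dd_c)]$ in $\HE^\bullet(A)$, which is the desired conclusion.
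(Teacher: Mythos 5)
Your first step (interpolating the weights on the $N$-edges via Theorem~\ref{AGN:theorem:final}) is essentially the final step of the paper's proof, but your second step contains a genuine gap that cannot be repaired along the lines you suggest. Granting your factorization
\[
\Ch^n(\Dd(0))(b_0,\dots,b_n)=\Ch^n(\Dd_c)(b_0,\dots,b_n)\cdot\tau_{\mathrm{out}}\bigl(\chi_{\mathrm{out}}\,e^{-(\Dd_d^{\mathrm{out}})^2}\bigr),
\]
the residual scalar is not $1$ and cannot be identified with $1$. In the odd case it equals $\prod_{\gamma\notin\Ff_c}\tau_1(e^{-a_\gamma^2D^2})=\prod_{\gamma}\bigl(1+\sum_{\mu\neq0}e^{-a_\gamma^2\mu^2}\tau_1(E_D(\mu))\bigr)$, which is strictly greater than $1$ whenever it converges; in the even case it is the $\tau_{\mathrm{out}}$-index of $\Dd_d^{\mathrm{out}}$, which vanishes because $\TR_{\CL}(\chi_{\mathrm{out}})=0$ on $\ker\Dd_d^{\mathrm{out}}$, so your formula would force $\Ch^+(\Dd(0))\equiv0$. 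The proposed rescaling $2^{dk}\mapsto\lambda\cdot2^{dk}$ makes matters worse rather than better: the scalar $s(\lambda)$ is non-constant in $\lambda$ in the odd case, while Theorem~\ref{AGN:theorem:final} does apply on each compact $\lambda$-interval, so combining your factorization with homotopy invariance yields $(s(\lambda_1)-s(\lambda_2))[\Ch(\Dd_c)]=0$; that is, the argument, if completed, proves $[\Ch(\Dd_c)]=0$ rather than the stated equality. The limit $\lambda\to\infty$ is never attained, so the identification of the scalar with $\tau_{\mathrm{out}}(P_{\ker})=1$ holds at no finite parameter value.

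The paper avoids this entirely by using a direct-sum rather than a tensor decomposition. The unit of $A$ acts on $\Hh_d$ as a projection $p$ with $p\Hh_d=\Hh_c$; every ingredient of the cocycle is reduced by $p$, so the JLO cocycle splits \emph{additively}, $\Ch^\bullet(\Dd_d)=\Ch^\bullet(\Dd_d')+\Ch^\bullet(\Dd_d'')$, with $\Dd_d'$ affiliated with $p\Nn_dp=\Nn_c$ and with $\Ch^\bullet(\Dd_d'')=0$ in $\HE^\bullet(A)$. Because the outside operators $D_\gamma$ annihilate $\Hh_c$ (they differentiate in variables on which vectors of $p\Hh_d$ are constant), $\Dd_d'$ is exactly $\Dd_c$ with weights $2^{dk}$ in place of $2^{ck}$, and a single application of Theorem~\ref{AGN:theorem:final} together with the bound \eqref{AGN:eqn:latticebound} finishes the proof. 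In this route there is no residual heat-kernel factor to dispose of, which is precisely the obstacle your approach cannot overcome.
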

\begin{proof}
The unit of $A_c$ acts as a projection $p$ on $\Hh_d$ and decomposes
$\Hh_d$ into $p\Hh_d  \oplus (1-p) \Hh_d $.
Since the $d$-lattices are refinements
of the $c$-lattices, 
by the generator choice described in
 Subsection~\ref{subsectioncoordinatechange}, the 
Hilbert space $p\Hh_d$ is precisely $\Hh_c$ and the
representation $\rho_d \circ \iota$ is just $\rho_c$.
Also $\Dd_d$ decomposes  into $\Dd_d ' +\Dd_d ''$ with 
$\Dd_d'$ affiliated with $p\Nn_d p=\Nn_c$ and $\Dd_d ''$
affiliated with $(1-p) \Nn_d (1-p)$.
As a result, $\Ch^\bullet(\Dd_d) = \Ch^\bullet (\Dd_d')
+\Ch^\bullet (\Dd_d '')$, and $\Ch^\bullet (\Dd_d '')=0\in 
\HE^\bullet(A)$.
 Now by construction, the
 Dirac operators $\Dd_c$ and $ \Dd_d ' $
differ only by the defining sequence $\{a_j\}$.
By Theorem~\ref{AGN:theorem:final} and Equation~\eqref{AGN:eqn:latticebound},
 $\Ch(\Dd_c)$ equals $\Ch( \Dd_d')$
 up to an entire
 coboundary, hence 
$\Ch(\Dd_c)$ and $\Ch(\Dd_d)$
 define the same class in $\HE^\bullet(A_c)$.
\end{proof}

\subsection{Weak $\theta$-summability}
\label{AGN:theta}

In \cite{agn0}, the term
$ \tau(e^{-\Dd^2}) $ has the form of a 
formal Feynman path
integral, thus weak
$\theta$-summability has a physical motivation
of the path integral being finite.
One would like to know for what sequences 
$\{a_j\}$, $\Dd$ is weakly $\theta$-summable.

We give a characterization of the weakly
$\theta$-summable condition of $\Dd$ in terms
of the sequence $\{a_j\}$.
However, the analysis will necessarily depend
on the graph system that is deployed.
The most general graph system is given by adding 
one edge at a time, because under our choice generators
described in Subsection~\ref{subsectioncoordinatechange},
any refinement of a graph 
can be obtained by successively adding edges one at a time.
As a result, our graph system gives
the following system of connection spaces:
\[
G\twoheadleftarrow G\times G 
\twoheadleftarrow G\times G \times G
\twoheadleftarrow \cdots \twoheadleftarrow G^{n-1}
\twoheadleftarrow G^n
\twoheadleftarrow \cdots \mbox{ .}
\]
Therefore, $\Dd=\sum^\infty_{k=1} a_k \Dd_k$,
where
$\Dd_k$ is just $D$ \eqref{AGN:eqn:basicdirac} acting
on the $k$-th copy of $G$.

Recall that we extended the action of $\Dd$ on
 $\Hh:=\Ilim \Hh_{\Ff_j}$ to
$E\otimes \Hh$ by letting it act as the identity on $E$.
Since $E$ is finite dimensional, 
the summability of $\Dd$ remains unchanged whether $\Dd$
acts on $E$ or not. Hence, we will for simplicity let
$\Dd$ act on $\Hh$ instead of $E\otimes \Hh$.

Let 
 $\tau_1$ be the semi-finite trace
on the weak operator
 closure of $  B(L^2(G)) \otimes  \CL (\g)$
  extended from 
 $ \TR\otimes \TR_{\CL} $, where   $\TR$
 denotes
the   operator trace on  $B(L^2(G))$
and $\TR_{\CL}$ denotes the
Clifford trace
 (note that $B(L^2(G)) \otimes  \CL (\g)$
is a Type-$\operatorname{I_\infty}$ algebra).
Since the Lie group $G$ is assumed to be finite
dimensional, the   Dirac operator $D=\frac{1}{\sqrt{-1}}\sum_{i=1}^q e_i \otimes e_i$ 
acting on $L^2(G)\otimes S$
is
$p$-summable for some finite $p$ greater
than the dimension of $G$, hence $D$ is also
 $p$-summable with respect to $\tau_1$. 
Moreover, there exists a smallest non-zero eigenvalue 
 for
$D^2$, so $ 
\left(D ^2+1\right)^p ( \left. D 
\right \rvert_{\KER D^\perp} )^{-2p} $
is a bounded operator.
In the case when $G=SU(2)$, the smallest non-zero eigenvalue is
$\lvert \rho \rvert^2 = \dim (\g)/24=\frac{1}{8}$, so
 $ 
\left(D ^2+1\right)^p ( \left. D 
\right \rvert_{\KER D^\perp} )^{-2p} $ is bounded by 
$65^p$. As a result
\begin{eqnarray*}
\sum _{\lambda\in \sigma(D)\backslash \{0\} }
\lvert\lambda\rvert^{-p}
\tau_1\left(E_{D}(\lambda) \right)
&=&\tau_1\left( \left\lvert D_{\ker D^\perp } \right\rvert
^{-p} \right) \\
&\leq &
\left\lVert 
\left(D ^2+1\right)^p ( \left. D 
\right \rvert_{\KER D^\perp} )^{-2p} \right\rVert  \cdot
\tau_1 \left(    \left(D ^2+1\right)^{-p}       \right) 
<\infty \mbox{ ,}
\end{eqnarray*}
where $\sigma(D)$ is the spectrum of $D$
 and the $E_D(\lambda)$
are the  spectral projections of $D$
onto the $\lambda$-eigenspace.
Note that $\tau_1(E_D(0))=1$ (see Theorem~\ref{AGN:thm:eckhard}). 
We can index the eigenvalues of $D$ by integers
counting multiplicities with respect to $\tau_1$,
so that $\lambda_n\in \sigma(D)$,
$\ldots \leq \lambda_{-2}\leq \lambda_{-1}
\leq \lambda_{0} \leq \lambda_{1} \leq \lambda_{2} 
\leq \ldots$, and $\lambda_n=0$ only when $n=0$.
Recall that the multiplicity of an eigenvalue $\lambda$ is given 
by the number $\tau_1\left( E_D (\lambda)\right)$.

%

One would like the trace 
\[
\tau(e^{-u\Dd^2})=\tau(e^{-u\sum_j a_j^2\Dd_j^2})
=\sum_{n\in \Zz ^\infty}
 e^{-u\sum_j a_j^2\lambda_{n_j}^2} 
\]
to be finite,
where
$ \Zz^\infty $ is the space of 
functions from $\naturalnumber \ni j \mapsto n_j \in \Zz$ of finite support.
We break the sum on the right  according to the cardinality of the
support. Namely, it is broken into sums over
 functions 
with  \textit{no support},
supported on  \textit{one} point,
 and supported on \textit{two} points etc.
Thus, 
\[\sum_{n\in \Zz ^\infty}
e^{-u\sum_j a_j^2\lambda_{n_j}^2 }=\sum_{i=0}^\infty 
\sum_{\lvert \SUPP(n)\rvert = i }
e^{-u\sum_j a_j^2\lambda_{n_j}^2} \mbox{ .}\]
 We compute:
\newline
Sequences $n\in\Zz^\infty$ with $\lvert \SUPP(n) \rvert =0$:
\[
\sum_{n\in \Zz^\infty , \lvert \SUPP(n) \rvert =0 } 
e^{-u\sum_j a_j^2\lambda_j^2} = e^{-0}=1 \mbox{ .}
\]
Sequences $n\in\Zz^\infty$
 with $\lvert \SUPP(n) \rvert =1$:
\begin{eqnarray*}
\sum_{n \in \Zz^\infty , \lvert \SUPP(n) \rvert =1 } 
e^{-u\sum_j a_j^2\lambda_n^2} &=& \sum_{j}
\sum_{\lambda \in \sigma(D) \backslash
 \{ 0 \}}
 e^{-u a_j^2 \lambda^2} \tau_1(E_D(\lambda))
\\ &\leq&
\sum_{j}
\lVert e^{- u a_j^2 x^2}\lvert x \rvert ^p
\rVert_\infty \left(
\sum_{\lambda\in \sigma(D) \backslash
 \{ 0 \}}
 \lvert \lambda \rvert ^{-p} \tau_1(E_D(\lambda))\right)
\\
&= & \left( \sum_j a_j^{-p} \right)
\left(\frac{p}{2e u}\right)^\frac{p}{2}
\tau_1\left( \left\lvert D_{\ker D^\perp } \right\rvert
^{-p} \right)
 =: Y
\mbox{ ,}
\end{eqnarray*}
where $\lVert \cdot \rVert _\infty $ is the uniform norm. \newline
Sequences $n\in\Zz^\infty$ with $\lvert \SUPP(n) \rvert =2$:
\begin{eqnarray*}
\sum_{n\in \Zz^\infty,\lvert \SUPP(n) \rvert =2  } 
e^{-u \sum_j a_j^2 \lambda_{n_j}^2} &=& 
\sum_{i\neq j\in\naturalnumber}
\sum_{(x,y)\in \sigma(D)^2 \backslash
  (*,0)\cup (0,*)  }
 e^{-u(a_i^2 x^2 +a_j^2 y^2) } 
\tau_1\left( E_D(x)\right) \tau_1 \left( E_D(y) \right)\\
&\leq &
Y^2
\mbox{ .}
\end{eqnarray*}
With an induction argument, we obtain 
\[
\sum_{\lvert \SUPP(n)\rvert = i }
e^{-u\sum_j a_j^2\lambda_{n_j}^2} \leq Y^i \mbox{ .}\] Thus
\[\sum_{n\in \Zz ^\infty}
e^{-u\sum_j a_j^2\lambda_{n_j}^2 }\leq \sum_{i=0}^\infty 
Y^i \mbox{ .}\]
We conclude that $\tau(e^{-u\Dd^2})<\infty$ if
the geometric series
$
\sum _{i=0}^\infty Y^i $ converges.
In other words, when 
\[
Y:= \left( \sum_j a_j^{-p} \right)
\left( 
\left(\frac{p}{2eu}\right)^\frac{p}{2}
\tau_1 \left( \left\lvert D_{\ker D^\perp } 
\right\rvert^{-p} \right) \right)<1 \mbox{ ,}
\]
 or the $p$-norm of the reciprocal sequence
 $\left\lVert \left\{ \frac{1}{a_j}\right\}\right\rVert_p$ is
less than
 $\left(\sqrt{\frac{p}{2eu}}
\left\lVert 
 \left\lvert D_{\ker D^\perp } \right\rvert^{-1}
\right\rVert_p\right)^{-1}$.
The condition
\begin{eqnarray}
\label{e:thetacondition}
\left\lVert \left\{ \frac{1}{a_j} \right\}\right\rVert _p
 < \sqrt{ \frac{2eu}{p}}
\left\lVert 
\left\lvert D_{\ker D^\perp } \right\rvert
^{-1}
\right\rVert_p ^{-1} \end{eqnarray}
can be interpreted as that for any fixed $u$, any diverging
sequence with finite $p$-norm can be
rescaled to give a weakly $\theta$-summable $\Dd$.
Hence, as far as the rate of divergence is concerned,
there are plenty of sequences that give rise to 
a weakly $\theta$-summable $\Dd$.
In particular, when the system of graphs is the system of $d$-lattices,
so the edges corresponding to the $k$-th successive refinement have
weight $2^{dk}$. Then the sequence $\{\frac{1}{a_j}\}$ behaves
like the harmonic sequence,
as the $k$-th refinement adds about $2^{d(k-1)}$ edges
to the previous graph that has edges carrying weight $2^{d(k-1)}$.
Thus, $\{\frac{1}{a_j}\}$ has finite $p$-norm for any $p>1$.
And we can scale $\{\frac{1}{a_j}\}$ by an overall constant so that
its $p$-norm is less than 
$\sqrt{ \frac{2eu}{p}}
\left\lVert 
\left\lvert D_{\ker D^\perp } \right\rvert
^{-1}
\right\rVert_p ^{-1} $.

\bibliographystyle{plain}

\bibliography{thesis}
 
\end{document}